\numberwithin{equation}{section}
\numberwithin{equation}{section}
\numberwithin{figure}{section}
\theoremstyle{plain}
\newtheorem{thm}{\protect\theoremname}[section]
  \theoremstyle{remark}
  \newtheorem{rem}{\protect\remarkname}[section]
  \theoremstyle{plain}
  \newtheorem{lem}{\protect\lemmaname}[section]
  \theoremstyle{plain}
  \newtheorem{prop}{\protect\propositionname}[section]
\newcounter{hypA}
\newenvironment{condition}{\refstepcounter{hypA}\begin{itemize}
\item[({\bf H\arabic{hypA}})]}{\end{itemize}}
\providecommand{\lemmaname}{Lemma}
  \providecommand{\propositionname}{Proposition}
  \providecommand{\remarkname}{Remark}
\providecommand{\theoremname}{Theorem}
\begin{document}

\title{Linear Variance Bounds for Particle Approximations of Time-Homogeneous
Feynman-Kac Formulae}

\author{\textsc{NICK WHITELEY, NIKOLAS KANTAS \& AJAY JASRA}\\
 \emph{\small Department of Mathematics, University of Bristol, Bristol,
BS8 1TW, UK.}{\small }\\
{\small{} {} {} }\emph{\small Department of Electrical \& Electronic
Engineering, Imperial College London, London, SW7 2AZ, UK.}{\small }\\
{\small{} {} {} }\emph{\small Department of Statistics \& Applied Probability, National University of Singapore, Singapore, 117546, Sg.}}
\maketitle
\begin{abstract}
This article establishes sufficient conditions for a linear-in-time
bound on the non-asymptotic variance for particle approximations of
time-homogeneous Feynman-Kac formulae. These formulae appear in a
wide variety of applications including option pricing in finance
and risk sensitive control in engineering. In direct Monte Carlo
approximation of these formulae, the non-asymptotic variance typically
increases at an exponential rate in the time parameter. It is shown
that a linear bound holds when a non-negative kernel, defined by
the logarithmic potential function and Markov kernel which specify
the Feynman-Kac model, satisfies a type of multiplicative drift condition
and other regularity assumptions. Examples illustrate that these conditions
are general and flexible enough to accommodate two rather extreme
cases, which can occur in the context of a non-compact
state space: 1) when the potential function is bounded above, not
bounded below and the Markov kernel is not ergodic; and 2) when the
potential function is not bounded above, but the Markov kernel itself
satisfies a multiplicative drift condition.\\
 Keywords: Feynman-Kac Formulae; Non-Asymptotic Variance; Multiplicative
Drift Condition. 
\end{abstract}

\section{Introduction}

On a state space $\mathsf{X}$ endowed with a $\sigma$-algebra $\mathcal{B}\left(\mathsf{X}\right)$
let $M$ be a Markov kernel and let $U:\mathsf{X}\rightarrow\mathbb{R}$
be a logarithmic potential function. Then for $x\in\mathsf{X}$, consider
the sequence of measures $\left\{ \gamma_{n,x};n\geq1\right\} $ defined
by 
\begin{eqnarray}
\gamma_{n,x}\left(\varphi\right) & := & \mathbb{E}_{x}\left[\exp\left(\sum_{k=0}^{n-1}U\left(X_{k}\right)\right)\varphi\left(X_{n}\right)\right],\label{eq:FK_formula}
\end{eqnarray}
for a suitable test function $\varphi$ and where $\mathbb{E}_{x}$
denotes expectation with respect to the law of a Markov chain $\left\{ X_{n};n\geq0\right\} $
with transition kernel $M$, initialised from $X_{0}=x$. 

Feynman-Kac
formulae as in (\ref{eq:FK_formula}) arise in a variety of application
domains. In the case that $U$ is non-positive, the quantity $\gamma_{n,x}\left(1\right)$
can be interpreted as the probability of survival up to time step
$n$ of a Markovian particle exploring an absorbing medium \citep{smc:the:dMM03,smc:the:DMD04};
the particle evolves according to $M$ and at time step $k$ it is
killed with probability $1-\exp\left(U\left(X_{k}\right)\right)$. Another
application is the calculation of expectations at a terminal time
with respect to jump-diffusion processes which may or may not be partially
observed (e.g.~\citet{jasra1}). In particular, for option pricing
in finance, there are a variety of options, (e.g.~asian, barrier)
which can be written in the form \eqref{eq:FK_formula} where the
potential function arises from the pay-off function/change of measure
and the Markov kernel specifies finite dimensional marginals of some
partially observed L\'evy process (e.g.~\citet{jasra}). It is remarked
that in this latter example, the finite dimensional marginals can
induce a time-homogeneous Markov chain that is not necessarily ergodic.
Furthermore, functionals as in (\ref{eq:FK_formula}) arise in certain
stochastic control problems, where one considers the bivariate process
$\{X_{n}=(Y_{n},A_{n});n\geq0\}$ with $Y_{n}$ being a controlled
Markov chain and $\{A_{n};n\geq0\}$ a control input process. In some
cases the transition kernel $M$ can be expressed as $M_{1}(y_{n},da_{n})M_{2}(y_{n},a_{n},dy_{n+1})$
with $M_{1}$ corresponding to the control law or policy and $M_{2}$
to the controlled process dynamics. In a risk-sensitive optimal control
framework $\frac{1}{n}\log\gamma_{n,x}\left(1\right)$ arises as a
cost function one aims to minimise with respect to an appropriate
class of policies; see \citep{Whittle:RiskSens,DiMasiStettner:SICON:1999}
for details. In such problems it is common to choose $U(y,a)$ to
be unbounded from above, e.g. $U$ is usually chosen to be a quadratic
for linear and Gaussian state space models \citep{Whittle:RiskSens}.
More generally (\ref{eq:FK_formula}) arises as a special case of
a time-inhomogeneous Feynman-Kac formulae studied by \citet{smc:theory:Dm04}.

The non-negative kernel $Q\left(x,dy\right):=\exp\left(U(x)\right)M(x,dy)$,
defines a linear operator on functions $Q\left(\varphi\right)(x):=\int Q\left(x,dy\right)\varphi\left(y\right)$
and (\ref{eq:FK_formula}) can be rewritten as $\gamma_{n,x}\left(\varphi\right)=Q_{n}\left(\varphi\right)(x)$,
where $Q_{n}$ denotes the $n$-fold iterate of $Q$. In the applications
described above, the Feynman-Kac formulae (\ref{eq:FK_formula})
typically cannot be evaluated analytically. However, they may be approximated
using a system of interacting particles \citep{smc:theory:Dm04}.
These particle systems, also known as sequential Monte Carlo methods
in the computational statistics literature (e.g.~\citet{smc:meth:DDG2001}),
have themselves become an object of intensive study, see amongst others
\citep{smc:theory:CB08,smc:the:dMPR09,smc:the:vH09,smc:the:CRDM11,smc:the:DMJDJ11}
and references therein for recent developments in a variety of settings.

The present work is concerned with second moment properties of errors associated
with the particle approximations of $\left\{\gamma_{n,x}\right\}$. 
In order to obtain bounds on the relative variance, we control certain
tensor-product functionals of these particle approximations, recently
addressed by \citet{smc:the:CdMG11}, using stability properties of
the operators $\left\{ Q_{n};n\geq1\right\}$. These stability properties
are themselves derived from the multiplicative ergodic and spectral
theories of linear operators on weighted $\infty$-norm spaces due
to \citet{mc:theory:KM03,mc:the:KM05}; this is one of the main novelties
of the paper. By doing so we obtain a linear-in-$n$ relative variance
bound under assumptions on $Q$ which are weaker than those relied upon in
the literature to date and which readily hold on non-compact spaces.
Furthermore, to the knowledge of the authors, these are the first
results which establish 
\begin{itemize}
\item {that a linear-in-$n$ bound holds under conditions which can accommodate
$Q$ defined in terms of a non-ergodic Markov kernel $M$,} 
\item {that any form of non-asymptotic stability result for particle approximations
of Feynman Kac formulae holds under conditions which can accommodate
$U$ not bounded above.} 
\end{itemize}

\subsection{Interacting Particle Systems\label{sub:Particle-systems}}

Let $N\in\mathbb{N}$ be a population size parameter. For $n\in\mathbb{N}$,
let $\zeta_{n}^{(N)}:=\left\{ \mbox{\ensuremath{\zeta}}_{n}^{\left(N,i\right)};1\leq i\leq N\right\} $
be the $n$-th generation of the particle system, where each particle,
$\mbox{\mbox{\ensuremath{\zeta}}}_{n}^{\left(N,i\right)}$, is a random
variable valued in $\mathsf{X}$. Denote $\eta_{n}^{N}:=\dfrac{1}{N}\sum_{i=1}^{N}\delta_{\mbox{\ensuremath{\zeta}}_{n}^{\left(N,i\right)}}$.
The generations of the particle system $\left\{ \zeta_{n}^{(N)};n\geq0\right\} $
form a $\mathsf{X}^{N}$-valued Markov chain: for $x\in\mathsf{X}$,
the law of this chain is denoted by $\mathbb{P}_{x}^{N}$ and has
transitions given in integral form by: 
\begin{eqnarray}
\mathbb{P}_{x}^{N}\left(\zeta_{0}^{(N)}\in dy\right) & = & \prod_{i=1}^{N}\delta_{x}\left(dy^{i}\right),\nonumber \\
\nonumber \\
\mathbb{P}_{x}^{N}\left(\left.\zeta_{n}^{(N)}\in dy\right|\zeta_{n-1}^{(N)}\right) & = & \prod_{i=1}^{N}\left(\frac{\eta_{n-1}^{N}Q(dy^{i})}{\eta_{n-1}^{N}Q(1)}\right),\quad n\geq1,\label{eq:particle_transitions}
\end{eqnarray}
 where $dy=d\left(y^{1},\ldots y^{N}\right)$, $1$ is the unit function
and for some test function $\varphi$, $\eta_{n}^{N}\left(\varphi\right):=\dfrac{1}{N}\sum_{i=1}^{N}\varphi\left(\mbox{\ensuremath{\zeta}}_{n}^{\left(N,i\right)}\right)$
(here the dependence of $\eta_{n}^{N}$ on $x$ is suppressed from
the notation). These transition probabilities correspond to a simple
selection-mutation operation: at each time step $N$ particles are
selected with replacement from the population, on the basis of {}``fitness''
defined in terms of $e^{U}$, followed by each particle mutating
in a conditionally-independent manner according to $M$.

The empirical measures $\left\{ \gamma_{n,x}^{N};n\geq0\right\} $,
defined by 
\[
\gamma_{n,x}^{N}\left(\varphi\right):=\prod_{k=0}^{n-1}\eta_{k}^{N}\left(e^{U}\right)\eta_{n}^{N}\left(\varphi\right),\quad n\geq1,
\]
 and $\gamma_{0,x}^{N}:=\delta_{x},$ are taken as approximations
of $\left\{ \gamma_{n,x}\right\} $. It is well known \citep[Chapter 9]{smc:theory:Dm04} that \[\mathbb{E}_{x}^{N}\left[\gamma_{n,x}^{N}\left(\varphi\right)\right]=\gamma_{n,x}\left(\varphi\right),\]
where $\mathbb{E}_{x}^{N}$ denotes
expectation with respect to the law of the $N$-particle system.

\subsection{Standard Regularity Assumptions for Stability}

\label{sec:standard_reg_assumptions}

Recent work on analysis of tensor product functionals associated with
$\left\{ \gamma_{n,x}^{N};n\geq0\right\} $, \citep{smc:the:dMPR09},
has lead to important results regarding higher moments of the error
associated with these particle approximations; in a possibly time-inhomogeneous
context \citet{smc:the:CdMG11} have proved a remarkable linear-in-$n$
bound on the relative variance of $\gamma_{n,x}^{N}(1)$. In the context
of time-homogeneous Feynman-Kac models, the assumptions of \citet{smc:the:CdMG11}
are that 
\begin{eqnarray}
\sup_{x\in\mathsf{X}}U(x) & < & \infty\label{eq:U_bounded}
\end{eqnarray}
 and that for some $m_{0}\geq1$, there exists a finite constant $c$
such that 
\begin{eqnarray}
Q_{m_{0}}\left(x,dy\right) & \leq & cQ_{m_{0}}\left(x',dy\right),\quad\forall\left(x,x'\right)\in\mathsf{X}^{2}.\label{eq:DV_condition}
\end{eqnarray}
 The result of \citet{smc:the:CdMG11} is then of the form: 
\begin{align}
N> c\left(n+1\right)& &\Longrightarrow& &\mathbb{E}_{x}^{N}\left[\left(\frac{\gamma_{n,x}^{N}(1)}{\gamma_{n,x}(1)}-1\right)^{2}\right] \leq  c\frac{4}{N}\left(n+1\right),\quad\forall x\in\mathsf{X}.\label{eq:Cerou_intro_result}
\end{align}
where $c$ is as in \eqref{eq:DV_condition}. The efficiency of the
particle approximation is therefore quite remarkable: a natural alternative
scheme for estimation of $\gamma_{n,x}(1)$ is to simulate $N$ independent
copies of the Markov chain with transition $M$ and approximate the
expectation in \eqref{eq:FK_formula} by simple averaging, but the
relative variance in that case typically explodes \emph{exponentially}
in $n$. The restriction is that (\ref{eq:DV_condition}) rarely holds
on non-compact spaces. The present work is concerned with proving
a result of the same form as \eqref{eq:Cerou_intro_result} under
assumptions which are more readily verifiable when $\mathsf{X}$ is
non-compact. The main result is summarized after the following discussion
of (\ref{eq:U_bounded})-(\ref{eq:DV_condition}) and how they relate
to the assumptions we consider.

The condition of (\ref{eq:DV_condition}) and its variants are very
common in the literature on exponential stability of nonlinear filters and their particle approximations, see for example \citep{smc:the:DMG01,smc:the:LGO04}
and references therein. It can be interpreted as implying a uniform
bound on the relative oscillations of the total mass of $Q_{m_{0}}$,
i.e., 
\begin{equation}
\frac{Q_{m_{0}}\left(1\right)\left(x\right)}{Q_{m_{0}}\left(1\right)\left(x'\right)}\leq c,\quad\forall\left(x,x'\right)\in\mathsf{X}^{2},\label{eq:Q_osc_bound}
\end{equation}
 and this is very useful when controlling various functionals which
arise when analysing the relative variance as in (\ref{eq:Cerou_intro_result}),
(see \citealt[Proof of Theorem 5.1]{smc:the:CdMG11}). However one
may take the interpretation of (\ref{eq:DV_condition}) in another
direction: it implies immediately that there exist finite measures,
say $\beta$ and $\nu$, and $\epsilon>0$ such that 
\begin{equation}
Q_{m_{0}}\left(x,dy\right)\leq\beta\left(dy\right),\quad Q_{m_{0}}\left(x,dy\right)\geq\epsilon\nu\left(dy\right),\quad\forall x\in\mathsf{X}.\label{eq:minor_strong}
\end{equation}
 In the case that $U=0$ (i.e $Q=M$ is a probabilistic kernel) and
$M$ is $\psi$-irreducible and aperiodic, this type of minorization
over the entire state space $\mathsf{X}$ implies uniform ergodicity
of $Q$, which is in turn equivalent to $Q$ satisfying a Foster-Lyapunov
drift condition with a bounded drift function \citep[Theorem 16.2.2]{mc:theory:MT09}.
In the scenario of present interest, where in general $U\neq0$, one
may take $V:\mathsf{X}\rightarrow[1,\infty)$ to be defined by $V(x)=1$,
for all $x$, and then when (\ref{eq:U_bounded}) holds, it is trivially
true that there exists $\delta\in\left(0,1\right)$ and $b<\infty$
such that $Q$ satisfies the \emph{multiplicative} drift condition,
\begin{eqnarray}
Q\left(e^{V}\right) & \leq & e^{V(1-\delta)+b\mathbb{I}_{\mathsf{X}}},\label{eq:drift_bounded}
\end{eqnarray}
 where $\mathbb{I}_{\mathsf{X}}$ is the indicator function on $\mathsf{X}$.
$Q$ may then also be viewed as a bounded linear operator on the space
of real-valued and bounded functions on $\mathsf{X}$ endowed with
the $\infty$-norm, which is norm-equivalent \citep[in the sense of][p.393]{mc:theory:MT09}
to $\left\Vert \varphi\right\Vert _{e^{V}}:=\sup_{x\in\mathsf{X}}\dfrac{\left|\varphi(x)\right|}{\exp V(x)}$,
with $V$ any bounded weighting function.

As explained in the next section, the interest in writing (\ref{eq:minor_strong})-(\ref{eq:drift_bounded})
is that conditions expressed in this manner have natural
generalisations in the context of weighted $\infty$-norm function
spaces with possibly unbounded $V$.

\subsection{Setting and Main Result}

\citet[(e.g. Chapter 4 and Section 12.4)]{smc:theory:Dm04} and \citet{smc:the:DMD04}
address the setting in which $\left\{ Q_{n};n\geq1\right\} $ is considered
as a semigroup of bounded linear operators on the Banach space of
real-valued and bounded functions on $\mathsf{X}$, endowed with the
$\infty$-norm, and \citet{smc:the:dMM03} address the $L_{2}$ setting,
connecting stability properties of the measures $\left\{ \gamma_{n,x}\right\} $
and their normalized counterparts to the spectral theory of bounded
linear operators on Banach spaces.

\citet{mc:theory:KM03,mc:the:KM05} have developed multiplicative
ergodic and spectral theories of operators of the form $Q$ in the
setting of \emph{weighted} $\infty$-norm spaces; a function space
setting which has already proved to be very fruitful for the study
of general state-space Markov chains \citep[Chapter 16]{mc:theory:MT09}
without reversibility assumptions. The reader is referred to \citep{mc:theory:KM03,mc:the:KM05}
for extensive historical perspective on this spectral theory and related
topics, including (of particular relevance in the present context)
the theory of non-negative operators due to \citet[Chapter 5]{mc:the:N04}.
The work of \citep{mc:theory:KM03,mc:the:KM05,mc:the:M06} is geared
towards large deviation theory for sample path ergodic averages $n^{-1}\sum_{k=0}^{n-1}U(X_{k})$
under the transition $M$ and in that context it is natural to state
assumptions on $M$ and $U$ separately. By contrast, when studying
the particle systems described above, we are not directly concerned
with such sample paths, but rather the relationship between the properties
of the particle approximations $\left\{ \gamma_{n,x}^{N}\right\} $
and their exact counterparts $\left\{ \gamma_{n,x}\right\} $. Some
of the results of \citet{mc:theory:KM03,mc:the:KM05} will be applied
to this effect, but starting from assumptions expressed directly in
terms of $Q$ which reflect the scenario of interest.

The core assumptions in the present work (see Section \ref{sub:Multiplicative-ergodic-theorem}
for precise statements) are that for some constants $m_{0}\geq1$,
$\delta\in(0,1)$ and all $d\geq1$ large enough, 
\begin{eqnarray}
Q_{m_{0}}\left(x,dy\right) & \geq & \epsilon_{d}\nu_{d}\left(dy\right),\quad\forall x\in C_{d},\label{eq:intro_minor}\\
Q\left(e^{V}\right) & \leq & e^{V(1-\delta)+b_{d}\mathbb{I}_{C_{d}}},\label{eq:intro_drift}
\end{eqnarray}
 with $V$ unbounded and $C_{d}:=\left\{ x:V(x)\leq d\right\} \subset\mathsf{X}$
a sublevel set. It is noted that one recovers the minorization and
drift of (\ref{eq:minor_strong})-(\ref{eq:drift_bounded}) in the
case that $V$ is bounded and $C_{d}=\mathsf{X}$. We will also invoke a density assumption which is weaker than the upper bound in (\ref{eq:minor_strong}).  It will be illustrated
through examples in Section \ref{sec:Examples} that (\ref{eq:intro_minor})-(\ref{eq:intro_drift})
can be satisfied in circumstances which allow $M$ to be non-ergodic.
Furthermore, it will also be demonstrated that, in contrast to (\ref{eq:U_bounded}),
conditions (\ref{eq:intro_minor})-(\ref{eq:intro_drift}) can be
satisfied with $U$ not bounded above, subject to strong enough assumptions
on $M$ and a restriction on the growth rate of the positive part
of $U$.

The main result obtained in the present work (Theorem \ref{thm:var_bound}
in Section \ref{sec:Non-asymptotic-variance}) is a bound of the form:
\begin{align*}
N> c_{1}\left(n+1\right)\geq \phi(x)& &\Longrightarrow& &\mathbb{E}_{x}^{N}\left[\left(\frac{\gamma_{n,x}^{N}\left(1\right)}{\gamma_{n,x}(1)}-1\right)^{2}\right]\leq c_{2}\frac{4}{N}\left(n+1\right)\frac{v^{2+\epsilon}(x)}{h_{0}^{2}(x)},
\end{align*}
with
\begin{equation*}
\phi(x):=c_1\left(\left\lceil\frac{1}{B_1}\log\left[B_0^2\frac{v(x)}{h_0(x)}\right]\right\rceil+1\right),
\end{equation*}
where $v(x)=e^{V(x)}$, $B_0$, $B_1$, $c_{1}$, $c_{2}$ are constants which are independent of $N$, $n$ and $x$ and for a real number $a$ we denote as $\left\lceil a \right\rceil$ the smallest integer $j$ such that $j\geq a$.
In this display $h_{0}$ is the eigenfunction associated with the principal eigenvalue of $Q$
and the constant $B_1$ is directly related to the size of the spectral gap of $Q$. 
Verification of the existence of $h_0$ along with various other spectral
quantities plays a central role in the proofs.

We note that \citet{smc:the:DMD04,smc:the:CdMG11} also consider the
case in which $\exp U(x)$ may touch zero and the former are also
directly concerned with approximation of the eigenvalue $\lambda$
corresponding to $h_{0}$ via the empirical probability measures $\left\{ \eta_{n}^{N}\right\} $.
These issues are beyond the scope of the present article but the study of these and related issues in a more general time-inhomogeneous
setting is underway. It is also remarked that \citet{smc:the:CdMG11}
consider a more general type of particle system, which involves an
accept/reject evolution mechanism. The approach taken here is also
applicable in that context, but for simplicity of presentation we
only consider the selection-mutation transition in \eqref{eq:particle_transitions}.

The remainder of the paper is structured as follows. Section \ref{sec:Multiplicative-ergodicity}
is largely expository: it introduces various spectral definitions
and the main assumptions of the present work and goes on to show how
these assumptions validate the application of multiplicative ergodicity
results of \citet{mc:the:KM05}. It is stressed that much of the content
of this section is included in order to make clear the similarities
and differences between the setting of interest and the main stated
assumptions and results of \citet{mc:the:KM05}. Section \ref{sec:Non-asymptotic-variance}
deals with the variance bounds for the particle approximations. Numerical
examples are given in Section \ref{sec:Examples}. Many of the proofs
of the results in Section \ref{sec:Multiplicative-ergodicity} are
in Appendix \ref{app:MET_proofs}. Some proofs and lemmas for the
results in Section \ref{sec:Non-asymptotic-variance} can be found
in Appendix \ref{app:variance}.

\section{Multiplicative Ergodicity\label{sec:Multiplicative-ergodicity} }

\subsection{Notations and Conventions}

Let $\mathsf{X}$ be a state space and $\mathcal{B}(\mathsf{X})$
be an associated countably generated $\sigma$-algebra. We are typically
interested in the case $\mathsf{X}=\mathbb{R}^{d_{x}}$, $d_{x}\geq1$,
but our results are readily applicable in the context of more general
non-compact state-spaces. For a weighting function $v:\mathsf{X}\rightarrow[1,\infty)$,
and $\varphi$ a measurable real-valued function on $\mathsf{X}$,
define the norm $\left\Vert \varphi\right\Vert _{v}:=\sup_{x\in\mathsf{X}}\left|\varphi(x)\right|/v(x)$
and let $\mathcal{L}_{v}:=\left\{ \varphi:\mathsf{X}\rightarrow\mathbb{R};\left\Vert \varphi\right\Vert _{v}<\infty\right\} $
be the corresponding Banach space. Throughout, when dealing with weighting
functions we employ an lower/upper-case convention for exponentiation
and write interchangeably $v\equiv e^{V}$. 

For $K$ a kernel on $\mathsf{X}\times\mathcal{B}\left(\mathsf{X}\right)$,
a function $\varphi$ and a measure $\mu$ denote $\mu(\varphi):=\int\varphi(x)\mu(dx)$,
$K\varphi(x):=\int K(x,dy)\varphi(y)$ and $\mu K(\cdot):=\int\mu(dx)K(x,\cdot)$.
Let $\mathcal{P}$ be the collection of probability measures on $\left(\mathsf{X},\mathcal{B}(\mathsf{X})\right)$,
and for a given weighting function $v:\mathsf{X}\rightarrow[1,\infty)$
let $\mathcal{P}_{v}$ denote the subset of such measures $\mu$ such
that $\mu(v)<\infty$. For $n\geq0$ the $n$-fold
iterate of $K$ is denoted: 
\begin{equation*}
K_{0}:=Id,\quad\quad K_{n}  :=  \underbrace{K\ldots K}_{n \text{ times}},\quad n\geq1.
\end{equation*}

The induced operator norm of a linear operator $K$ acting $\mathcal{L}_{v}\rightarrow\mathcal{L}_{v}$
is 
\[
\interleave K\interleave_{v}:=\sup\left\{ \frac{\left\Vert K\varphi\right\Vert _{v}}{\left\Vert \varphi\right\Vert _{v}};\varphi\in\mathcal{L}_{v},\left\Vert \varphi\right\Vert _{v}\neq0\right\} =\sup\left\{ \left\Vert K\varphi\right\Vert _{v};\varphi\in\mathcal{L}_{v},\left|\varphi\right|\leq v\right\} .
\]
The spectrum of $K$ as an operator on $\mathcal{L}_{v}$, denoted
by $\mathcal{S}_{v}(K)$, is the set of complex $z$ such that $\left[Iz-K\right]^{-1}$
does not exist as a bounded linear operator on $\mathcal{L}_{v}$.
The corresponding spectral radius of $K$, denoted by $\xi_{v}(K)$,
is given by 
\[
\xi_{v}(K):=\sup\left\{ \left|z\right|;z\in\mathcal{S}_{v}(K)\right\} =\lim_{n\rightarrow\infty}\interleave K_{n}\interleave_{v}^{1/n},
\]
 where the limit always exists by subadditive arguments, but may be
infinite. The following definitions are from \citet{mc:the:KM05}. 
\begin{itemize}
\item A pole $z_{0}\in\mathcal{S}_{v}(K)$ is \emph{of finite multiplicity}
$n$ if

\begin{itemize}
\item for some $\epsilon_{1}>0$ we have $\left\{ z\in\mathcal{S}_{v}(K);\left|z-z_{0}\right|\leq\epsilon_{1}\right\} =\left\{ z_{0}\right\}$, 
\item and the associated projection operator 
\[
J:=\frac{1}{2\pi i}\int_{\partial\left\{ z:\left|z-z_{0}\right|\leq\epsilon_{1}\right\} }\left[Iz-K\right]^{-1}\mathrm{d}z,
\]
 can be expressed as a finite linear combination of some $\left\{ s_{i}\right\} \subset\mathcal{L}_{v}$
and $\left\{ \nu_{i}\right\} \subset\mathcal{P}_{v}$, 
\[
J=\sum_{i,j=0}^{n-1}m_{i,j}\left[s_{i}\otimes\nu_{j}\right],
\]
 where $\left[s_{i}\otimes\nu_{j}\right](x,dy)=s_{i}(x)\nu_{j}(dy).$ 
\end{itemize}
\item $K$ admits a \emph{spectral gap }in $\mathcal{L}_{v}$ if there exists
$\epsilon_{0}>0$ such that $\mathcal{S}_{v}\left(K\right)\cap\left\{ z:\left|z\right|\geq\xi_{v}\left(K\right)-\epsilon_{0}\right\} $
is finite and contains only poles of finite multiplicity. 
\item $K$ is $v$\emph{-uniform }if it admits a spectral gap and there
exists a unique pole $\lambda\in\mathcal{S}_{v}\left(K\right)$ of
multiplicity 1, satisfying $\left|\lambda\right|=\xi_{v}\left(K\right)$. 
\item $K$ has a \emph{discrete spectrum} if for any compact set $B\subset\mathbb{C}\setminus\left\{ 0\right\} $,
$\mathcal{S}_{v}\left(K\right)\cap B$ is finite and contains only
poles of finite multiplicity. 
\item $K$ is \emph{$v$-separable }if for any $\epsilon>0$ there exists
a finite rank operator $\widehat{K}^{(\epsilon)}$ such that $\interleave K-\widehat{K}^{(\epsilon)}\interleave_{v}\leq\epsilon$ 
\end{itemize}

\subsection{Multiplicative Ergodic Theorem\label{sub:Multiplicative-ergodic-theorem}}

In this section we present the main assumptions and state some results
from \citet{mc:the:KM05} (see also \citet{mc:theory:KM03}).

\subsubsection{Assumptions}

\begin{condition} \label{hyp:irreducible}The semigroup $\left\{ Q_{n};n\geq1\right\} $
is $\psi$-irreducible and aperiodic (see \citet[Section 2.1]{mc:the:M06}).
\end{condition} \begin{condition} \label{hyp:drift}There exists
an unbounded $V:\mathsf{X}\rightarrow[1,\infty)$, constants $m_{0}\geq1$,
$\delta\in(0,1)$ and $\underline{d}\geq1$ with the following properties:

For each $d\geq\underline{d}$ and $C_d:=\{x\in\mathsf{X};V(x)\leq d\}$, 

\quad $\bullet$ there exists $\epsilon_{d}\in (0,1]$ and
$\nu_{d}\in\mathcal{P}_{v}$ such that $C_{d}$
is $\left(m_{0},\epsilon_{d},\nu_{d}\right)$-small for $Q$, i.e.,
\begin{eqnarray}
Q_{m_{0}}(x,\cdot) & \geq & \mathbb{I}_{C_{d}}(x)\epsilon_{d}\nu_{d}(\cdot),\quad\forall x\in\mathsf{X},\label{eq:A_minor}
\end{eqnarray}
\quad \quad with $\nu_{d}\left(C_{d}\right)>0$. Furthermore $Q_{m_{0}}\left(C_{d}\right)(x)>0$
for all $x\in\mathsf{X}$. 

\quad $\bullet$ there exists $b_{d}<\infty$ such
that the following multiplicative drift condition holds, 
\begin{eqnarray}
Q\left(e^{V}\right) & \leq & e^{V\left(1-\delta\right)+b_{d}\mathbb{I}_{C_{d}}}.\label{eq:A_drift}
\end{eqnarray}
 \end{condition} \begin{condition} \label{hyp:U+_in_V} $U:\mathsf{X}\rightarrow\mathbb{R}$
is such that 
\begin{eqnarray*}
U^{+} & := & \max\left(U,0\right)\in\mathcal{L}_{V}.
\end{eqnarray*}
 \end{condition} \begin{condition} \label{hyp:density}There exists
$t_{0}\geq1$ and for each $d\geq\underline{d}$ there exists a measure
$\beta_{d}$, such that $\beta_{d}\left(e^{V}\right)<\infty$ and
\begin{eqnarray*}
\mathbb{P}_{x}\left(X_{t_{0}}\in A,\tau_{C_{d}^{c}}>t_{0}\right) & \leq & \beta_{d}\left(A\right),\quad x\in C_{d},A\in\mathcal{B}\mathsf{\left(X\right)},
\end{eqnarray*}
 where $\mathbb{P}_{x}$ denotes the law of the Markov chain $\{X_{n}\}$
with transition $M$ and $\tau_{A}:=\inf\left\{ n\geq1:X_{n}\in A\right\} $.\end{condition}
\begin{rem} We take care to emphasize the following differences and
similarities between the above assumptions and the setting of \citet{mc:the:KM05}.\end{rem} 
\begin{itemize}
\item Assumption (H\ref{hyp:drift}) equation \eqref{eq:A_drift} applies
directly to the $Q$ kernel, whereas \citet{mc:the:KM05} impose a
multiplicative drift condition on $M$. The key issue is that the
multiplicative drift condition for $Q$ is the essential and implicit
ingredient of Lemma B.4 of \citet{mc:the:KM05}, and as we shall see
in Section \ref{sec:Examples}, under the conditions that $U$ is
bounded above but not bounded below, assumption (H\ref{hyp:drift})
can hold without geometric drift assumptions on $M$. A related phenomenon
is considered by \citet{mc:the:M06} in order to obtain {}``one-sided''
large deviation principles for ergodic sample-path averages for the
chain with transition $M$. 
\item Assumption (H\ref{hyp:drift}) requires the sublevel sets of $V$
to be small for $Q$ and this is exploited in Lemma \ref{lem:P_check_M-drift}.
The explicit $m_{0}$-step minorisation condition makes it easy
to bound below the spectral radius of $Q$, see Lemma \ref{lem:spr_bounded_below}.
In the setting of \citet{mc:theory:KM03} the spectral radius of $Q$ is bounded below
by $1$ as $U$ is assumed centered with respect to the invariant
probability distribution for $M$. In the present context, this centering
assumption is unnatural, especially as we want to consider some situations
where such an invariant probability does not exist. 
\item Assumption (H\ref{hyp:U+_in_V}) is weaker than the corresponding
assumption in the statement of \citep[ Theorem 3.1]{mc:the:KM05}.
However, (H\ref{hyp:U+_in_V}) coincides with the first part of \citep[Equation 73]{mc:the:KM05},
which combined with (H\ref{hyp:irreducible}), (H\ref{hyp:drift})
and (H\ref{hyp:density}) in Lemma \ref{lem:Q_sep} below, is enough
to prove that $Q$ has a discrete spectrum in $\mathcal{L}_{v}$. 
\item As shown in \citep[Theorem 3.4]{mc:the:KM05} and \citep{mc:theory:KM03},
a$ $ MET can be proved without (H\ref{hyp:density}), but at the
cost of restrictions on the class of functions to which $U$ belongs
which are a little unwieldy. 
\end{itemize}

\subsubsection{Results}

We now give a collection of results which are used to prove the MET, Theorem \ref{thm:MET}.  
The proofs are given in Appendix \ref{app:MET_proofs}. It is remarked
that the steps in the proof of Theorem \ref{thm:MET} are effectively the
same as part of the proof of Theorem 3.1 of \citet{mc:the:KM05}, however, our starting assumptions
are stated differently.

The following preparatory lemma establishes that the Feynman-Kac formula
(\ref{eq:FK_formula}) is well defined and presents bounds on the
spectral radius of $Q$. \begin{lem} \label{lem:spr_bounded_below}
Assume (H\ref{hyp:drift}). Then for all $x\in\mathsf{X},\; n\geq1,\;\varphi\in\mathcal{L}_{v}$,
\begin{eqnarray}
\left|\gamma_{n,x}\left(\varphi\right)\right|<\infty,\label{eq:FK_well-defined}
\end{eqnarray}
 and for all $d\geq\underline{d}$, 
\begin{eqnarray}
\epsilon_{d}\nu_{d}\left(C_{d}\right)\leq\xi_{v}\left(Q\right)<\infty,\label{eq:spr_lowerbound}
\end{eqnarray}
 where $\underline{d}$ is as in (H\ref{hyp:drift}).\end{lem}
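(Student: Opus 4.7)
The plan is to obtain both claims directly by iterating the two ingredients of (H\ref{hyp:drift}). For the upper bound on $\xi_{v}(Q)$ and for \eqref{eq:FK_well-defined}, I fix any $d\geq\underline{d}$. The multiplicative drift \eqref{eq:A_drift} implies $Q(e^{V})(x)\leq e^{V(x)(1-\delta)+b_{d}}$, and combining with $V\geq 1$ (so $-\delta V\leq-\delta$) gives the pointwise bound $Q(v)\leq\rho\,v$ with $\rho:=e^{b_{d}-\delta}<\infty$. Iterating, $Q_{n}(v)\leq\rho^{n}v$, so any $\varphi\in\mathcal{L}_{v}$ satisfies $|Q_{n}\varphi|\leq\|\varphi\|_{v}\rho^{n}v$. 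Since $\gamma_{n,x}(\varphi)=Q_{n}\varphi(x)$ this yields \eqref{eq:FK_well-defined}, and simultaneously $\interleave Q_{n}\interleave_{v}\leq\rho^{n}$, hence $\xi_{v}(Q)=\lim_{n}\interleave Q_{n}\interleave_{v}^{1/n}\leq\rho<\infty$.

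For the lower bound $\xi_{v}(Q)\geq\epsilon_{d}\nu_{d}(C_{d})$, I iterate the minorisation \eqref{eq:A_minor} on $C_{d}$. It gives $Q_{m_{0}}\mathbb{I}_{C_{d}}(x)\geq\epsilon_{d}\nu_{d}(C_{d})$ for all $x\in C_{d}$; integrating against $\nu_{d}$ and using $\nu_{d}(C_{d})>0$ shows that $\nu_{d}(Q_{m_{0}}\mathbb{I}_{C_{d}})\geq\epsilon_{d}\nu_{d}(C_{d})^{2}$, and a straightforward induction produces $Q_{km_{0}}\mathbb{I}_{C_{d}}(x)\geq(\epsilon_{d}\nu_{d}(C_{d}))^{k}$ for every $k\geq 1$ and $x\in C_{d}$. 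Picking any $x_{0}\in C_{d}$ (non-empty because $\nu_{d}(C_{d})>0$), we have $v(x_{0})\leq e^{d}$, and since $|\mathbb{I}_{C_{d}}|\leq v$ the definition of the operator norm gives
\begin{equation*}
\interleave Q_{km_{0}}\interleave_{v}\;\geq\;\left\Vert Q_{km_{0}}\mathbb{I}_{C_{d}}\right\Vert _{v}\;\geq\;\frac{Q_{km_{0}}\mathbb{I}_{C_{d}}(x_{0})}{v(x_{0})}\;\geq\;\frac{(\epsilon_{d}\nu_{d}(C_{d}))^{k}}{e^{d}}.
\end{equation*}
Applying the spectral radius formula along the subsequence $n=km_{0}$,
\begin{equation*}
\xi_{v}(Q)\;=\;\lim_{k\to\infty}\interleave Q_{km_{0}}\interleave_{v}^{1/(km_{0})}\;\geq\;(\epsilon_{d}\nu_{d}(C_{d}))^{1/m_{0}}.
\end{equation*}
Finally, $\epsilon_{d}\in(0,1]$ and $\nu_{d}(C_{d})\leq 1$ imply $\epsilon_{d}\nu_{d}(C_{d})\leq 1$, so raising to the power $1/m_{0}\leq 1$ only strengthens the inequality and we conclude $\xi_{v}(Q)\geq\epsilon_{d}\nu_{d}(C_{d})$.

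The only step requiring mild care is the induction for the $k$-step lower bound: one must use the auxiliary clause $\nu_{d}(C_{d})>0$ in (H\ref{hyp:drift}) to retain a strictly positive factor $\nu_{d}(C_{d})$ at each iteration. Everything else is routine iteration of the drift and the minorisation, with $V\geq 1$ supplying the contractive factor $e^{-\delta}$ needed to absorb $(1-\delta)V$ back into $V$.
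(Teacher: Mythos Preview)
Your proof is correct and follows essentially the same strategy as the paper: the drift condition gives $\interleave Q\interleave_{v}<\infty$ and hence \eqref{eq:FK_well-defined} and $\xi_{v}(Q)<\infty$, while iterating the minorisation on $C_{d}$ and passing to the limit along the subsequence $n=km_{0}$ in the spectral radius formula gives the lower bound.

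The only noteworthy difference is in the bookkeeping for the lower bound. The paper carries along the test function $e^{V}$ and the factor $\nu_{d}(e^{V})$, proving an inequality of the form $Q_{km_{0}}(e^{V})(x)\geq Q_{(k-j)m_{0}}(\mathbb{I}_{C_{d}})(x)\,\epsilon_{d}^{j}\nu_{d}(C_{d})^{j-1}\nu_{d}(e^{V})$ valid for all $x\in\mathsf{X}$, which in the last step uses the clause $Q_{m_{0}}(C_{d})(x)>0$ for all $x$. Your induction is a bit more direct: you work with $\mathbb{I}_{C_{d}}$ throughout, restrict to a single point $x_{0}\in C_{d}$, and never invoke the $Q_{m_{0}}(C_{d})(x)>0$ clause. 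Both routes land at $\xi_{v}(Q)\geq(\epsilon_{d}\nu_{d}(C_{d}))^{1/m_{0}}$; you then make explicit the observation that $\epsilon_{d}\nu_{d}(C_{d})\leq 1$ allows one to drop the exponent $1/m_{0}$, a step the paper leaves implicit.
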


To clarify how assumptions (H\ref{hyp:irreducible})-(H\ref{hyp:density})
connect with the results of \citet{mc:the:KM05} we next present a
lemma regarding the $v$-separability of $Q$ which is a stepping
stone to the MET. Observe that the multiplicative drift condition
(H\ref{hyp:drift}) implies that $Q$ can be approximated in norm
to arbitrary precision by truncation to the sublevel sets of $V$,
in the sense that for any $r\geq \underline{d}$, 
\begin{eqnarray}
\mathbb{I}_{C_{r}^{c}}Q\left(e^{V}\right) & \leq & e^{V-\delta r},\label{eq:approx_Q}
\end{eqnarray}
 and then with $\widehat{Q}^{(r)}:=\mathbb{I}_{C_{r}}Q$, it follows immediately
that $\interleave Q-\widehat{Q}^{(r)}\interleave_{v}\leq e^{-\delta r}$.
In the following lemma, which combines \citep[Lemmata B.3-B.5]{mc:the:KM05}
and is included here for completeness, the density assumption (H\ref{hyp:density})
plays a key role in establishing that iterates of this truncation
of $Q$ can be approximated by a finite rank kernel.  \begin{lem}
\label{lem:Q_sep}Assume (H\ref{hyp:irreducible})-(H\ref{hyp:density}).
Then $Q_{2t_{0}+2}$ is $v$-separable, where $t_{0}$ is as in (H\ref{hyp:density}).\end{lem}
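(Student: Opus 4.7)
My plan is to combine the truncation bound from the drift (H\ref{hyp:drift}) with a compactness argument powered by (H\ref{hyp:U+_in_V}) and (H\ref{hyp:density}), ultimately exhibiting $\widehat{Q}^{(r)}_{2t_0+2}$ as a norm-limit of finite-rank operators in $\mathcal{L}_v$.

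First, I would extend the one-step truncation bound $\interleave Q - \widehat{Q}^{(r)}\interleave_v \leq e^{-\delta r}$ from \eqref{eq:approx_Q} to iterates. Using $\interleave Q\interleave_v < \infty$ (from (H\ref{hyp:drift}) via Lemma \ref{lem:spr_bounded_below}) and the telescoping identity
$$Q_n - \widehat{Q}^{(r)}_n = \sum_{k=0}^{n-1} Q_{n-1-k}\,(Q-\widehat{Q}^{(r)})\,\widehat{Q}^{(r)}_k,$$
one obtains $\interleave Q_{2t_0+2} - \widehat{Q}^{(r)}_{2t_0+2}\interleave_v \leq C(t_0)\,e^{-\delta r}$. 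Thus it suffices, for each large $r$, to show that $\widehat{Q}^{(r)}_{2t_0+2}$ admits arbitrarily good finite-rank approximations in $\interleave\cdot\interleave_v$.

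Second, I would dominate $\widehat{Q}^{(r)}_{t_0+1}$ uniformly in $x$. Writing
$$\widehat{Q}^{(r)}_{t_0+1}\varphi(x) = \mathbb{I}_{C_r}(x)\,\mathbb{E}_x\Bigl[\prod_{k=0}^{t_0}\mathbb{I}_{C_r}(X_k)\,e^{U(X_k)}\,\varphi(X_{t_0+1})\Bigr],$$
on the enforced event $\{X_0,\ldots,X_{t_0}\in C_r\}$ one has $V(X_k)\leq r$, hence $U(X_k)\leq\|U^+\|_V\,r$ by (H\ref{hyp:U+_in_V}), bounding the potential product by $\kappa_r := \exp\bigl((t_0+1)\|U^+\|_V\,r\bigr)$. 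Applying the Markov property at time $t_0$ and invoking (H\ref{hyp:density}) then gives
$$\widehat{Q}^{(r)}_{t_0+1}(x,dy) \leq \kappa_r\,\mathbb{I}_{C_r}(x)\,\alpha_r(dy),$$
with $\alpha_r := \beta_r M$ a finite measure (since $\beta_r(\mathsf{X}) \leq \beta_r(e^V)/e < \infty$ under (H\ref{hyp:density})). Composing once more, $\widehat{Q}^{(r)}_{2t_0+2}(x,dy) \leq \kappa_r \mathbb{I}_{C_r}(x)\,\tilde\alpha_r(dy)$ with $\tilde\alpha_r := \alpha_r\,\widehat{Q}^{(r)}_{t_0+1}$. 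Because the second $\widehat{Q}^{(r)}_{t_0+1}$-step restricts the integration to $z\in C_r$, on which $v\leq e^r$, one finds $\tilde\alpha_r(v) \leq \alpha_r(\mathsf{X})\,\interleave Q\interleave_v^{t_0+1}\,e^r < \infty$. This is the essential reason the lemma is stated for $2t_0+2$ rather than $t_0+1$: the second iteration absorbs the potentially unbounded $Mv$ into a $Q$-step controlled by the drift.

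Finally, I would approximate $\widehat{Q}^{(r)}_{2t_0+2}$ by finite-rank kernels in $\interleave\cdot\interleave_v$. Radon--Nikodym gives $\widehat{Q}^{(r)}_{2t_0+2}(x,dy) = \tilde h_r(x,y)\,\tilde\alpha_r(dy)$ with $\tilde h_r$ uniformly bounded; the associated operator on $L^2(\tilde\alpha_r)$ is Hilbert--Schmidt, hence a norm-limit of rank-$N$ operators $\sum_i s_i(x)t_i(y)$ via SVD truncation. To convert the $L^2$-type approximation into a $v$-weighted operator-norm approximation, split $y$ into a large $v$-sublevel set (where $v\leq T$ transfers $L^2$ control of $\tilde h_r$ into a sup-bound) and a tail region (handled using $\tilde\alpha_r(v\,\mathbb{I}_{\{v>T\}})\to 0$ together with the uniform bound on $\tilde h_r$). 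The main obstacle I anticipate is precisely this last conversion: the weighted space $\mathcal{L}_v$ does not embed isometrically in a Hilbert space, so the Hilbert--Schmidt step must be married to explicit uniform-in-$x$ tail control in the $v$-weight, and the finite-rank approximation engineered so that both sources of error combine cleanly in $\interleave\cdot\interleave_v$.
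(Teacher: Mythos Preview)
Your telescoping reduction and the domination of $\widehat{Q}^{(r)}_{t_0+1}$ by a fixed measure match the paper's argument closely. One minor difference: the paper retains the final potential $e^{U(X_{t_0})}$ inside the expectation, obtaining $\widehat{Q}^{(r)}_{t_0+1}(x,\cdot)\leq\beta_r^*(\cdot)$ with $\beta_r^*:=\exp(rt_0\|U^+\|_V)\int_{C_r}\beta_r(dy)\,Q(y,\cdot)$, for which $\beta_r^*(v)<\infty$ follows at once from the drift and $\beta_r(e^V)<\infty$. In particular, your stated reason for passing to $2t_0+2$ (``the second iteration absorbs the potentially unbounded $Mv$'') is not the operative one: the dominating measure already has finite $v$-mass at the $(t_0+1)$-level. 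The squaring is needed for the finite-rank approximation step itself, which the paper outsources entirely to \citep[Lemma B.3]{mc:the:KM05} rather than constructing by hand.

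The obstacle you flag in your Hilbert--Schmidt step is a genuine gap and is not resolved by the $y$-tail split. SVD truncation of $\tilde h_r$ in $L^2(\tilde\alpha_r\otimes\tilde\alpha_r)$ controls $\int\|\tilde h_r(x,\cdot)-\tilde h_r^{(N)}(x,\cdot)\|_{L^2}^2\,\tilde\alpha_r(dx)$, an error \emph{averaged} in $x$; the norm $\interleave\cdot\interleave_v$ demands a $\sup_x$ bound. Your split disposes of $\{v>T\}$ via $\tilde\alpha_r(v\,\mathbb{I}_{\{v>T\}})\to 0$, but on $\{v\leq T\}$ you are still left needing $\sup_x\|\tilde h_r(x,\cdot)-\tilde h_r^{(N)}(x,\cdot)\|_{L^1(\tilde\alpha_r)}$ to be small, which would amount to precompactness of $\{\tilde h_r(x,\cdot):x\in\mathsf{X}\}$ in $L^1(\tilde\alpha_r)$---not a consequence of uniform boundedness of the density alone. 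The mechanism in \citep[Lemma B.3]{mc:the:KM05} instead exploits the factorisation through $L^2$ that having \emph{two} iterates of the dominated kernel provides (the first iterate sends $\{\varphi:|\varphi|\leq v\}$ into a bounded subset of $L^\infty\subset L^2(\mu)$, and the second acts as a Hilbert--Schmidt operator there), which is why the statement is for $2t_0+2$.
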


The following theorem makes a key connection between $v$-separability
and a discrete spectrum. \begin{thm} \citep[Theorem 3.5]{mc:the:KM05}
If the linear operator $Q:\mathcal{L}_{v}\rightarrow\mathcal{L}_{v}$
is bounded and $Q_{t_{0}}:\mathcal{L}_{v}\rightarrow\mathcal{L}_{v}$
is $v$-separable for some $t_{0}\geq1$, then $Q$ has a discrete
spectrum in $\mathcal{L}_{v}$. \end{thm}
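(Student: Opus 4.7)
The plan is to reduce the statement to the classical Riesz--Schauder theorem for compact operators and then lift the resulting spectral information back to $Q$ via the polynomial spectral mapping theorem. First, $v$-separability of $Q_{t_0}$ says exactly that $Q_{t_0}$ is the operator-norm limit on $\mathcal{L}_v$ of finite-rank operators. Since the set of compact operators is norm-closed inside the bounded operators on any Banach space, $Q_{t_0}:\mathcal{L}_v\to\mathcal{L}_v$ is compact. This is the one structural input from which the rest is routine.

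By Riesz--Schauder, $\mathcal{S}_v(Q_{t_0})$ is then at most countable with $0$ as its only possible accumulation point, and every nonzero $\mu\in\mathcal{S}_v(Q_{t_0})$ is isolated with a Riesz projection
\[
P_\mu\;:=\;\frac{1}{2\pi i}\oint_{|z-\mu|=r}(zI-Q_{t_0})^{-1}\,dz
\]
of finite rank; in particular $Q_{t_0}$ already has a discrete spectrum in the sense used in the paper. To transfer this to $Q$, I would apply the polynomial spectral mapping theorem with $p(z)=z^{t_0}$, which gives $\mathcal{S}_v(Q_{t_0})=\{z^{t_0}:z\in\mathcal{S}_v(Q)\}$. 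For any compact $B\subset\mathbb{C}\setminus\{0\}$, the preimage $\{z:z^{t_0}\in B\}$ is a compact subset of $\mathbb{C}\setminus\{0\}$, so it meets the discrete set $\mathcal{S}_v(Q_{t_0})$ in finitely many points; since each contributes at most $t_0$ preimages, $\mathcal{S}_v(Q)\cap B$ is finite. For any such isolated $z_0\in\mathcal{S}_v(Q)$, the Riesz projection of $Q$ at $z_0$ commutes with $Q_{t_0}$ and maps into the generalised eigenspace of $Q_{t_0}$ at $z_0^{t_0}$; the latter is finite-dimensional, so the projection is finite rank.

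The main obstacle is the final representation step: the paper's definition of \emph{pole of finite multiplicity} demands that the Riesz projection be expressible as $\sum_{i,j}m_{i,j}[s_i\otimes\nu_j]$ with the $\{\nu_j\}$ in $\mathcal{P}_v$, i.e.\ as \emph{probability measures} rather than merely abstract bounded linear functionals on $\mathcal{L}_v$. To secure this I would exploit the kernel structure of $Q$: for $|z|>\xi_v(Q)$ the Neumann series $(zI-Q)^{-1}=\sum_{n\geq0}z^{-n-1}Q_n$ shows that the resolvent acts as integration against a signed kernel $K_z(x,dy)$ of finite $v$-moment, and this representation extends analytically along a contour encircling $z_0$. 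Evaluating the residue expresses $P_{z_0}$ as an integral against a finite-rank signed kernel; Jordan decomposition of the resulting marginals and renormalisation to probabilities, with the normalising constants absorbed into the $m_{i,j}$, would then yield the required decomposition. Verifying uniform integrability of these kernels under the $v$-weighting is the technical core of this last step.
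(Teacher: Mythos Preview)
The paper does not prove this theorem; it is quoted from \citet[Theorem 3.5]{mc:the:KM05} and used as a black box, so there is no in-paper argument to compare against. Your overall strategy---$v$-separability of $Q_{t_0}$ makes it a norm limit of finite-rank operators, hence compact; Riesz--Schauder gives the discrete spectrum of $Q_{t_0}$; spectral mapping for $p(z)=z^{t_0}$ transfers finiteness back to $Q$; and the Riesz projection of $Q$ at $z_0$ lands in the (finite-dimensional) generalised eigenspace of $Q_{t_0}$ at $z_0^{t_0}$---is sound and is essentially how Kontoyiannis and Meyn proceed. One slip: in the spectral-mapping step you should take the \emph{image} $\{z^{t_0}:z\in B\}$ (compact, bounded away from zero), intersect it with $\mathcal{S}_v(Q_{t_0})$, and pull back; you wrote the preimage and intersected with the wrong spectrum.

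You have correctly isolated the genuine subtlety in step 4: the paper's definition of \emph{pole of finite multiplicity} requires the Riesz projection to be a finite sum $\sum m_{i,j}[s_i\otimes\nu_j]$ with $\nu_j\in\mathcal{P}_v$ actual probability measures, not merely bounded linear functionals on $\mathcal{L}_v$ (whose dual contains, e.g., purely finitely additive objects). Your proposed fix---propagate the kernel representation of $Q$ through the Neumann series and continue analytically to the contour---is the right idea and is how this is handled in \citet{mc:the:KM05}, but note that it invokes the positive-kernel structure of $Q$, which is not part of the hypothesis of the theorem \emph{as displayed here}. This is not a defect in your reasoning so much as a mismatch between the abstracted statement and what is actually used: in both the present paper and in Kontoyiannis--Meyn, $Q$ is always a non-negative integral kernel, and that structure is what makes the measure-valued representation of the spectral projection go through.
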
 Under (H\ref{hyp:drift})
$Q$ is indeed bounded, so has a discrete spectrum in $\mathcal{L}_{v}$
and then by definition it also admits a spectral gap in $\mathcal{L}_{v}$.
For any $\theta>\xi_{v}\left(Q\right)$ we may consider the resolvent operator
defined by 
\begin{equation}
R_\theta  :=  \left[I\theta-Q\right]^{-1}=\sum_{k=0}^{\infty}\theta^{-k-1}Q_{k},\label{eq:R_defn}\\
\end{equation}

We can now state and prove the MET: \begin{thm} \label{thm:MET}
Assume (H\ref{hyp:irreducible})-(H\ref{hyp:density}). Then $\lambda=\xi_{v}\left(Q\right)$ is a maximal and isolated eigenvalue for $Q$. For any $d\geq\underline{d}$ and $\theta>\xi_v(Q)$,  the operator $H_{\theta,d}$ defined by
\begin{equation}
H_{\theta,d} :=  \left[I\lambda_{\theta}-\left(R_\theta-\theta^{\left(-m_{0}-1\right)}\epsilon_{d}\mathbb{I}_{C_{d}}\otimes\nu_{d}\right)\right]^{-1}=\sum_{k=0}^{\infty}\lambda_{\theta}^{-k-1}\left(R_\theta-\theta^{\left(-m_{0}-1\right)}\epsilon_{d}\mathbb{I}_{C_{d}}\otimes\nu_{d}\right)^{k},\label{eq:H_defn}
\end{equation}
is bounded as an operator on $\mathcal{L}_v$, with $\lambda_{\theta}:=(\theta-\xi_{v}\left(Q\right))^{-1}$.   

The function $h_0\in\mathcal{L}_v$ and measure $\mu_0\in\mathcal{P}_v$ defined by
\begin{align}
h_{0}:=\frac{H_{\theta,d}\left(\mathbb{I}_{C_{d}}\right)}{\mu_{0}H_{\theta,d}\left(\mathbb{I}_{C_{d}}\right)}, &  & \mu_{0}:=\frac{\nu_{d}H_{\theta,d}}{\nu_{d}H_{\theta,d}\left(1\right)}.\label{eq:h_0_mu_0_defn}
\end{align}
are independent of $\theta,d$ and satisfy 
\begin{align*}
Qh_{0}=\lambda h_{0}, &  & \mu_{0}Q=\lambda\mu_{0},& &\mu_{0}\left(h_{0}\right)=1.
\end{align*}

Furthermore, there exist constants $B_{0}<\infty$ and $B_{1}>0$
such that for any $\varphi\in\mathcal{L}_{v}$, any $n\geq1$ and any $x\in\mathsf{X}$,
\begin{eqnarray}
\left|\lambda^{-n}\gamma_{n,x}(\varphi)-h_{0}(x)\mu_{0}(\varphi)\right| & \leq & \left\Vert \varphi\right\Vert _{v}B_{0}e^{-nB_{1}}v(x).\label{eq:MET}
\end{eqnarray}

\end{thm}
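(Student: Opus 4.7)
The plan is to follow the architecture of Theorem 3.1 of \citet{mc:the:KM05} but starting from our hypotheses (H\ref{hyp:irreducible})--(H\ref{hyp:density}). First I would invoke Lemma \ref{lem:Q_sep}, which gives $v$-separability of $Q_{2t_{0}+2}$, together with the already-quoted Theorem 3.5 of \citet{mc:the:KM05}, to conclude that $Q$ has a discrete spectrum in $\mathcal{L}_{v}$; by definition this yields a spectral gap. Using the strictly positive lower bound $\xi_{v}(Q)\geq \epsilon_{d}\nu_{d}(C_{d})>0$ from Lemma \ref{lem:spr_bounded_below} together with positivity of $Q$ and the $\psi$-irreducibility/aperiodicity (H\ref{hyp:irreducible}), a Krein--Rutman/Perron--Frobenius argument for positive operators on weighted $\infty$-norm spaces identifies $\lambda:=\xi_{v}(Q)$ as an isolated, maximal eigenvalue whose eigenspace is one dimensional, making $Q$ $v$-uniform.

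Next I would construct $h_{0}$ and $\mu_{0}$ via the operator $H_{\theta,d}$. Fix $\theta>\xi_{v}(Q)$ and $d\geq\underline{d}$. The minorization (\ref{eq:A_minor}) gives $Q_{m_{0}}\geq \epsilon_{d}\mathbb{I}_{C_{d}}\otimes\nu_{d}$, so from (\ref{eq:R_defn}),
\[
R_{\theta}=\sum_{k=0}^{\infty}\theta^{-k-1}Q_{k}\geq \theta^{-m_{0}-1}\epsilon_{d}\mathbb{I}_{C_{d}}\otimes\nu_{d},
\]
and the difference $T_{\theta,d}:=R_{\theta}-\theta^{-m_{0}-1}\epsilon_{d}\mathbb{I}_{C_{d}}\otimes\nu_{d}$ is a non-negative operator. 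By the spectral mapping theorem, $\lambda_{\theta}=(\theta-\lambda)^{-1}$ is the isolated, simple, maximal eigenvalue of $R_{\theta}$. Stripping off a non-negative rank-one piece dominated by $R_{\theta}$ and associated with the dominant eigenspace lowers the spectral radius strictly: $\xi_{v}(T_{\theta,d})<\lambda_{\theta}$. This is the essential Krein--Rutman step and ensures the Neumann series in (\ref{eq:H_defn}) converges in operator norm, so $H_{\theta,d}$ is bounded on $\mathcal{L}_{v}$.

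To establish the eigenrelations I would use the identity $(I\lambda_{\theta}-T_{\theta,d})H_{\theta,d}=I$ together with $R_{\theta}(I\theta-Q)=I$. Evaluating the former on $\mathbb{I}_{C_{d}}$ and substituting $T_{\theta,d}\mathbb{I}_{C_{d}}=R_{\theta}\mathbb{I}_{C_{d}}-\theta^{-m_{0}-1}\epsilon_{d}\nu_{d}(C_{d})\mathbb{I}_{C_{d}}$ gives, after rearrangement, $QH_{\theta,d}(\mathbb{I}_{C_{d}})=\lambda H_{\theta,d}(\mathbb{I}_{C_{d}})$; the analogous manipulation with $\nu_{d}$ acting on the left yields $\nu_{d}H_{\theta,d}Q=\lambda\nu_{d}H_{\theta,d}$. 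Normalizing as in (\ref{eq:h_0_mu_0_defn}) produces $Qh_{0}=\lambda h_{0}$, $\mu_{0}Q=\lambda\mu_{0}$, and $\mu_{0}(h_{0})=1$. One-dimensionality of the eigenspace, together with the normalization, forces $(h_{0},\mu_{0})$ to be independent of $\theta$ and $d$. Finally, the spectral gap lets me write $Q=\lambda\, h_{0}\otimes\mu_{0}+S$ with $\xi_{v}(S)<\lambda$; choosing $B_{1}\in(0,\log(\lambda/\xi_{v}(S)))$, Gelfand's formula supplies a constant $B_{0}$ such that $\interleave Q_{n}-\lambda^{n}h_{0}\otimes\mu_{0}\interleave_{v}\leq B_{0}\lambda^{n}e^{-nB_{1}}$, and (\ref{eq:MET}) follows by applying both sides to $\varphi$ and evaluating at $x$.

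The main obstacle I anticipate is the rigorous justification of $\xi_{v}(T_{\theta,d})<\lambda_{\theta}$, i.e.\ that the subtraction genuinely removes the leading spectral contribution rather than simply shifting it; this is where simplicity of $\lambda$, positivity of $Q$ on $\mathcal{L}_{v}$, and the explicit $m_{0}$-step minorization on the sublevel set $C_{d}$ must be combined carefully (the approximation (\ref{eq:approx_Q}) of $Q$ by the truncation $\widehat{Q}^{(r)}$ is likely useful here, since compact perturbations preserve the essential spectrum). A secondary subtlety is checking that $h_{0}\in\mathcal{L}_{v}$ and that $\mu_{0}\in\mathcal{P}_{v}$, which amounts to quantitative control of $H_{\theta,d}(\mathbb{I}_{C_{d}})$ in the $v$-norm and of $\nu_{d}H_{\theta,d}(v)$; both can be read off from the convergent Neumann series once the spectral-radius strict inequality is in hand.
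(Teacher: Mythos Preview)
Your plan is correct and tracks the paper's own (sketch) proof closely for the first three quarters: both invoke Lemma \ref{lem:Q_sep} and \citet[Theorem 3.5]{mc:the:KM05} to get a discrete spectrum, both pass to the resolvent and use the minorization to bound $R_\theta$ below by the rank-one kernel, and both identify the strict inequality $\xi_v(T_{\theta,d})<\lambda_\theta$ as the crux (the paper simply cites \citet[Proof of Proposition 2.8]{mc:the:KM05} for this step rather than arguing it via Krein--Rutman). The one genuine divergence is in how you obtain the exponential bound (\ref{eq:MET}). The paper introduces the twisted Markov kernel $\check{P}(x,dy):=\lambda^{-1}h_0^{-1}(x)Q(x,dy)h_0(y)$, proves in Lemma \ref{lem:P_check_M-drift} that $\check{P}$ satisfies a geometric drift condition with Lyapunov function $\check{v}\propto v/h_0$, and then reads off (\ref{eq:MET}) from the $\check{v}$-geometric ergodicity of $\check{P}$ via \citet[Theorem 3.4]{mc:the:KM05}. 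Your route---writing $Q=\lambda\,h_0\otimes\mu_0+S$ with $\xi_v(S)<\lambda$ and applying Gelfand's formula---is equally valid and arguably more direct if the MET were the end of the story. The payoff of the paper's detour is that the twisted kernel $\check{P}$ and its drift properties are the workhorse of Section \ref{sec:Non-asymptotic-variance} (Proposition \ref{prop:Gamma_bound}, Lemmata \ref{lem:E_check_prod_bounds}--\ref{lem:DQDQ_bound}), so establishing Lemma \ref{lem:P_check_M-drift} here is not wasted effort.
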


\begin{proof} We give only a sketch proof, as it is essentially that
of Theorem 3.1 of \citet{mc:the:KM05}. As established in Lemma \ref{lem:spr_bounded_below},
under our assumptions $0<\xi_{v}\left(Q\right)<\infty$. Furthermore the semigroup associated with $Q$ is $\psi$-irreducible, and as observed above $Q$ is bounded on $\mathcal{L}_{v}$, has a discrete
spectrum and therefore admits a spectral gap in $\mathcal{L}_{v}$. Proposition 2.8 of \citet{mc:the:KM05} therefore applies. Thus $Q$ is $v$-uniform and $\lambda=\xi_v(Q)$ is a maximal and isolated eigenvalue.

By the minorization condition of (H\ref{hyp:drift}) one can obtain a minorization condition
for $R_\theta$ of \eqref{eq:R_defn}: 
\[
R_\theta(x,dy)\geq\theta^{\left(-m_{0}-1\right)}\epsilon_{d}\mathbb{I}_{C_{d}}(x)\nu_{d}(dy),
\]
which holds for any $d\geq\underline{d},\theta>\xi_v(Q)$. Therefore by the argument in \citet{mc:the:KM05}[Proof of Proposition 2.8], for any $\theta>\xi_v(Q)$ and $d\geq \underline{d}$, the spectral radiue of $[R_\theta - \theta^{\left(-m_{0}-1\right)}\epsilon_{d}\mathbb{I}_{C_d}\otimes\nu_d]$ is strictly less than $\lambda_\theta=(\theta-\xi_v(Q))^{-1}$. Thus $H_{\theta,d}$ is bounded as an operator on $\mathcal{L}_v$ and the sum in \eqref{eq:H_defn} converges in the operator norm.  

Then also by \citep{mc:the:KM05}[Proposition 2.8], $H_{\theta,d}(\mathbb{I}_{C_d})\in\mathcal{L}_{v}$
is an eigenfunction for $Q$ with eigenvalue $\lambda=\xi_{v}\left(Q\right)$. By similar arguments
to \citet{mc:theory:KM03}[proof of Proposition 4.5] it is easily verfied that $\nu_d H_{\theta,d}$ is an eigenmeasure. The normalization to $h_0$ and $\mu_0$
is justified by the finiteness, under our assumptions, of the associated
quantities. By \citep{mc:theory:KM03}[Theorem 3.3 part (iii), see also comments on p.332] $h_0$ and $\mu_0$ constructed using any $\theta,d$ are respectivaly the $\psi$-essentially unique eigenfunction and unique eigenmeasure satisfying $\mu_0(\mathsf{X})=1$, $\mu_0(h_0)=1$, hence the lack of dependence on $\theta,d$.

To obtain \eqref{eq:MET} one may define the \emph{twisted} kernel:
\begin{eqnarray}
\check{P}(x,dy) & := & \lambda^{-1}h_{0}^{-1}(x)Q(x,dy)h_{0}(y),\label{eq:twisted_kernel_defn}
\end{eqnarray}
 which can be seen to be well defined as a Markov kernel, as $\lambda$
is strictly positive and finite and (H\ref{hyp:drift}) implies $h_{0}$
is everywhere finite and strictly positive. Furthermore one observes
immediately that $\check{P}$ admits $\check{\pi}$, defined by $\check{\pi}(\varphi)=\mbox{\ensuremath{\mu}}_{0}\left(h_{0}\varphi\right)/\mbox{\ensuremath{\mu}}_{0}\left(h_{0}\right)=\mbox{\ensuremath{\mu}}_{0}\left(h_{0}\varphi\right)$,
as an invariant probability distribution. By Lemma \ref{lem:P_check_M-drift}
in Appendix \ref{app:MET_proofs} one can apply Theorem 3.4 of \citet{mc:the:KM05}
to the Markov chain associated to the twisted kernel, (in the notation
of of Theorem 3.4 of \citet{mc:the:KM05}, take $g\equiv\varphi/h_{0}$,
$F\equiv0$). This results in the bound \eqref{eq:MET}, which completes
the proof. \end{proof}

\begin{rem} Upon dividing through by $h_{0}$, the equation (\ref{eq:MET}) of the MET
may be viewed as a probabilistic, geometric ergodic theorem for
the twisted chain associated to the kernel \eqref{eq:twisted_kernel_defn}
and the modified test function $\varphi/h_{0}$, with a naturally
modified drift function $\check{v}=e^{\check{V}}$ proportional to
${v}/{h_{0}}$. See Lemma \ref{lem:P_check_M-drift}
in Appendix \ref{app:MET_proofs}. \end{rem}

\begin{rem}
The constant $B_1$ in equation \eqref{eq:MET} is directly related to the size of the spectral gap of $Q$, see \cite[Proof of Theorem 4.1]{mc:theory:KM03}.
\end{rem}

\section{Non-Asymptotic Variance\label{sec:Non-asymptotic-variance}}

\subsection{Tensor Product Functionals}

The various tensor product functionals considered in the remainder
of this paper require some additional notation. For a measurable function
$F$ on $\mathsf{X}^{2}$ and a weighting function $v:\mathsf{X}\rightarrow[1,\infty)$,
we define the norm $\left\Vert F\right\Vert _{v,2}:=\sup_{x,y\in\mathsf{X^{2}}}\left|F(x,y)\right|/\left(v(x)v(y)\right)$
and denote $\mathcal{L}_{v,2}:=\left\{ F:\mathsf{X}^{2}\rightarrow\mathbb{R};\left\Vert F\right\Vert _{v,2}<\infty\right\} $
the corresponding function space. For two functions $\varphi_{1},\varphi_{2}\in\mathcal{L}_{v}$,
we denote by $\varphi_{1}\otimes\varphi_{2}\in\mathcal{L}_{v,2}$
the tensor product function defined by $\varphi_{1}\otimes\varphi_{2}(x,x'):=\varphi_{1}(x)\varphi_{2}(x')$.
Let $K:\mathsf{X}\times\mathcal{B}(\mathsf{X})\rightarrow\mathbb{R}_{+}$
be a kernel on $\mathsf{X}$. The two-fold tensor product operator
corresponding to $K$ is defined, for any $F\in\mathcal{L}_{v,2}$,
by 
\begin{eqnarray*}
K^{\otimes2}\left(F\right)(x,x') & := & \int_{\mathsf{X}^{2}}K(x,dy)K(x',dy')F(y,y').
\end{eqnarray*}
The iterated operator notation of the previous section is carried
over so that 
\begin{equation*}
K_{0}^{\otimes2}:=Id,\quad\quad K_{n}^{\otimes2} := \underbrace{K^{\otimes2}\ldots K^{\otimes2}}_{n\text{ times}},\quad n\geq1.
\end{equation*}

Corresponding to the particle empirical measures of section \ref{sub:Particle-systems},
for $n\geq1$, we introduce the tensor product empirical measures
(or 2-fold $V-$statistic): 
\[
\left(\eta_{n}^{N}\right)^{\otimes2}:=\frac{1}{N^{2}}\sum_{1\leq i,j\leq N}\delta_{\left(\zeta_{n}^{i},\zeta_{n}^{j}\right)},\quad\left(\gamma_{n,x}^{N}\right)^{\otimes2}:=\gamma_{n,x}^{N}(1)^{2}\left(\eta_{n}^{N}\right)^{\otimes2}.
\]

Following the definition of \citet{smc:the:CdMG11}, the coalescent
integral operator $D$, acting on functions on $\mathsf{X}^{2}$,
is defined by 
\[
D\left(F\right)\left(x,x'\right)=F\left(x,x\right),\quad\left(x,x'\right)\in\mathsf{X}^{2}.
\]
 For any $0\leq s\leq\left(n+1\right)$, we denote by $\mathcal{I}_{n,s}:=\left\{ (i_{1},...,i_{s})\in\mathbb{N}_{0}^{s};0\leq i_{1}<\ldots<i_{s}\leq n\right\} $
the set of coalescent time configurations over a horizon of length
$n+1$ and for $(i_{1},...,i_{s})\in\mathcal{I}_{n,s}$ and $x\in\mathsf{X}$,
the nonegative measure $\Gamma_{n,x}^{(i_{1},...i_{s})}$ on $\left(\mathsf{X}^{2},\mathcal{B}\mathsf{\left(X^{2}\right)}\right)$,
and its normalised counterpart $\bar{\Gamma}_{n,x}^{(i_{1},...,i_{s})}$,
are defined by
\begin{eqnarray}
\Gamma_{n,x}^{(i_{1},...,i_{s})}:=\gamma_{i_{1},x}^{\otimes2}DQ_{i_{2}-i_{1}}^{\otimes2}D\ldots Q_{i_{s}-i_{s-1}}^{\otimes2}DQ_{n-i_{s}}^{\otimes2}, & \quad & \bar{\Gamma}_{n,x}^{(i_{1},...i_{s})}:=\frac{\Gamma_{n,x}^{(i_{1},...,i_{s})}}{\gamma_{n,x}(1)^{2}},\label{eq:Gamma_defn}
\end{eqnarray}
 for $s\geq1$, and for $s=0$, $\Gamma_{n,x}^{\left(\emptyset\right)}\left(F\right):=\gamma_{n,x}^{\otimes2}\left(F\right)$
and $\bar{\Gamma}_{n,x}^{\left(\emptyset\right)}\left(F\right):=\eta_{n}^{\otimes2}\left(F\right)$.
We refer the reader to \citet[ Section 3]{smc:the:CdMG11} for a
helpful visual representation of the integrals in the transport equation
(\ref{eq:Gamma_defn}). We have already checked in Lemma \ref{lem:spr_bounded_below}
that the Feynman-Kac formula (\ref{eq:FK_formula}) is well defined
under our assumptions in the $\mathcal{L}_{v}$ setting, which validates
the denominator of (\ref{eq:Gamma_defn}).

When Theorem \ref{thm:MET} holds, we will denote by $\check{\mathbb{E}}_{x}$
expectation with respect to the law of the twisted Markov chain $\left\{ \check{X}_{n};n\geq0\right\} $,
i.e that with transition kernel $\check{P}$ as in equation (\ref{eq:twisted_kernel_defn})
and initialised from $\check{X}_{0}=x$.

\subsection{Non-Asymptotic Variance}

In this section we give our main result. The proof is detailed in
section \ref{sec:const_proof}. The following additional assumption
imposes some further restrictions on the function class considered,
but this is not overly demanding, considering that we will be dealing
with coalesced tensor product quantities. \begin{condition} \label{hyp_drift_var}Let
$V$ and $\bar{d}$ be as in assumption (H\ref{hyp:drift}). There
exists $0<\epsilon_{0}<\epsilon$ and for all $d\geq\bar{d}$, there
exists $b_{d}^{*}<\infty$ such that 
\[
Q\left(e^{(1+\epsilon)V}\right)\leq e^{(1+\epsilon)V-(1+\epsilon_{0})V+b_{d}^{*}\mathbb{I}_{C_{d}}}.
\]
\end{condition}
The following theorem is due to \citet{smc:the:CdMG11}. 

\begin{thm} \label{thm:cerou}\citep[Proposition 3.4]{smc:the:CdMG11}
For any $n\geq1$, $x\in\mathsf{X}$ and $N\geq1$ the following expansion
holds: 
\begin{eqnarray}
 &  & \mathbb{E}_{x}^{N}\left[\left(\frac{\gamma_{n,x}^{N}\left(1\right)}{\gamma_{n,x}(1)}-1\right)^{2}\right]\nonumber \\
 &  & =\sum_{s=1}^{n+1}\left(1-\frac{1}{N}\right)^{\left(n+1\right)-s}\frac{1}{N^{s}}\sum_{\left(i_{1},...,i_{s}\right)\in\mathcal{I}_{n,s}}\left[\bar{\Gamma}_{n,x}^{(i_{1},...i_{s})}\left(1\otimes1\right)-1\right],\label{eq:l2_delmoral_decomp}
\end{eqnarray}
 where $\mathbb{E}_{x}^{N}$ denotes expectation w.r.t. the law of
the $N$-particle system. \end{thm}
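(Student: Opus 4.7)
The plan is to obtain, by induction on $n$, the closed-form expansion
\begin{equation*}
\mathbb{E}_x^N\bigl[\gamma_{n,x}^{N,\otimes 2}(F)\bigr]
\;=\;\sum_{s=0}^{n+1}\Bigl(1-\tfrac{1}{N}\Bigr)^{(n+1)-s}\tfrac{1}{N^{s}}
\sum_{(i_1,\ldots,i_s)\in\mathcal{I}_{n,s}}\Gamma_{n,x}^{(i_1,\ldots,i_s)}(F),
\end{equation*}
and then specialise to $F=1\otimes 1$. The first step is to derive the one-step recursion from the conditional structure \eqref{eq:particle_transitions}. Given the $\sigma$-field $\mathcal{F}_{n-1}^{N}$ generated by the first $n$ generations of particles, the coordinates $\zeta_n^{(N,i)}$ are i.i.d.\ with common law $\Phi_{n-1}:=\eta_{n-1}^{N}Q/\eta_{n-1}^{N}(e^{U})$, so that, for any $F\in\mathcal{L}_{v,2}$,
\begin{equation*}
\mathbb{E}_x^N\!\left[(\eta_n^N)^{\otimes 2}(F)\,\big|\,\mathcal{F}_{n-1}^{N}\right]
\;=\;\Bigl(1-\tfrac{1}{N}\Bigr)\Phi_{n-1}^{\otimes 2}(F)+\tfrac{1}{N}\Phi_{n-1}(DF),
\end{equation*}
the two terms accounting for distinct-parent and same-parent pairs respectively. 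Multiplying by the $\mathcal{F}_{n-1}^{N}$-measurable factor $(\gamma_n^N(1))^2=(\gamma_{n-1}^N(1))^{2}(\eta_{n-1}^{N}(e^{U}))^2$ and using $Q(1)(x)=e^{U(x)}$ to rewrite $(\eta_{n-1}^N(e^U))^2\Phi_{n-1}^{\otimes 2}(F)=\eta_{n-1}^{N,\otimes 2}(Q^{\otimes 2}F)$ and $(\eta_{n-1}^N(e^U))^2\Phi_{n-1}(DF)=\eta_{n-1}^{N,\otimes 2}(Q^{\otimes 2}DF)$ yields
\begin{equation*}
\mathbb{E}_x^N\!\left[\gamma_{n,x}^{N,\otimes 2}(F)\,\big|\,\mathcal{F}_{n-1}^{N}\right]
\;=\;\Bigl(1-\tfrac{1}{N}\Bigr)\gamma_{n-1,x}^{N,\otimes 2}(Q^{\otimes 2}F)+\tfrac{1}{N}\gamma_{n-1,x}^{N,\otimes 2}(Q^{\otimes 2}DF).
\end{equation*}

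For the inductive step, I would feed each term on the right-hand side into the inductive hypothesis and then match configurations on the two sides. A configuration $(i_1,\ldots,i_{s'})\in\mathcal{I}_{n+1,s'}$ is classified by whether $i_{s'}=n+1$ or $i_{s'}\leq n$: in the first case it corresponds to an ancestral configuration $(i_1,\ldots,i_{s'-1})\in\mathcal{I}_{n,s'-1}$ taken together with the coalescent factor $1/N$ from the new step, and this produces the composition $\Gamma_{n+1,x}^{(i_1,\ldots,i_{s'-1},n+1)}=\Gamma_{n,x}^{(i_1,\ldots,i_{s'-1})}Q^{\otimes 2}D$; in the second case it corresponds to $(i_1,\ldots,i_{s'})\in\mathcal{I}_{n,s'}$ with the non-coalescent factor $1-1/N$, and produces $\Gamma_{n+1,x}^{(i_1,\ldots,i_{s'})}=\Gamma_{n,x}^{(i_1,\ldots,i_{s'})}Q^{\otimes 2}$. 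The base case $n=0$ follows from $\gamma_{0,x}^{N,\otimes 2}(F)=F(x,x)$, together with the identity $\delta_x^{\otimes 2}F=\delta_x^{\otimes 2}DF$, which also explains why the coalescent at time $0$ is counted as an admissible position and why $s$ ranges up to $n+1$ rather than $n$.

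Having established the displayed formula, I would divide through by $\gamma_{n,x}(1)^{2}$ to obtain the expansion with $\bar\Gamma_{n,x}^{(i_1,\ldots,i_s)}$, then combine it with $\mathbb{E}_x^N[\gamma_{n,x}^N(1)]=\gamma_{n,x}(1)$ to get
\begin{equation*}
\mathbb{E}_x^N\!\left[\left(\tfrac{\gamma_{n,x}^{N}(1)}{\gamma_{n,x}(1)}-1\right)^{\!2}\right]
=\sum_{s=0}^{n+1}\Bigl(1-\tfrac{1}{N}\Bigr)^{(n+1)-s}\tfrac{1}{N^{s}}\sum_{\mathcal{I}_{n,s}}\bar\Gamma_{n,x}^{(i_1,\ldots,i_s)}(1\otimes 1)\;-\;1.
\end{equation*}
The binomial identity $\sum_{s=0}^{n+1}\binom{n+1}{s}(1-1/N)^{(n+1)-s}N^{-s}=1$ together with $|\mathcal{I}_{n,s}|=\binom{n+1}{s}$ lets one absorb the $-1$ inside the double sum termwise, producing the $[\bar\Gamma-1]$ summands of the theorem. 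Finally, the $s=0$ summand is identically zero because $\bar\Gamma_{n,x}^{(\emptyset)}(1\otimes 1)=\eta_n^{\otimes 2}(1\otimes 1)=1$, which lets the sum start at $s=1$.

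The main obstacle is the combinatorial bookkeeping in the inductive step: one must verify carefully that the operator strings produced by iterating the one-step recursion match the prescribed compositions $\gamma_{i_1,x}^{\otimes 2}DQ_{i_2-i_1}^{\otimes 2}D\cdots DQ_{n-i_s}^{\otimes 2}$, and in particular that the identity $\delta_x^{\otimes 2}=\delta_x^{\otimes 2}D$ correctly identifies $\Gamma_{n,x}^{(0,i_2,\ldots,i_s)}$ with $\Gamma_{n,x}^{(i_2,\ldots,i_s)}$ prepended by a $D$, so that all $\binom{n+1}{s}$ configurations are accounted for exactly once. Up to this bookkeeping, the proof is a routine induction driven by the conditional i.i.d.\ structure.
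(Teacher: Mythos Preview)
Your proposal is correct and follows precisely the approach the paper sketches: the paper does not give a full proof but notes that one writes the relative variance as $\mathbb{E}_x^N[(\gamma_{n,x}^{N})^{\otimes 2}(1\otimes 1)]/\gamma_{n,x}(1)^2-1$ via unbiasedness, then computes the expectation recursively and reorganises the resulting terms, deferring details to C\'erou et al. Your one-step recursion from the conditional i.i.d.\ structure, the backward/forward induction matching operator strings $Q^{\otimes 2}$ and $Q^{\otimes 2}D$ to the configurations in $\mathcal{I}_{n,s}$, the artificial split at time $0$ via $\delta_x^{\otimes 2}=\delta_x^{\otimes 2}D$, and the binomial absorption of the $-1$ are exactly the missing details of that sketch; the only cosmetic point is that $\Phi_{n-1}(DF)$ should strictly be written $\Phi_{n-1}^{\otimes 2}(DF)$, though the intended meaning is clear.
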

A full proof is not provided here. However, we note that we may write
\begin{equation}
\mathbb{E}_{x}^{N}\left[\left(\frac{\gamma_{n,x}^{N}\left(1\right)}{\gamma_{n,x}(1)}-1\right)^{2}\right]=\frac{\mathbb{E}_{x}^{N}\left[\left(\gamma_{n,x}^{N}\right)^{\otimes 2}\left(1\otimes 1\right)\right]}{\gamma_{n,x}(1)^{2}}-1,\label{eq:vareq}
\end{equation}
where the equality is due to the lack of bias property $\mathbb{E}_{x}^{N}\left[\gamma_{n,x}^N(1)\right]=\gamma_{n,x}(1)$ and the definition of $\left(\gamma_{n,x}^{N}\right)^{\otimes 2}$. In summary, the proof of Theorem \ref{thm:cerou} involves  recursive calculation of the expectation on the right of  \eqref{eq:vareq}, followed by organisation of the resulting terms into the form \eqref{eq:l2_delmoral_decomp}. The reader is directed to \citep{smc:the:CdMG11} for the details.

It is remarked that there is a different error decomposition in \citep{chan:lai}, which can hold
to any order under appropriate regularity conditions; one would conjecture that
this decomposition can also be treated, but this is not considered
here. The main result of this section is the following theorem, whose
proof is postponed. 
\begin{thm} \label{thm:var_bound}Assume (H\ref{hyp:irreducible})-(H\ref{hyp_drift_var}).
Then there exists $c_{1}<\infty$ and $c_{2}<\infty$ depending only
on the quantities in (H\ref{hyp:irreducible})-(H\ref{hyp_drift_var})
such that for all $x\in\mathsf{X}$, 
\begin{align*}
N> c_{1}\left(n+1\right)\geq \phi(x)&  &\Longrightarrow&  &\mathbb{E}_{x}^{N}\left[\left(\frac{\gamma_{n,x}^{N}\left(1\right)}{\gamma_{n,x}(1)}-1\right)^{2}\right]\leq c_{2}\frac{4}{N}\left(n+1\right)\frac{v^{2+\epsilon}(x)}{h_{0}^{2}(x)},
\end{align*}
with
\begin{equation*}
\phi(x):=c_1\left(\left\lceil\frac{1}{B_1}\log\left[B_0^2\frac{v(x)}{h_0(x)}\right]\right\rceil+1\right),
\end{equation*}
and where $B_0$ and $B_1$ are as in Theorem \ref{thm:MET}.
\end{thm}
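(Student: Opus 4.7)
The plan is to combine the decomposition of Theorem~\ref{thm:cerou} (equation~\eqref{eq:l2_delmoral_decomp}) with a configuration-uniform bound on $\bar\Gamma_{n,x}^{(i_1,\ldots,i_s)}(1\otimes 1)$ obtained by propagating the MET estimate of Theorem~\ref{thm:MET} through a scalar recursion. A binomial identity then collapses the sum over coalescent configurations into the target $O((n+1)/N)$ rate, with the $v^{2+\epsilon}(x)/h_0^{2}(x)$ factor arising from the MET error at the outermost normalisation step together with the repeated squaring induced by the coalescent operator $D$.

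\textbf{Step 1 (scalar recursion).} Because $1\otimes 1$ is separable and $D[f\otimes g](x,x')=f(x)g(x)$, each block $DQ_k^{\otimes 2}$ preserves the class of functions depending only on the first coordinate. Unfolding the definition of $\Gamma_{n,x}^{(i_1,\ldots,i_s)}$ gives $\Gamma_{n,x}^{(i_1,\ldots,i_s)}(1\otimes 1)=u_{s+1}(x)$ where
\begin{equation*}
u_0\equiv 1,\qquad u_k(x)=Q_{j_k}(u_{k-1})(x)\cdot Q_{j_k}(1)(x),\quad k=1,\ldots,s+1,
\end{equation*}
with gaps $j_1:=n-i_s,\;j_2:=i_s-i_{s-1},\ldots,\;j_{s+1}:=i_1$. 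Setting $\tilde Q_k(\varphi):=Q_k(\varphi)/(\lambda^k h_0)$ and $\tilde u_k:=u_k/(\lambda^{2(j_1+\cdots+j_k)}h_0^{2})$ this rewrites as
\begin{equation*}
\tilde u_0=h_0^{-2},\qquad \tilde u_k(x)=\tilde Q_{j_k}(h_0^{2}\tilde u_{k-1})(x)\cdot \tilde Q_{j_k}(1)(x),\qquad \bar\Gamma_{n,x}^{(i_1,\ldots,i_s)}(1\otimes 1)=\frac{\tilde u_{s+1}(x)}{\tilde Q_n(1)(x)^{2}}.
\end{equation*}

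\textbf{Step 2 (MET control of the normalisation and recursion).} Theorem~\ref{thm:MET} yields $\tilde Q_k(\varphi)(x)=\mu_0(\varphi)+R_k(\varphi)(x)$ with $|R_k(\varphi)(x)|\leq\|\varphi\|_{v}\,B_0\,e^{-kB_1}\,v(x)/h_0(x)$. The hypothesis $n+1\geq\phi(x)/c_1$ is engineered exactly so that $|R_n(1)(x)|\leq B_0^{-1}$, which bounds $\tilde Q_n(1)(x)^{2}$ away from zero by a fixed constant. The arguments of the recursion are of the form $h_0^{2}\tilde u_{k-1}$ which, after the first iteration, sit in $\mathcal{L}_{v^{2}}$ rather than $\mathcal{L}_v$; this is precisely where assumption (H\ref{hyp_drift_var}) comes in, allowing the entire spectral machinery of Section~\ref{sec:Multiplicative-ergodicity} (and in particular Theorem~\ref{thm:MET}) to be re-run on the enlarged space $\mathcal{L}_{v^{1+\epsilon}}$, giving $\mu_0(v^{1+\epsilon})<\infty$, $h_0^{2}\in\mathcal{L}_{v^{1+\epsilon}}$, and an MET bound valid there.

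\textbf{Step 3 (inductive bound with $A$-factors).} Expanding the product in the recursion and inserting the expansion of Step~2 gives
\begin{equation*}
\tilde u_k(x)=\mu_0(h_0^{2}\tilde u_{k-1})\bigl(1+R_{j_k}(1)(x)\bigr)+R_{j_k}(h_0^{2}\tilde u_{k-1})(x)\bigl(1+R_{j_k}(1)(x)\bigr),
\end{equation*}
from which by induction on $k$ one proves
\begin{equation*}
\tilde u_k(x)\leq A^{k-1}\left(1+C\,\frac{v^{2+\epsilon}(x)}{h_0^{2}(x)}\right),\qquad A:=\mu_0(h_0^{2})\in[1,\infty),
\end{equation*}
with $C$ depending only on the constants in (H\ref{hyp:irreducible})--(H\ref{hyp_drift_var}). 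Configurations with short gaps $j_k$, where MET contributes no exponential decay, are handled by the direct operator-norm estimate $Q(v^{1+\epsilon})\leq e^{b_d^{*}}v^{\epsilon-\epsilon_0}$ from (H\ref{hyp_drift_var}); the squared $v^{2+\epsilon}$ is produced by the single coalescent-induced squaring combined with this $\epsilon$-perturbation of the drift.

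\textbf{Step 4 (binomial collapse).} Combining Step~3 with the two-sided bound on $\tilde Q_n(1)(x)^{2}$ from Step~2 yields $\bar\Gamma_{n,x}^{(i_1,\ldots,i_s)}(1\otimes 1)\leq C'\,A^{s}\bigl(1+v^{2+\epsilon}(x)/h_0^{2}(x)\bigr)$ uniformly in $(i_1,\ldots,i_s)\in\mathcal{I}_{n,s}$. Plugging into \eqref{eq:l2_delmoral_decomp} and using the elementary identity
\begin{equation*}
\sum_{s=0}^{n+1}\binom{n+1}{s}\left(\frac{A}{N}\right)^{s}\left(1-\frac{1}{N}\right)^{n+1-s}=\left(1+\frac{A-1}{N}\right)^{n+1}
\end{equation*}
gives a bound of the form $C'\,v^{2+\epsilon}(x)/h_0^{2}(x)\cdot\bigl[(1+(A-1)/N)^{n+1}-1\bigr]$. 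For $N>c_1(n+1)$ with $c_1$ chosen $\geq A-1$, the bracket is dominated by $e(A-1)(n+1)/N$, producing the announced $c_2\cdot 4(n+1)/N\cdot v^{2+\epsilon}(x)/h_0^{2}(x)$ bound.

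\textbf{Main obstacle.} The delicate part is Step~3: because the coalescent operator $D$ effectively squares the working function, the natural weighted space inflates from $\mathcal{L}_v$ to roughly $\mathcal{L}_{v^{2}}$, and (H\ref{hyp_drift_var}) is introduced precisely to recover a self-consistent MET in $\mathcal{L}_{v^{1+\epsilon}}$. Keeping the induction constant $C$ independent of $k$, and absorbing the contributions of configurations that contain many short gaps (where MET delivers no exponential decay) into a single $v^{2+\epsilon}(x)/h_0^{2}(x)$ factor, is the main bookkeeping challenge; the hypothesis $n+1\geq\phi(x)/c_1$ is the global control that makes the MET error at the outermost step sub-unit and therefore closes the induction.
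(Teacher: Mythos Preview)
Your outer scaffolding (Steps 1, 2, 4) is correct and matches the paper exactly: the scalar recursion for $u_{s+1}(x)=\Gamma_{n,x}^{(i_1,\ldots,i_s)}(1\otimes 1)$ is right, the use of the MET to bound $\tilde Q_n(1)(x)$ below via the threshold $\phi(x)$ is exactly what the paper does, and the binomial collapse in Step~4 is identical to \citep[Corollary~5.2]{smc:the:CdMG11}. The gap is in Step~3, and it is a genuine one. Your inductive hypothesis forces you to apply $\tilde Q_{j_k}$ to $h_0^{2}\tilde u_{k-1}$, which by your own bound lies in $\mathcal{L}_{v^{2+\epsilon}}$; neither the MET of Theorem~\ref{thm:MET} (valid on $\mathcal{L}_v$) nor its extension to $\mathcal{L}_{v^{1+\epsilon}}$ under (H\ref{hyp_drift_var}) applies to such test functions. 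The assertion $h_0^{2}\in\mathcal{L}_{v^{1+\epsilon}}$ is unjustified (we only have $h_0\in\mathcal{L}_v$, hence $h_0^{2}\in\mathcal{L}_{v^{2}}$), and consequently there is no guarantee that $A=\mu_0(h_0^{2})<\infty$: we know $\mu_0(v)<\infty$, and under (H\ref{hyp_drift_var}) one can get $\mu_0(v^{1+\epsilon})<\infty$, but not $\mu_0(v^{2})$. Moreover, the induction step itself requires a uniform-in-$j_k$ bound on $\tilde Q_{j_k}(v^{2+\epsilon})$, and (H\ref{hyp_drift_var}) supplies no drift for $v^{2+\epsilon}$; your ``direct operator-norm estimate'' $Q(v^{1+\epsilon})\leq e^{b_d^{*}}v^{\epsilon-\epsilon_0}$ handles one short gap but does not iterate, so the constant $C$ cannot be kept independent of $k$ as claimed.

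The paper avoids this space-inflation entirely by passing to the \emph{twisted} kernel $\check P(x,dy)=\lambda^{-1}h_0^{-1}(x)Q(x,dy)h_0(y)$ and never invoking the MET inside the recursion. In Proposition~\ref{prop:Gamma_bound} the paper rewrites $\lambda^{-2n}\Gamma_{n,x}^{(i_1,\ldots,i_s)}(1\otimes 1)$ as a constant times $v(x)h_0(x)\,\check{\mathbb{E}}_x\bigl[\prod_{k}v(\check X_{i_k})\,v^{*}(\check X_{i_s})\bigr]$, where $v^{*}=v^{1+\epsilon}/h_0$ up to a constant. The point is that $\check P$ is a genuine Markov kernel satisfying, under (H\ref{hyp_drift_var}), the multiplicative drift $\check P(v^{*})\leq e^{-V+\bar b_d\mathbb{I}_{C_d}}v^{*}$ (Lemma~\ref{lem:M_drift_var}), from which one gets the one-line estimate $v(x)\check P_m(v^{*})(x)\leq c\,v^{*}(x)$ uniformly in $m\geq 1$. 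This is exactly what closes the backward induction over coalescent indices in Lemma~\ref{lem:E_check_prod_bounds} with a constant $c^{s+1}$, without ever leaving $\mathcal{L}_{v^{1+\epsilon}}$ and without needing $\mu_0(h_0^{2})$ or any MET remainder bound on $v^{2+\epsilon}$-class functions. That device---replacing repeated MET applications by a single multiplicative drift for the twisted chain---is the missing idea in your Step~3.
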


\subsection{Construction of the Proof}

\label{sec:const_proof}

In the following Section, we detail the argument to prove Theorem
\ref{thm:var_bound}. To that end, we present the essence of the argument
with Proposition \ref{prop:Gamma_bound} and Lemma \ref{lem:E_check_prod_bounds}
below; the proofs of which are in Appendix \ref{app:variance} along with some supporting results.

The proof of Theorem \ref{thm:var_bound} is constructed in the following
manner. By Theorem \ref{thm:cerou} we have the decomposition \eqref{eq:l2_delmoral_decomp}
in terms of the operators $\left\{ \bar{\Gamma}_{n,x}^{(i_{1},...i_{s})}\right\} $.
The proof in \citet{smc:the:CdMG11} focuses upon controlling these
expressions via the regularity conditions mentioned in section \ref{sec:standard_reg_assumptions};
our proof will do the same, except under (H\ref{hyp:irreducible})-(H\ref{hyp_drift_var}).

Throughout the remainder of this paper, let $V^{*}:\mathsf{X}\rightarrow[1,\infty)$
is defined by 
\begin{eqnarray}
V^{*}(x) & := & V(x)\left(1+\epsilon\right)-\log h_{0}(x)+\log\left\Vert h_{0}\right\Vert _{v^{\left(1+\epsilon\right)}},\label{eq:V_star_defn}
\end{eqnarray}
where $\epsilon$ is as in (H\ref{hyp_drift_var}). We proceed with the
following key proposition.
\pagebreak
\begin{prop} \label{prop:Gamma_bound}Assume (H\ref{hyp:irreducible})-(H\ref{hyp_drift_var}).
Then there exists $c<\infty$ depending only on the quantities in (H\ref{hyp:irreducible})-(H\ref{hyp_drift_var})
such that for all $n\geq1,$ $0\leq s\leq n+1$, $(i_{1},...i_{s})\in\mathcal{I}_{n,s}$,
$F\in\mathcal{L}_{v^{1/2},2}$ and $x\in\mathsf{X}$, 
\begin{eqnarray}
\bar{\Gamma}_{n,x}^{(i_{1},...i_{s})}\left(F\right) & \leq & \left\Vert F\right\Vert _{v^{1/2},2}c^{s+1}\frac{v(x)}{h_{0}(x)}\frac{\check{\mathbb{E}}_{x}\left[\prod_{k\in\left\{ i_{1},\ldots,i_{s-1}\right\} }v\left(\check{X}_{k}\right)v^{*}\left(\check{X}_{i_{s}}\right)\right]}{\check{\mathbb{E}}_{x}\left[1/h_{0}\left(\check{X}_{n}\right)\right]^{2}},\label{eq:twisted_E_bound}
\end{eqnarray}
 with the conventions that the product in the numerator is unity when
$s\leq1$, and in the case of $s=0,$ $i_{s}=0$. In the above display,
$v$ is as in (H\ref{hyp:drift}), $h_{0}\in\mathcal{L}_{v}$
is the eigenfunction as in Theorem \ref{thm:MET}
and $v^{*}=e^{V^{*}}$ is as in \eqref{eq:V_star_defn}. \end{prop}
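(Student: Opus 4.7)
My plan is to recast every occurrence of $Q_k$ using the twisted kernel $\check P$ from Theorem \ref{thm:MET}, exploit the asymmetry of the coalescent operator $D$ to convert the iterated tensor product integral into a recursion over functions on $\mathsf X$, and finally dominate the resulting nested twisted expectations by the single-chain expectation that appears on the right of \eqref{eq:twisted_E_bound}. The starting point is the operator identity $Q_k\varphi=\lambda^k h_0\,\check P_k(\varphi/h_0)$, which follows from the definition \eqref{eq:twisted_kernel_defn} and in particular yields the normalising relation $\gamma_{n,x}(1)=\lambda^n h_0(x)\,\check{\mathbb{E}}_x[1/h_0(\check X_n)]$. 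Working from the innermost operator outward in $\Gamma_{n,x}^{(i_1,\ldots,i_s)}(F)$, the key structural observation is that after each application of $D$ the resulting function depends only on its first coordinate, so every subsequent $Q^{\otimes 2}$ factorises as one ``loaded'' $Q$ carrying the nested functional times one ``empty'' factor $Q(1)$ along the other branch. Setting $a_k(y):=\tilde g_k(y)/(\lambda^{2(n-i_k)}h_0(y)^{2})$ for the diagonal value $\tilde g_k$ after the $k$-th coalescent (counted from the right), substitution of the twisting identity leads to the recursion
\begin{equation*}
a_k(y)=\check{\mathbb{E}}_y\bigl[h_0(\check X_{i_{k+1}-i_k})\,a_{k+1}(\check X_{i_{k+1}-i_k})\bigr]\,\check{\mathbb{E}}_y\bigl[1/h_0(\check X_{i_{k+1}-i_k})\bigr]\quad(1\leq k\leq s-1),
\end{equation*}
with $a_s(y)=\check{\mathbb{E}}_y^{\otimes 2}[F(\check X_{n-i_s},\check X_{n-i_s}')/(h_0(\check X_{n-i_s})h_0(\check X_{n-i_s}'))]$, and to the representation
\begin{equation*}
\bar\Gamma_{n,x}^{(i_1,\ldots,i_s)}(F)=\frac{\check{\mathbb{E}}_x[h_0(\check X_{i_1})a_1(\check X_{i_1})]\,\check{\mathbb{E}}_x[1/h_0(\check X_{i_1})]}{\check{\mathbb{E}}_x[1/h_0(\check X_n)]^{2}}.
\end{equation*}

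Next I would dispose of the ``empty-branch'' factors. Applying \eqref{eq:MET} to $\varphi\equiv 1$ gives $\bigl|\check{\mathbb{E}}_y[1/h_0(\check X_\ell)]-1\bigr|\leq B_0 e^{-\ell B_1}v(y)/h_0(y)$, and since $v(y)/h_0(y)\geq 1/\|h_0\|_v$ this yields the uniform-in-$\ell$ bound $\check{\mathbb{E}}_y[1/h_0(\check X_\ell)]\leq c_0\,v(y)/h_0(y)$ for a constant $c_0$. Introducing the change of variable $\tilde b_k(y):=h_0(y)a_k(y)$ eliminates the $h_0$ prefactors in the recursion, which simplifies to
\begin{equation*}
\tilde b_k(y)\leq c_0\,v(y)\,\check{\mathbb{E}}_y\bigl[\tilde b_{k+1}(\check X_{i_{k+1}-i_k})\bigr],\quad 1\leq k\leq s-1.
\end{equation*}
Iterating this through $k=1,\ldots,s-1$ and using the Markov property of $\check P$ collapses all the nested independent-chain expectations onto a single twisted trajectory:
\begin{equation*}
\check{\mathbb{E}}_x[\tilde b_1(\check X_{i_1})]\leq c_0^{s-1}\,\check{\mathbb{E}}_x\Bigl[\prod_{k=1}^{s-1}v(\check X_{i_k})\,\tilde b_s(\check X_{i_s})\Bigr].
\end{equation*}
Combined with the bound $\check{\mathbb{E}}_x[1/h_0(\check X_{i_1})]\leq c_0 v(x)/h_0(x)$, this accounts for the prefactor $v(x)/h_0(x)$ and the internal $v(\check X_{i_k})$ factors that appear in \eqref{eq:twisted_E_bound}.

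The main obstacle is the terminal estimate $\tilde b_s(y)\leq c\,\|F\|_{v^{1/2},2}\,v^*(y)$, where assumption (H\ref{hyp_drift_var}) enters for the first time. Bounding $|F|\leq\|F\|_{v^{1/2},2}\,v^{1/2}\otimes v^{1/2}$ and invoking the independence of the two chains in $a_s$ gives $|a_s(y)|\leq\|F\|_{v^{1/2},2}\check{\mathbb{E}}_y[v^{1/2}/h_0(\check X_{n-i_s})]^{2}$, so it suffices to prove the uniform-in-$\ell$ bound $\check{\mathbb{E}}_y[v^{1/2}/h_0(\check X_\ell)]\leq c_1\,v^{(1+\epsilon)/2}(y)/h_0(y)$ (using $v\geq 1$ to dominate $v^{1/2}$ by $v^{(1+\epsilon)/2}$). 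I would derive this by exhibiting a multiplicative Foster--Lyapunov drift for $\check P$ on the weight function $v^{(1+\epsilon)/2}/h_0$: via the twisting identity this reduces to controlling $Q(v^{(1+\epsilon)/2})$, and Cauchy--Schwarz gives $Q(v^{(1+\epsilon)/2})\leq\bigl(Q(v^{1+\epsilon})\cdot Q(1)\bigr)^{1/2}$, at which point (H\ref{hyp_drift_var}) (bounding $Q(v^{1+\epsilon})\leq v^{\epsilon-\epsilon_0}e^{b_d^{*}\mathbb{I}_{C_d}}$) together with the already-established $Q(1)\leq c_0\lambda v$ produces a contraction of order $v^{-\epsilon_0/2}$ off the sublevel set $C_d$; choosing $d$ large enough yields the drift. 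Standard Foster--Lyapunov iteration delivers the uniform-in-$\ell$ bound, squaring gives $\tilde b_s(y)=h_0(y)a_s(y)\leq c_1^{2}\|F\|_{v^{1/2},2}\,v^{1+\epsilon}(y)/h_0(y)$, and the definition \eqref{eq:V_star_defn} identifies this with a multiple of $v^*(y)$. Substituting back into the collapsed expectation, absorbing all intermediate constants into a single $c^{s+1}$, and treating the corner case $s=0$ (where the iteration is vacuous and the two-chain expectation of $F^*$ is bounded by the same terminal argument) completes the proof.
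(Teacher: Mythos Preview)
Your argument is correct and follows the same overall architecture as the paper: rewrite $\bar{\Gamma}$ via the twisting identity, peel off the coalescent indices one at a time by backward recursion, control the ``empty'' branch at each step, and close with a terminal estimate where (H\ref{hyp_drift_var}) is invoked. The paper's proof in Appendix~B organises this as a chain of lemmata (Lemma~\ref{lem:lam_D_Q_bound} for the recursion step, Lemma~\ref{lem:DQDQ_bound} for the last two coalescences) rather than your explicit $a_k,\tilde b_k$ bookkeeping, but the resulting inductive inequalities are the same.

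There are two genuine technical differences worth noting. First, you bound the empty-branch factor $\check{\mathbb{E}}_y[1/h_0(\check X_\ell)]\leq c_0\,v(y)/h_0(y)$ via the MET \eqref{eq:MET}; the paper instead uses the geometric drift for $\check P$ on $\check v\propto v/h_0$ (Lemma~\ref{lem:P_check_M-drift}) to get the stronger bound $\sup_\ell\check P_\ell(v/h_0)\leq c\,v/h_0$, which is what Lemma~\ref{lem:lam_D_Q_bound} packages. Second, and more interestingly, your terminal estimate treats the two branches symmetrically---you need a drift for $\check P$ on $v^{(1+\epsilon)/2}/h_0$, obtained via Cauchy--Schwarz on $Q(v^{(1+\epsilon)/2})$, and then square. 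The paper's route is asymmetric and avoids Cauchy--Schwarz entirely: Lemma~\ref{lem:M_drift_var} reads (H\ref{hyp_drift_var}) directly as a drift for $\check P$ on $v^*\propto v^{1+\epsilon}/h_0$, and in Lemma~\ref{lem:DQDQ_bound} one branch is handled with the $\check v$-drift while the other is bounded by $v/h_0\leq C v^*$ and then the $v^*$-drift. This is slightly cleaner, since it uses (H\ref{hyp_drift_var}) exactly as stated; your detour through the square root works but is a little roundabout.
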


This result of Proposition \ref{prop:Gamma_bound} connects the operators
$\left\{ \bar{\Gamma}_{n,x}^{(i_{1},...i_{s})}\right\} $ with expectations
of the Lyapunov functions $v$ and $v^{*}$ and the eigenfunction,
w.r.t.~the twisted chain. Given this result, one needs to control
the numerator and denominator. The latter can be achieved by the MET
of Theorem \ref{thm:MET} and the former via the following:

\begin{lem} \label{lem:E_check_prod_bounds}Assume (H\ref{hyp:irreducible})-(H\ref{hyp_drift_var}).
Then there exists $c<\infty$ depending only on the quantities in (H\ref{hyp:irreducible})-(H\ref{hyp_drift_var})
such that for any $n\geq1$, $1\leq s\leq n+1$, $(i_{1},...,i_{s})\in\mathcal{I}_{n,s}$,
\begin{eqnarray}
\check{\mathbb{E}}_{x}\left[\prod_{k\in\left\{ i_{1},\ldots,i_{s}\right\} }v\left(\check{X}_{k}\right)v^{*}\left(\check{X}_{n+1}\right)\right] & \leq & c^{s+1}v^{*}(x),\quad\forall x\in\mathsf{X},\label{eq:E_check_bounded_cs}
\end{eqnarray}
 where $v^{*}$ is as in \eqref{eq:V_star_defn}. \end{lem}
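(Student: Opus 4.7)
The plan is to prove the bound by induction on $s$, using a strong multiplicative drift for the twisted kernel $\check{P}$ as the key technical ingredient. As a first step I would establish a one-step drift inequality for $\check{P}$: combining the representation $\check{P}(x,dy)=\lambda^{-1}h_0^{-1}(x)Q(x,dy)h_0(y)$ from \eqref{eq:twisted_kernel_defn} with the identity $h_0 v^{*}=\|h_0\|_{v^{1+\epsilon}}v^{1+\epsilon}$ coming from \eqref{eq:V_star_defn}, I obtain
\[
\check{P}(v^{*})(x)=\frac{\|h_0\|_{v^{1+\epsilon}}}{\lambda\,h_0(x)}\,Q(v^{1+\epsilon})(x).
\]
Applying (H\ref{hyp_drift_var}) to the right-hand side yields $\check{P}(v^{*})(x)\le\alpha(x)\,v^{*}(x)$, with $\alpha(x)$ decaying in $v(x)$ off the sublevel set $C_d$ and bounded on $C_d$. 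A standard Foster--Lyapunov iteration then supplies the uniform bound $\sup_{m\ge 0}\check{P}^m(v^{*})(x)\le c\,v^{*}(x)$.

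The heart of the proof is the sub-lemma that a single inserted $v$-factor is absorbed into a universal constant after at least one step of $\check{P}$:
\[
\check{P}^m(v\cdot v^{*})(x)\le c\,v^{*}(x),\qquad m\ge 1,\ x\in\mathsf{X}.
\]
To prove this, I would re-express $\check{P}^m(vv^{*})(x)=\|h_0\|_{v^{1+\epsilon}}\lambda^{-m}h_0^{-1}(x)Q_m(v^{2+\epsilon})(x)$ and handle $Q_m(v^{2+\epsilon})$ by a H\"older-type interpolation: after one step, $Q$ maps $v^{2+\epsilon}$ into a function dominated by some $v^{a}$ with $a\le 1+\epsilon$, into which range (H\ref{hyp_drift_var}) applies, yielding $Q(v^{a})\le v^{\epsilon-\epsilon_0}e^{b_d^{*}\mathbb{I}_{C_d}}$; subsequent iterates stay controlled, and the $\lambda^{-m}$ prefactor is balanced by the MET asymptotic $\lambda^{-m}Q_m(1)(x)\asymp h_0(x)$ from Theorem \ref{thm:MET}. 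Given the sub-lemma, the inductive step for $s\ge 1$ is then straightforward: applying the Markov property at $i_1$,
\[
\check{\mathbb{E}}_x\Bigl[\prod_{k\in\{i_1,\ldots,i_s\}}v(\check{X}_k)\,v^{*}(\check{X}_{n+1})\Bigr]=\check{P}^{i_1}\bigl[v\cdot g\bigr](x),
\]
with $g(y):=\check{\mathbb{E}}_y[\prod_{k\in\{i_2-i_1,\ldots,i_s-i_1\}}v(\check{X}_k)v^{*}(\check{X}_{n+1-i_1})]\le c^{s}v^{*}(y)$ by the inductive hypothesis, after which the sub-lemma gives $\check{P}^{i_1}(v\cdot v^{*})(x)\le c\,v^{*}(x)$.

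The main obstacle is the sub-lemma itself, and specifically the boundary case $i_1=0$: there $v(\check{X}_0)=v(x)$ comes outside the expectation as a deterministic factor, and the naive induction fails to close into a bound $c^{s+1}v^{*}(x)$ because $v(x)$ is not uniformly dominated by $v^{*}(x)$. The remedy is to reorganise the induction so that the case $i_1=0$ is peeled off separately and handled by applying the sub-lemma at the first strictly positive insertion time $i_2\ge 1$, absorbing the deterministic $v(x)$ into the universal constant via the $Q$-step of the sub-lemma. A secondary technical point is the propagation of the $e^{b_d^{*}}$ factors from repeated applications of (H\ref{hyp_drift_var}); these are controlled using the eigenfunction lower bound $h_0(x)\ge\lambda^{-m_0}\epsilon_d\nu_d(h_0)$ on $C_d$, which follows from the minorization in (H\ref{hyp:drift}) together with the MET, and which provides the precise balance against the $\lambda^{-m}$ normalization in $\check{P}^m$.
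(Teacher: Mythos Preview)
Your high-level plan (establish a one-step drift for $\check{P}$ on $v^*$, iterate to get $\sup_{m\ge 0}\check{P}^m(v^*)\le c_1 v^*$, then induct through the coalescent times) matches the paper. The gap is in the formulation and proof of your sub-lemma.

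You place the stray $v$-factor \emph{inside} the expectation and try to prove $\check{P}^m(v\,v^*)(x)\le c\,v^*(x)$. As you note, this unwinds to controlling $Q_m(v^{2+\epsilon})$, and your ``H\"older-type interpolation'' claim that after one step $Q$ maps $v^{2+\epsilon}$ into something dominated by $v^a$ with $a\le 1+\epsilon$ is not supported by the hypotheses: (H\ref{hyp:drift}) gives a drift for $Q$ on $v$, and (H\ref{hyp_drift_var}) gives one on $v^{1+\epsilon}$, but neither says anything about $Q(v^{2+\epsilon})$, which may well be infinite. So the sub-lemma as stated is not provable from (H1)--(H5), and the induction built on it collapses. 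The $i_1=0$ difficulty you flag is really a symptom of this misformulation rather than a separate boundary case.

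The fix is to move the $v$-factor \emph{outside}. From (H\ref{hyp_drift_var}) one gets the one-step bound $\check{P}(v^*)\le (v^*/v)\,e^{\bar b_d\mathbb{I}_{C_d}}$, and combining this with the iterated bound $\check{P}^{m-1}(v^*)\le c_1 v^*$ yields
\[
v(x)\,\check{P}^m(v^*)(x)\;\le\; c\,v^*(x),\qquad m\ge 1.
\]
Now run the backward induction with the hypothesis carrying the extra $v$ on the left:
\[
v(x)\,\check{\mathbb{E}}_x\!\left[\prod_{k\in\{i_{j+1}-i_j,\ldots,i_s-i_j\}} v(\check{X}_k)\,v^*(\check{X}_{n+1-i_j})\right]\le c^{\,s+1-j}v^*(x).
\]
At the next rank the peeled-off factor $v(\check{X}_{i_j-i_{j-1}})$ becomes the outer $v(x')$ at the conditioning point, the inner expectation collapses to $c^{\,s+1-j}v^*(x')$ by the hypothesis, and the key bound $v(x)\check{P}_{i_j-i_{j-1}}(v^*)(x)\le c\,v^*(x)$ closes the step. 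The $i_1=0$ case then requires no special treatment: the factor $v(\check{X}_0)=v(x)$ is exactly the outer $v(x)$ in the rank-$1$ hypothesis.
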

We now proceed with the proof of Theorem \ref{thm:var_bound}.

\begin{proof}{[}Proof of Theorem \ref{thm:var_bound}{]} By Proposition
\ref{prop:Gamma_bound} and Lemma \ref{lem:E_check_prod_bounds} we
have that there exists a finite constant $c$ depending only on the
quantities in (H\ref{hyp:irreducible})-(H\ref{hyp_drift_var}) such
that 
\begin{eqnarray}
\bar{\Gamma}_{n,x}^{(i_{1},...i_{s})}\left(1\otimes1\right) & \leq & c^{s+1}\frac{v(x)}{h_{0}(x)}v^{*}(x)\frac{1}{\check{\mathbb{E}}_{x}\left[1/h_{0}\left(\check{X}_{n}\right)\right]^{2}}.\label{eq:thm_var_Gamma_bound1}
\end{eqnarray}
Using the fact that $\check{\mathbb{E}}_{x}\left[1/h_{0}\left(\check{X}_{n}\right)\right]=\gamma_{n,x}(1)/[\lambda^{n}h_{0}(x)]$
we appeal to \eqref{eq:MET} of the MET of Theorem \ref{thm:MET} as follows.
Without loss of generality, it can be assumed that $B_0>1$. Then for all $x\in\mathsf{X}$ 
\begin{align}
n\geq\left\lceil\frac{1}{B_1}\log\left[B_0^2\frac{v(x)}{h_0(x)}\right]\right\rceil && \Rightarrow && 1-B_0e^{-B_1 n}\frac{v(x)}{h_0(x)}\geq \frac{B_0-1}{B_0}&&\Rightarrow&&\check{\mathbb{E}}_{x}\left[1/h_{0}\left(\check{X}_{n}\right)\right]\geq \frac{B_0-1}{B_0}.\label{eq:phi_pre_defn}
\end{align}
Throughout the remainder of the proof the left-most inequality in \eqref{eq:phi_pre_defn} is assumed to hold. Then combining \eqref{eq:phi_pre_defn} with (\ref{eq:thm_var_Gamma_bound1}) and recalling
the definition of $v^{*}$ we have that there exists $c_{0}<\infty$
such that 
\begin{eqnarray*}
\bar{\Gamma}_{n,x}^{(i_{1},...i_{s})}\left(1\otimes1\right) & \leq & c_{0}c^{s+1}\frac{v^{2+\epsilon}(x)}{h_{0}^{2}(x)}.
\end{eqnarray*}
Proceeding by the essentially the same argument as in \citep[Proof of Theorem 5.1]{smc:the:CdMG11},
we use the identity: 
\[
\sum_{s=1}^{n+1}\sum_{\left(i_{1},...,i_{s}\right)\in\mathcal{I}_{n,s}}\prod_{j\in\left\{ i_{1},...,i_{s}\right\} }a_{j}=\left[\prod_{s=0}^{n}\left(1+a_{s}\right)\right]-1,
\]
which holds for any $n\geq1$ and $\left\{ a_{s};s\geq0\right\} $, to establish
via Theorem \ref{thm:cerou} that 
\begin{eqnarray*}
\mathbb{E}_{x}^{N}\left[\left(\frac{\gamma_{n,x}^{N}\left(1\right)}{\gamma_{n,x}(1)}-1\right)^{2}\right] & \leq & c_{0}c\frac{v^{2+\epsilon}(x)}{h_{0}^{2}(x)}\sum_{s=1}^{n+1}\left(1-\frac{1}{N}\right)^{\left(n+1\right)-s}\frac{1}{N^{s}}\sum_{\left(i_{1},...,i_{s}\right)\in\mathcal{I}_{n,s}}c^{s}\\
 & = & c_{0}c\frac{v^{2+\epsilon}(x)}{h_{0}^{2}(x)}\left(1-\frac{1}{N}\right)^{n+1}\left[\left(1+\frac{c}{N-1}\right)^{n+1}-1\right]\\
 & \leq & c_{0}c\frac{v^{2+\epsilon}(x)}{h_{0}^{2}(x)}\left[\left(1+\frac{c}{N-1}\right)^{n+1}-1\right].
\end{eqnarray*}
Then exactly as in \citep[Proof of Corollary 5.2]{smc:the:CdMG11},
\begin{align*}
N>1+c\left(n+1\right)&&\Rightarrow&&\left(1+\dfrac{c}{N-1}\right)^{n+1}-1  \leq\frac{2}{N-1}c\left(n+1\right)
\leq\frac{4}{N}c\left(n+1\right).
\end{align*}
This completes the proof. \end{proof}

\section{Examples\label{sec:Examples}}

This section gives some discussion and examples of circumstances in
which the assumptions can be satisfied. In particular we focus on
the drift assumption of (H\ref{hyp:drift}). It seems natural to consider
two general cases: those in which it is not assumed, or it is assumed, that the
Markov kernel $M$ itself satisfies a multiplicative drift condition.

\subsection{Cases without a multiplicative drift assumption on $M$}

In this situation, the decay of the potential function plays a key
role in establishing the multiplicative drift condition, illustrated
as follows.

\begin{lem} Assume that there exists $V:\mathsf{X}\rightarrow[1,\infty)$
unbounded such that $\interleave M\interleave_{v}<\infty$ and for all $d\geq1$, $C_d$ is $(1,\epsilon_d,\nu_d)$-small for $M$, with $\nu_d(C_d)>0$ and $M(C_d)(x)>0$ for all $x$.  If for all $d\geq1$, $\inf_{x\in C_{d}}U(x)>-\infty$,
and there exists $d_{1}$ such that $\sup_{x\in C_{d_{1}}}U(x)<\infty$
and for some $\delta_{1}\in(0,1)$, $\sup_{x\in C_{d_{1}}^{c}}U(x)/V(x)\leq-\delta_{1}$,
assumption (H\ref{hyp:drift}) is satisfied. \end{lem}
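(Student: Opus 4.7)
The plan is to take $m_{0}=1$ and verify the two bullets of (H\ref{hyp:drift}) directly from the operator-norm bound $c_{M}:=\interleave M\interleave_{v}<\infty$. Since $M(e^{V})\leq c_{M}e^{V}$ pointwise,
\begin{equation*}
\log Q(e^{V})(x)=U(x)+\log M(e^{V})(x)\leq U(x)+V(x)+\log c_{M}.
\end{equation*}
On $C_{d_{1}}^{c}$ the decay assumption $U(x)\leq -\delta_{1}V(x)$ yields $\log Q(e^{V})(x)\leq (1-\delta_{1})V(x)+\log c_{M}$. I would then set $\delta:=\delta_{1}/2$ and choose $\underline{d}\geq\max(d_{1},2\delta_{1}^{-1}\log c_{M})$, so that the additive constant is absorbed into $(\delta_{1}/2)V(x)$ whenever $V(x)>\underline{d}$; this gives the required drift bound $Q(e^{V})\leq e^{(1-\delta)V}$ on $C_{\underline{d}}^{c}\supseteq C_{d}^{c}$ for every $d\geq\underline{d}$.

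Inside $C_{d}$, for any $d\geq\underline{d}$, the key observation is that $U\leq 0$ on $C_{d}\setminus C_{d_{1}}$ by the decay hypothesis (since $V>d_{1}\geq 1$ there) while $\sup_{C_{d_{1}}}U<\infty$ by assumption, so $\sup_{x\in C_{d}}U(x)\leq\max(\sup_{C_{d_{1}}}U,0)<\infty$. Taking $b_{d}:=\sup_{C_{d}}U+\delta d+\log c_{M}$ then forces $U(x)+\delta V(x)+\log c_{M}\leq b_{d}$ for every $x\in C_{d}$, which is exactly the inequality required to complete the verification of \eqref{eq:A_drift}.

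For the small-set condition I would simply multiply the $M$-minorization by the strictly positive factor $e^{U(x)}$: with $U_{d}^{-}:=\inf_{x\in C_{d}}U(x)>-\infty$,
\begin{equation*}
Q(x,dy)=e^{U(x)}M(x,dy)\geq \mathbb{I}_{C_{d}}(x)\,e^{U_{d}^{-}}\epsilon_{d}\,\nu_{d}(dy),
\end{equation*}
so \eqref{eq:A_minor} holds with $m_{0}=1$, the same $\nu_{d}$, and $\epsilon_{d}^{Q}:=\min(1,e^{U_{d}^{-}}\epsilon_{d})\in(0,1]$. The property $\nu_{d}(C_{d})>0$ is inherited from the hypothesis, and $\nu_{d}\in\mathcal{P}_{v}$ follows from the chain $\epsilon_{d}\nu_{d}(v)\leq M(v)(x)\leq c_{M}e^{d}$ at any $x\in C_{d}$. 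Finally $Q(x,C_{d})=e^{U(x)}M(x,C_{d})>0$ for every $x\in\mathsf{X}$ since $U(x)\in\mathbb{R}$ and $M(x,C_{d})>0$ by hypothesis.

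The main obstacle is purely bookkeeping: I need to select $\underline{d}$ large enough that the operator-norm constant $\log c_{M}$ is dominated by the decay of $U$ outside $C_{d_{1}}$, and I must carefully combine the one-sided bounds $\sup_{C_{d_{1}}}U<\infty$ and $U\leq 0$ on $C_{d_{1}}^{c}$ to ensure $\sup_{C_{d}}U$ (and hence $b_{d}$) stays finite for every $d\geq\underline{d}$. No spectral, ergodic, or iterative ($m_{0}>1$) argument is needed; the proof reduces to combining the operator norm of $M$ with the sign structure of $U$.
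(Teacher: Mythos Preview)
Your argument is correct and follows the same approach as the paper's proof: bound $Q(e^{V})$ pointwise by $\exp(V+U+\log\interleave M\interleave_{v})$, use the decay $U\leq -\delta_{1}V$ outside $C_{d_{1}}$ to absorb the additive constant into the drift rate, bound $U$ above on each $C_{d}$ to obtain $b_{d}$, and obtain the minorization for $Q$ from that for $M$ via the lower bound on $U$ over $C_{d}$. Your version is simply more explicit about the constants (the choice $\delta=\delta_{1}/2$, $\underline{d}\geq\max(d_{1},2\delta_{1}^{-1}\log c_{M})$, and the truncation $\epsilon_{d}^{Q}=\min(1,e^{U_{d}^{-}}\epsilon_{d})$) and more careful in verifying $\nu_{d}\in\mathcal{P}_{v}$, which the paper leaves implicit.
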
 \begin{proof}
We have 
\[
Q\left(e^{V}\right)(x)\leq\exp\left(V(x)+U(x)+\log\interleave M\interleave_{v}\right),\quad\forall x\in\mathsf{X}.
\]
As $V$ is unbounded, for any $\delta\in(0,\delta_{1})$ there exists
$\underline{d}$ large enough such that for all $x\in\mathsf{X}$
and $d\geq\underline{d}$, 
\[
\mathbb{I}_{C_{d}^{c}}(x)Q\left(e^{V}\right)(x)\leq\exp\left(V(x)(1-\delta)\right),\quad\mathbb{I}_{C_{d}}(x)Q\left(e^{V}\right)(x)\leq\exp\left(d+\sup_{y\in C_{d}}U(y)+\log\interleave M\interleave_{v}\right),
\]
which is enough to verify the drift part of (A2). The minorization condition with $m_0=1$ and the $Q(C_d)(x)>0$ part are direct as $U(x)$ is bounded below on $C_{d}$. \end{proof} 

In the extensive literature on Lyapunov drift for Markov kernels there are several conditions which immediately guarantee the existence of $v$ such that $\interleave M\interleave_{v}<\infty$. For example, any $M$ satisfying the polynomial drift condition of \cite{jarner2002polynomial} automatically satisfies $\interleave M\interleave_{v}<\infty$ for the same $v$ up to a factor of $e$.  However, ergodicity of $M$ is not necessary, as illustrated in the following simple example.

\subsubsection{Gaussian Random Walk}

Let $\mathsf{X}:=\mathbb{R}$ and $U$ and $M$ be defined by 
\[
U(x):=-x^{2},\quad M(x,dy):=\dfrac{1}{\sqrt{2\pi}}\exp\left(-\dfrac{\left(y-x\right)^{2}}{2}\right)dy,
\]
 where $dy$ denotes Lebesgue measure. Taking $\psi$ as Lebesgue measure,
the $\psi$-irreducibility and aperiodicity of $\left\{ Q_{n};n\geq1\right\} $
is immediate. For the drift and minorization conditions of (H\ref{hyp:drift}),
elementary manipulations show that equation \eqref{eq:A_drift} holds
with $V(x)=x^{2}/\left(2\left(1+\delta_{0}\right)\right)+1$ for suitable
$\delta_{0}>0$ and solutions of the minorization condition \eqref{eq:A_minor}
are also easily obtained. Condition (H\ref{hyp:U+_in_V}) is trivially
satisfied because $U$ is non-positive. The density assumption (H\ref{hyp:density})
is satisfied with $\beta_{d}$ proportional to the restriction of Lebesgue measure
to $C_{d}$. Assumption (H\ref{hyp_drift_var}) holds for $\epsilon$
small enough and $\epsilon_{0}=\epsilon/2$.

It is generally not easy to obtain or estimate values for the constants in Theorem \ref{thm:var_bound}. In all the numerical examples which follow, we consider a fixed value of $N$ and consider the relative variance as a function of the $n$ and the initial condition $x$.

The numerical results of Figure \ref{fig:toy_sims} show estimates
of $\mathbb{E}_{x}^{N}\left[\left(\dfrac{\gamma_{n,x}^{N}\left(1\right)}{\gamma_{n,x}(1)}-1\right)^{2}\right]$
with fixed $N=2000$, for various $x$ and $n$, with in each case the expectation
approximated by averaging over $2\times10^{4}$ independent simulations
of the particle system. For this model $\gamma_{n,x}(1)$ can be computed
analytically, and this exact value was used in the estimates. The
linear growth of the relative variance and its dependence on the initial
point $x$ is apparent from the figure.

\begin{figure}
\centering\includegraphics[width=0.95\textwidth]{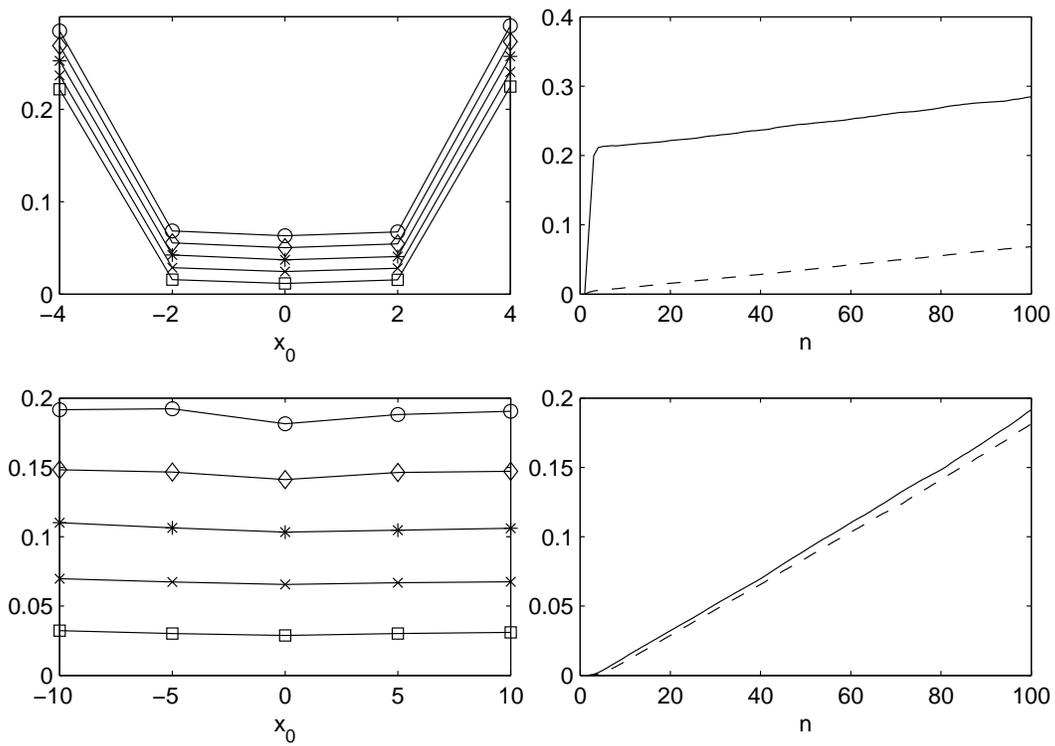} \caption{Top: Gaussian random walk model. Bottom: ergodic autoregression model.
Left: Relative variance vs. initial condition $x_{0}$, at times $\square,n=20$;
$\times,n=40$; $*,n=60$; $\diamond,n=80$; $\circ,n=100$. Right:
Relative variance vs. $n$, from initial conditions (dashed) $x_{0}=0$,
(solid - top) $x_{0}=4$, (solid - bottom) $x_{0}=10$.}

\label{fig:toy_sims} 
\end{figure}

\subsection{Cases with a multiplicative drift assumption on $M$}

The following Lemma shows that condition (H\ref{hyp:drift}) holds
for suitable $U$ when $M$ itself satisfies a multiplicative drift
condition. \begin{lem}\label{lem:M_drift_to_Q_drift} Assume that
there exists $V:\mathsf{X}\rightarrow[1,\infty)$ unbounded, $\delta_{1}>0$,
$d_{1}\geq1$ and for each $d\geq d_{1}$ there exists $b_{d}<\infty$
such that 
\begin{eqnarray}
M\left(e^{V}\right) & \leq & e^{V\left(1-\delta_{1}\right)+b_{d}\mathbb{I}_{C_{d}}},\label{eq:M_mult_drift}
\end{eqnarray}
 and the set $C_{d}=\left\{ x;V(x)\leq d\right\} $ is $(1,\epsilon_d,\nu_d)$-small for $M$, with $\nu_d(C_d)>0$ and $M(C_d)(x)>0$ for all $x$. Then if $U^{+}\in\mathcal{L}_{V}$, $\lim_{r\rightarrow\infty}\left\Vert \mathbb{I}_{C_{r}^{c}}U^{+}\right\Vert _{V}=0$
and for all finite $d$, $\inf_{x\in C_{d}}U(x)>-\infty$,  assumption
(H\ref{hyp:drift}) holds.\end{lem}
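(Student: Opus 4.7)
The plan is to verify both parts of (H\ref{hyp:drift}) for $Q$ by taking $m_0 = 1$, retaining the weighting function $V$ supplied in the hypothesis, and inheriting the minorizing measures $\nu_d$ from the small-set property of $M$, with only a rescaling of the constants $\epsilon_d$. The minorization piece follows quickly from the pointwise bound $e^{U(x)} \geq \exp(\inf_{C_d}U)$ on $C_d$; the drift piece is where the growth assumption $\lim_{r\to\infty}\|\mathbb{I}_{C_r^c}U^+\|_V = 0$ enters in an essential way.

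For the minorization, I would combine $\inf_{x\in C_d}U(x) > -\infty$ with the small-set property of $M$ to write, for every $x\in C_d$,
$$Q(x,\cdot) = e^{U(x)}M(x,\cdot) \geq \exp(\inf_{C_d}U)\,\epsilon_d\,\nu_d(\cdot),$$
which gives the $(1,\epsilon_d',\nu_d)$-small-set condition for $Q$ after shrinking $\epsilon_d'$ if necessary to ensure $\epsilon_d' \in (0,1]$. The carried-over conditions $\nu_d(C_d)>0$ and $Q(C_d)(x)>0$ for all $x$ are immediate, the latter since $e^{U(x)}>0$. I would verify $\nu_d \in \mathcal{P}_v$ by fixing any $x\in C_d$ and using $\epsilon_d\nu_d \leq M(x,\cdot)$ to bound $\nu_d(v) \leq \epsilon_d^{-1}M(v)(x) \leq \epsilon_d^{-1}e^{d(1-\delta_1)+b_d} < \infty$ via the drift on $M$ in \eqref{eq:M_mult_drift}.

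The drift for $Q$ is the main step. Starting from $Q(e^V)(x) \leq e^{U^+(x)}M(e^V)(x) \leq e^{U^+(x) + V(x)(1-\delta_1) + b_d\mathbb{I}_{C_d}(x)}$, I would pick $\delta \in (0,\delta_1)$ and then use the vanishing of $\|\mathbb{I}_{C_r^c}U^+\|_V$ to choose $\underline{d}$ large enough so that $\|\mathbb{I}_{C_d^c}U^+\|_V \leq \delta_1-\delta$ for all $d\geq\underline{d}$. Off $C_d$ this forces $U^+(x) + V(x)(1-\delta_1) \leq V(x)(1-\delta)$, giving the desired exponent with no additive correction. On $C_d$, both $V$ and $U^+$ are uniformly bounded (by $d$ and $\|U^+\|_V d$ respectively), so the excess of $U^+(x) + V(x)(1-\delta_1)$ over $V(x)(1-\delta)$ is dominated by a finite constant $b_d'$, which can be absorbed into the $\mathbb{I}_{C_d}$ term. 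The main obstacle is that the contribution of $U^+$ must be strictly smaller than the drift rate $\delta_1$ at infinity; the hypothesis $\|\mathbb{I}_{C_r^c}U^+\|_V\to 0$ supplies exactly this margin, and without it the inherited drift exponent $V(1-\delta_1)$ would not be strong enough to dominate $U^+$ uniformly outside a sublevel set.
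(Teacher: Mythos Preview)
Your proposal is correct and follows essentially the same route as the paper's proof: both obtain the minorization for $Q$ directly from $\inf_{C_d}U>-\infty$ and the small-set property of $M$, and both establish the drift by writing $Q(e^V)\leq e^{U^+ + V(1-\delta_1)+b_d\mathbb{I}_{C_d}}$, splitting off a margin $(\delta_1-\delta)V$, and using $\lim_{r\to\infty}\|\mathbb{I}_{C_r^c}U^+\|_V=0$ to absorb $U^+$ outside $C_d$. Your version is in fact slightly more thorough, as you explicitly verify $\nu_d\in\mathcal{P}_v$ and attend to the constraint $\epsilon_d'\in(0,1]$, neither of which the paper spells out.
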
 \begin{proof} Due to the drift
condition (\ref{eq:M_mult_drift}), for any $\delta\in(0,\delta_{1})$,
\begin{eqnarray*}
Q\left(e^{V}\right) & \leq & \exp\left(V\left(1-\delta\right)-\left(\delta_{1}-\delta\right)V+U^{+}+b_{d}\mathbb{I}_{C_{d}}\right),
\end{eqnarray*}
 and due to $\lim_{r\rightarrow\infty}\left\Vert \mathbb{I}_{C_{r}^{c}}U^{+}\right\Vert _{V}=0$,
there exists $\underline{d}$ such that for all $d\geq\underline{d}$,
\[
Q\left(e^{V}\right)\leq\exp\left(V\left(1-\delta\right)+\bar{b}_{d}\mathbb{I}_{C_{d}}\right),
\]
 where $\bar{b}_{d}:=b_{d}+d\left\Vert U^{+}\right\Vert _{V}$, which
verifies the drift part of (H\ref{hyp:drift}). The minorization condition with $m_0=1$ and $Q(C_d)(x)>0$ part are direct as $U(x)$ is bounded below on $C_{d}$. \end{proof}

\subsubsection{Ergodic Autoregression}

Let $\mathsf{X}:=\mathbb{R}$ and $U$ and $M$ be defined by 
\[
U(x):=|x|,\quad M(x,dy):=\dfrac{1}{\sqrt{2\pi}}\exp\left(-\dfrac{\left(y-\alpha x\right)^{2}}{2}\right)dy,
\] for fixed $\left|\alpha\right|<1$. Elementary manipulations then
show that, for $\delta_{0}>0$ and $\underline{d}$ large enough,
$M$ satisfies (\ref{eq:M_mult_drift}) with $V(x)=x^{2}/\left(2\left(1+\delta_{0}\right)\right)+1$.
As per the random walk example, $M$ readily admits minorization on the sublevel sets $C_{d}$.

The potential function
$U$ clearly satisfies (H\ref{hyp:U+_in_V}). Lemma
\ref{lem:M_drift_to_Q_drift} shows that (H\ref{hyp:drift}) is satisfied.
The density assumption (H\ref{hyp:density}) is satisfied for $\beta_{d}$
proportional to Lebesgue measure restricted to $C_{d}$. Again it
is straightforward to check that (H\ref{hyp_drift_var}) is satisfied
for $\epsilon>0$ small enough and $\epsilon_{0}=\epsilon/2$.

Figure \ref{fig:toy_sims} also shows estimates of the relative variance
obtained by simulation for this model with $\alpha=0.4$ and using
$N=10^{4}$ particles, averaged over $10^{4}$ independent realizations.
Again the linear growth of the variance is apparent, but there appears
to be less variation with respect to the initial condition than in
the random walk example.

\subsubsection{Cox-Ingersoll-Ross Process}\label{sec:CIR}

The Cox-Ingersoll-Ross (CIR) process, \citep{cox1985theory}, is a
diffusion process that is typically used in financial applications
to capture mean-reverting behaviour and state-dependent volatility, which is thought to occur in many real scenarios. The process is defined via the stochastic
differential equation: 
\[
dX_{t}=\theta\left(\mu-X_{t}\right)dt+\sigma\sqrt{X_{t}}dW_{t}
\]
where $\{W_{t}\}$ is standard Brownian motion, $\theta>0$
is the mean-reversion rate, $\mu>0$ is the level of mean-reversion
and $\sigma>0$ is the volatility. We assume that $\dfrac{2\theta\mu}{\sigma^{2}}>1$
so that the process is stationary and never touches zero.

Throughout the remainder of section \ref{sec:CIR}, for $\Delta>0$ we denote by $M^{\Delta}$ the transition probability from any time $t$ to $t+\Delta$ of the CIR process with parameters $\theta,\mu,\sigma$. The following lemma identifies a drift function for $M^{\Delta}$, exhibiting a trade-off between growth rate
of the drift function specified by a parameter $s$, the parameters of the CIR process and the time step size $\Delta$.

\begin{lem} For $s>0$ and $\Delta>0$, consider the candidate drift function $V:\mathbb{R}_{+}\rightarrow[1,\infty)$,
defined by 
\begin{equation}
V(x):=1+\frac{4\theta sx}{\sigma^{2}\left(1-e^{-\theta\Delta}\right)}.\label{eq:CIR_drift}
\end{equation}
Then subject to the conditions: 
\begin{equation}
s\in\left(0,\frac{1-e^{-\theta\Delta}}{2}\right),\quad\delta\in\left(0,1-\frac{e^{-\theta\Delta}}{1-2s}\right),\quad d\geq\frac{1-2\theta\mu\log\left(1-2s\right)/\sigma^{2}}{1-e^{-\theta\Delta}/\left(1-2s\right)-\delta}=:\underline{d},\label{eq:CIR_drift_params}
\end{equation}
the following multiplicative drift condition is satisfied: 
\begin{eqnarray*}
M^\Delta\left(e^{V}\right) & \leq & e^{V(1-\delta)+b_{d}\mathbb{I}_{C_{d}}},
\end{eqnarray*}
with $V$ as in (\ref{eq:CIR_drift}) and $b_{d}:=\dfrac{de^{-\theta\Delta}}{1-2s}-\dfrac{2\theta\mu}{\sigma^{2}}\log\left(1-2s\right)+1$.\label{lem:cir}\end{lem}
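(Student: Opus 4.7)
The plan is to exploit the well-known explicit transition law of the CIR process: conditional on $X_t=x$, the random variable $cX_{t+\Delta}$ with $c:=4\theta/(\sigma^{2}(1-e^{-\theta\Delta}))$ is noncentral chi-squared distributed with $k:=4\theta\mu/\sigma^{2}$ degrees of freedom and noncentrality parameter $\lambda(x):=cxe^{-\theta\Delta}$ (see, e.g., \cite{cox1985theory}). Noting that the proposed drift function is $V(x)=1+scx$, we have $e^{V(x)}=e\cdot e^{scx}$, so the computation of $M^{\Delta}(e^{V})(x)$ reduces to evaluating the moment generating function of a noncentral chi-squared random variable at the point $s$.

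The first step is to recall that the noncentral chi-squared moment generating function is $t\mapsto (1-2t)^{-k/2}\exp(\lambda t/(1-2t))$, finite whenever $t<1/2$. The restriction $s\in(0,(1-e^{-\theta\Delta})/2)$ ensures both $2s<1$ and (crucially for the drift argument) that $e^{-\theta\Delta}/(1-2s)<1$. Substituting $t=s$ and $\lambda=cxe^{-\theta\Delta}$ and multiplying by $e$ I obtain
\begin{equation*}
M^{\Delta}(e^{V})(x)=\exp\!\left(1+\frac{sce^{-\theta\Delta}x}{1-2s}-\frac{2\theta\mu}{\sigma^{2}}\log(1-2s)\right).
\end{equation*}
Rewriting the linear term in $x$ using $scx=V(x)-1$ yields
\begin{equation*}
M^{\Delta}(e^{V})(x)=\exp\!\left(V(x)\cdot\frac{e^{-\theta\Delta}}{1-2s}+1-\frac{e^{-\theta\Delta}}{1-2s}-\frac{2\theta\mu}{\sigma^{2}}\log(1-2s)\right).
\end{equation*}

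The second step is to compare the exponent above with $V(x)(1-\delta)+b_{d}\mathbb{I}_{C_{d}}(x)$. On $C_{d}^{c}$, where $V(x)>d$, the coefficient $(1-\delta)-e^{-\theta\Delta}/(1-2s)$ is strictly positive by the assumed range of $\delta$, so the gap $V(x)\bigl[(1-\delta)-e^{-\theta\Delta}/(1-2s)\bigr]$ grows linearly with $d$ and dominates the constant shift $1-e^{-\theta\Delta}/(1-2s)-(2\theta\mu/\sigma^{2})\log(1-2s)$ exactly when $d\geq\underline{d}$, the lower bound stated in (\ref{eq:CIR_drift_params}). A direct rearrangement of this inequality is the content of the definition of $\underline{d}$; this is where the conditions on $s$, $\delta$ and $d$ are simultaneously used. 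On $C_{d}$, we just bound $V(x)\leq d$ in the positive linear term while using $(1-\delta)V(x)\geq 0$ on the right; this furnishes $b_{d}=de^{-\theta\Delta}/(1-2s)-(2\theta\mu/\sigma^{2})\log(1-2s)+1$ as in the statement, which is finite since $s<1/2$.

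The only genuinely delicate step is step two: one must verify that the particular $\underline{d}$ appearing in (\ref{eq:CIR_drift_params}) is large enough for the drift to hold uniformly on $C_{d}^{c}$, which forces a simultaneous juggling of the three parameters $s$, $\delta$ and $d$. The nested constraints in (\ref{eq:CIR_drift_params}) are precisely what is required: $s<(1-e^{-\theta\Delta})/2$ guarantees $e^{-\theta\Delta}/(1-2s)<1$ (a necessary condition for any drift), then $\delta<1-e^{-\theta\Delta}/(1-2s)$ guarantees a strictly positive contraction rate, and finally $d\geq\underline{d}$ absorbs the additive constant arising from the log term in the MGF. Everything else is straightforward algebra.
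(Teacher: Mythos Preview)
Your proof is correct and follows essentially the same approach as the paper: both exploit the noncentral chi-squared transition law of the CIR process to evaluate $M^{\Delta}(e^{V})$ via the MGF, then split into the cases $x\in C_d$ and $x\notin C_d$ to verify the drift inequality. Your version is in fact slightly sharper in that you keep the exact constant $-e^{-\theta\Delta}/(1-2s)$ when rewriting $scx=V(x)-1$, whereas the paper simply drops this negative term; this makes no difference to the conclusion.
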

\begin{proof} For $t\geq 0$ define 
\[
c_{t}:=\frac{2\theta}{\sigma^{2}\left(1-e^{-\theta t}\right)},\quad\kappa:=\frac{4\theta\mu}{\sigma^{2}},
\]
and the scaled process $Z_{t}:=2c_{t}X_{t}$. Conditional on $X_{0}=x$,
$Z_{t}$ has a non-central chi-square distribution with degree of
freedom $\kappa$ and non-centrality parameter taking the value $2c_{t}xe^{-\theta t}$ \citep{cox1985theory}.
We then have for any $x\in\mathsf{X}$,
\begin{eqnarray*}
M^{\Delta}\left(e^{V}\right)(x) & = & \mathbb{E}_{x}\left[\exp\left(sZ_{\Delta}\right)\right]\exp(1)\\
 & = & \exp\left[2c_{\Delta}xs\left(\frac{e^{-\theta\Delta}}{1-2s}\right)-\frac{\kappa}{2}\log\left(1-2s\right)+1\right]\\
 & \leq & \exp\left[V(x)\left(\frac{e^{-\theta\Delta}}{1-2s}\right)-\frac{\kappa}{2}\log\left(1-2s\right)+1\right].
\end{eqnarray*}
where the equalities hold due to the existence of the moment generating
function $\mathbb{E}_{x}\left[\exp\left(sZ_{t}\right)\right]$, for
$s<1/2$, which is satisfied under the conditions of (\ref{eq:CIR_drift_params}).
Under these conditions we also then have for $d\geq\underline{d}$ and $x\notin C_{d}$, 
\begin{eqnarray*}
M^\Delta\left(e^{V}\right)(x) & \leq & \exp\left[V(x)\left(1-\delta\right)-d\left(1-\frac{e^{-\theta\Delta}}{1-2s}-\delta\right)-\frac{\kappa}{2}\log\left(1-2s\right)+1\right]\\
 & \leq & \exp\left[V(x)\left(1-\delta\right)\right],
\end{eqnarray*}
and for $x \in C_{d}$,
\begin{eqnarray*}
M\left(e^{V}\right)(x) & \leq & \exp\left[d\left(\frac{e^{-\theta\Delta}}{1-2s}\right)-\frac{\kappa}{2}\log\left(1-2s\right)+1\right]=\exp\left(b_{d}\right).
\end{eqnarray*}
\end{proof}
We will consider as an example the case where the Markov chain $\{X_n\}$ is the skeleton of the CIR process over a discrete time grid of spacing $\Delta$ and $U(x):=\alpha\log x$
for some fixed $\alpha$. Lemmata \ref{lem:M_drift_to_Q_drift}
and \ref{lem:cir} establish that (H\ref{hyp:drift})-(H\ref{hyp:U+_in_V}) are satisfied
and one can check (H\ref{hyp:density})-(H\ref{hyp_drift_var}) are
satisfied similarly to the previous example.

Figure \ref{fig:cir} displays estimates of the relative variance
for this model, computed via simulation, when $\Delta=0.01$, (i.e. $M \equiv M^{0.01}$), $\alpha=0.01$,  $\theta=10$, $\mu=1$, and $\sigma=0.1$.
This was obtained using $N=10^{3}$ particles, averaged over $3\times10^{3}$
independent realizations. Again the linear growth of the relative
variance is present for different initial conditions. Note one may
interpret $\gamma_{100,x}(1)$ as the
geometric mean $\mathbb{E}_{x}[\prod_{k=0}^{99}X_{k}^{1/100}]$, which can be used for prediction in a variety of financial applications. 

\begin{figure}
\centering \includegraphics[width=0.7\textwidth]{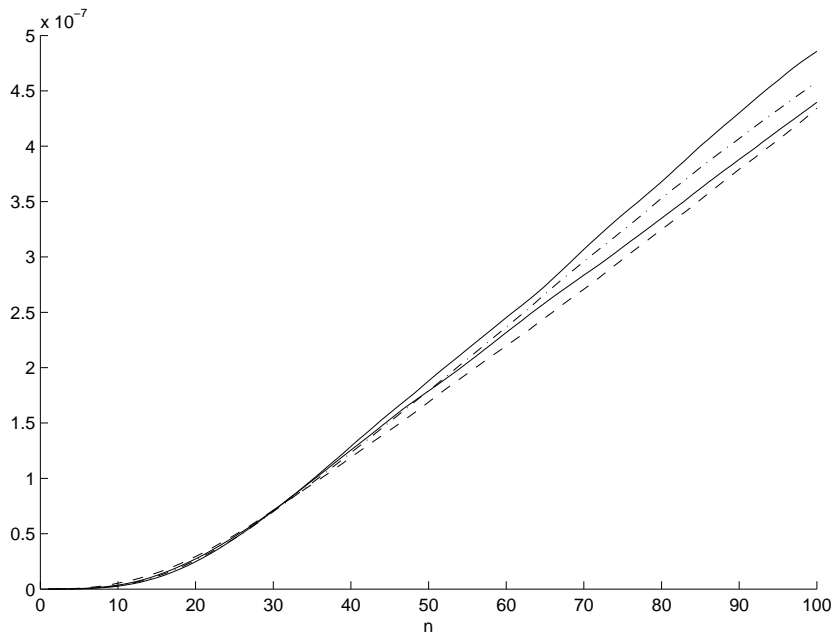}
\caption{Cox-Ingersoll-Ross process. Relative variance vs. $n$, from initial conditions $x_{0}=0.1$  (dashed),
$x_{0}=1$ (solid - bottom), $x_{0}=3$ (dot-dashed), $x_{0}=10$ (solid top).}
\label{fig:cir} 
\end{figure}

\section{Summary}

In this paper we have established a linear-in-$n$ bound on
the non-asymptotic variance associated with particle approximations
of time-homogeneous Feynman-Kac formulae, under assumptions that can
be verified on non-compact state-spaces. 

There are several possible extensions to this work. Firstly, to consider non-homogeneous
Feynman-Kac formulae, which occur routinely in applications such
as filtering and Bayesian statistics. Secondly, an important developing area in the analysis of sequential
Monte Carlo methods is the case when the dimension of the state-space
can be very large \citep{beskos}. Such analysis has relied on classical
geometric drift conditions and it would be interesting to consider the role of multiplicative drift conditions in this context.

\subsubsection*{Acknowledgements}

We would like to thank the associate editor and the referee for some very useful comments that have lead to considerable improvements in the paper.
The first and third authors acknowledge the assistance of the London
Mathematical Society for their funding, via a research in pairs grant.
The second author was supported by the EPSRC programme grant on Control
For Energy and Sustainability EP/G066477/1.

\appendix

\section{Proofs and Auxiliary Results for Section \ref{sec:Multiplicative-ergodicity}}

\label{app:MET_proofs}

\begin{proof}{[}Proof of Lemma \ref{lem:spr_bounded_below}{]} Fix
any $d\geq\underline{d}$. The upper bound of (\ref{eq:FK_well-defined})
is an immediate consequence of the inequality $Q\left(e^{V}\right)/e^{V}\leq e^{b_{d}}$,
implied by (\ref{eq:A_drift}).

For the upper bound of (\ref{eq:spr_lowerbound}), use the standard
inequality $\xi_{v}\left(Q\right)\leq\interleave Q\interleave_{v}$
and then also due to the drift condition in (\ref{eq:A_drift}), $\interleave Q\interleave_{v}<\infty$.
Now consider the lower bound. It is claimed that for any $k\geq3$
and $1\leq j\leq k-1$, 
\begin{eqnarray}
Q_{km_{0}}\left(e^{V}\right)(x) & \geq & Q_{\left(k-j\right)m_{0}}\left(\mathbb{I}_{C_{d}}\right)(x)\epsilon_{d}^{j}\nu_{d}\left(C_{d}\right)^{j-1}\nu_{d}\left(e^{V}\right),\quad\forall x\in\mathsf{X},\label{eq:Q_n_recurse}
\end{eqnarray}
 where $m_{0}$ is as in (H\ref{hyp:drift}). For each $k$, the claim
is verified by induction in $j$; fix $k\geq3$ arbitrarily. For $j=1$,
\[
Q_{km_{0}}\left(e^{V}\right)(x)\geq Q_{\left(k-1\right)m_{0}}\left(\mathbb{I}_{C_{d}}Q_{m_0}\left(e^{V}\right)\right)(x)\geq Q_{\left(k-1\right)m_{0}}\left(\mathbb{I}_{C_{d}}\right)(x)\epsilon_{d}\nu_{d}\left(e^{V}\right)
\]
 which initializes the induction. Now assume that (\ref{eq:Q_n_recurse})
holds at rank $1\leq j<k-1$. Then at rank $j+1$, applying the induction
hypothesis 
\begin{eqnarray*}
Q_{km_{0}}\left(e^{V}\right)(x) & \geq & Q_{\left(k-j-1\right)m_{0}}\left(\mathbb{I}_{C_{d}}Q_{m_0}\left(\mathbb{I}_{C_{d}}\right)\right)(x)\epsilon_{d}^{j}\nu_{d}\left(C_{d}\right)^{j-1}\nu_{d}\left(e^{V}\right)\\
 & \geq & Q_{\left(k-j-1\right)m_{0}}\left(\mathbb{I}_{C_{d}}\right)(x)\epsilon_{d}^{j+1}\nu_{d}\left(C_{d}\right)^{j}\nu_{d}\left(e^{V}\right),\quad\forall x\in\mathsf{X},
\end{eqnarray*}
 where \eqref{eq:A_minor} has been applied, thus the claim is verified.

Now applying (\ref{eq:Q_n_recurse}) with $j=k-1$ gives, 
\begin{eqnarray*}
\frac{Q_{km_{0}}\left(e^{V}\right)(x)}{e^{V(x)}} & \geq & \frac{Q_{m_{0}}\left(\mathbb{I}_{C_{d}}\right)(x)}{e^{V(x)}}\epsilon_{d}^{k-1}\nu_{d}\left(C_{d}\right)^{k-2}\nu_{d}\left(e^{V}\right)>0,\quad\forall x\in\mathsf{X},
\end{eqnarray*}
 which implies that 
\[
\interleave Q_{km_{0}}\interleave_{v}^{1/\left(km_{0}\right)}\geq\epsilon_{d}^{1-1/\left(km_{0}\right)}\nu_{d}\left(C_{d}\right)^{1-2/\left(km_{0}\right)}\nu_{d}\left(e^{V}\right)^{1/\left(km_{0}\right)}\left[\sup_{x\in\mathsf{X}}\frac{Q\left(\mathbb{I}_{C_{d}}\right)(x)}{e^{V(x)}}\right]^{1/\left(km_{0}\right)}.
\]
 Taking $k\rightarrow\infty$ is enough to verify (\ref{eq:spr_lowerbound}),
as $\lim_{n\rightarrow\infty}\interleave Q^{n}\interleave_{v}^{1/n}$
always exists by subadditivity. \end{proof}

\begin{proof}{[}Proof of Lemma \ref{lem:Q_sep}{]} Set $r\geq\underline{d}$
arbitrarily and let $\widehat{Q}^{(r)}:=\mathbb{I}_{C_{r}}Q$. For $n\geq1$, denote by $\widehat{Q}^{(r)}_n$ the $n$-fold iterate of $\widehat{Q}^{(r)}$. 

Then
under (H\ref{hyp:U+_in_V}), 
\begin{eqnarray*}
\widehat{Q}^{(r)}_{t_{0}+1}(x,A) & = & \mathbb{E}_{x}\left[\prod_{n=0}^{t_{0}}\mathbb{I}_{C_{r}}\left(X_{n}\right)\exp\left(U\left(X_{n}\right)\right)\mathbb{I}_{A}\left(X_{t_{0}+1}\right)\right]\\
 & \leq & \exp\left(rt_{0}\left\Vert U^{+}\right\Vert _{V}\right)\mathbb{E}_{x}\left[\prod_{n=0}^{t_{0}}\mathbb{I}_{C_{r}}\left(X_{n}\right)\exp\left(U\left(X_{t_{0}}\right)\right)\mathbb{I}_{A}\left(X_{t_{0}+1}\right)\right],\quad\forall x\in\mathsf{X},A\in\mathcal{B}\mathsf{\left(X\right)},
\end{eqnarray*}
 and therefore under (H\ref{hyp:density}),
\begin{equation}
\widehat{Q}^{(r)}_{t_{0}+1}(x,A)\leq\beta_{r}^{*}(A):=\exp\left(rt_{0}\left\Vert U^{+}\right\Vert _{V}\right)\int_{C_{r}}\beta_{r}(dy)Q(y,A),\quad\forall x\in\mathsf{X},A\in\mathcal{B}\mathsf{\left(X\right)}.\label{eq:Q_r_maj}
\end{equation}
 Lemma B3 of \citep{mc:the:KM05} then implies that $\widehat{Q}^{(r)}_{2t_{0}+2}$
is $v$-separable. 

In order to establish that $Q_{2t_{0}+2}$ is $v$-separable, we will prove that $\interleave Q_{2t_0+2}-\widehat{Q}^{(r)}_{2t_0+2}\interleave_v$ can be made arbitrarily small through suitable choice of $r$. By decomposing the difference $Q_{2t_{0}+2}-\widehat{Q}^{(r)}_{2t_{0}+2}$ in a telescoping fashion and applying the sub-additive and sub-multiplicative properties of the operator norm we obtain:
\begin{align}
\interleave Q_{2t_{0}+2}-\widehat{Q}^{(r)}_{2t_{0}+2}\interleave_v \leq&\sum_{n=0}^{2t_{0}+1}\interleave \widehat{Q}^{(r)}_{2t_{0}+2-(n+1)}Q_{n+1}-\widehat{Q}^{(r)}_{2t_0+2-n}Q_{n}\interleave_v,\nonumber\\
\leq&\interleave Q-\widehat{Q}^{(r)}\interleave_v\sum_{n=0}^{2t_{0}+1}\interleave\widehat{Q}^{(r)}_{2t_{0}+2-(n+1)}\interleave_v  \interleave Q_{n}\interleave_v.\label{eq:Q_telescope}
\end{align}
Now for any $n\geq 0$, $\sup_r\interleave\widehat{Q}^{(r)}_{n}\interleave_v\leq \interleave Q_n \interleave_v<\infty$, where the final inequality follows from equation \eqref{eq:FK_well-defined} of Lemma \ref{lem:spr_bounded_below}, and by
\eqref{eq:approx_Q} we have $\interleave Q-\widehat{Q}^{(r)}\interleave_{v}\rightarrow0$
as $r\rightarrow\infty$. Therefore it follows from \eqref{eq:Q_telescope} that $\interleave Q_{2t_{0}+2}-\widehat{Q}_{2t_{0}+2}^{(r)}\interleave_{v}\rightarrow0$
as $r\rightarrow\infty$,  so we conclude that $Q_{2t_{0}+2}$ is $v$-separable. This completes the proof.
\end{proof}

The following lemma considers the twisted kernel $\check{P}$ defined
in \eqref{eq:twisted_kernel_defn}.

\begin{lem} \label{lem:P_check_M-drift}Assume (H\ref{hyp:irreducible})-(H\ref{hyp:density}).
Then there exists $\delta_{0}\in(0,\delta)$, $d_{0}\geq1$ and for
any $d\geq d_{0},$ there exists $\check{b}_{d}<\infty$ such that
\begin{eqnarray}
\check{P}\left(e^{\check{V}}\right) & \leq & e^{\check{V}-\delta_{0}V+\check{b}_{d}\mathbb{I}_{C_{d}}},\label{eq:P_check_m_drift}\\
\sup_{x\in C_{d}}e^{\check{V}(x)} & < & \infty,\label{eq:V_check _bounded_on_C}
\end{eqnarray}
 where $\check{V}:\mathsf{X}\rightarrow[1,\infty)$ is defined by
$\check{V}(x):=V(x)-\log h_{0}(x)+\log\left\Vert h_{0}\right\Vert _{v}$.
Furthermore, there exists $\rho<1$, depending only on $d_{0}$ and
$\delta_{0}$, and for any $d\geq d_{0}$ there exists $\check{b}_{d}'<\infty$
such that 
\begin{eqnarray}
\check{P}\left(e^{\check{V}}\right) & \leq & \rho e^{\check{V}}+\check{b}_{d}'\mathbb{I}_{C_{d}}.\label{eq:twisted_geo_drift}
\end{eqnarray}
 \end{lem}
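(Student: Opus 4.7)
The plan is to use the explicit form of $\check{P}$ and $\check{V}$ to reduce $\check{P}(e^{\check V})$ to $\lambda^{-1}Q(e^V)$ up to a factor $e^{\check V - V}$, then import the drift condition of (H\ref{hyp:drift}) directly. First I would compute, using $e^{\check V(y)}h_0(y)=\|h_0\|_v e^{V(y)}$,
\begin{equation*}
\check{P}(e^{\check V})(x)=\lambda^{-1}h_0(x)^{-1}\|h_0\|_v Q(e^V)(x)=\lambda^{-1}e^{\check V(x)-V(x)}Q(e^V)(x).
\end{equation*}
Applying (\ref{eq:A_drift}) from (H\ref{hyp:drift}) then yields $\check{P}(e^{\check V})\leq \lambda^{-1}e^{\check V-\delta V+b_d\mathbb{I}_{C_d}}$. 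To absorb the $\lambda^{-1}$ into the drift form \eqref{eq:P_check_m_drift}, I would fix any $\delta_0\in(0,\delta)$ and split according to $C_d$: on $C_d^c$ one has $V>d$, so $(\delta-\delta_0)V>(\delta-\delta_0)d$, which dominates $-\log\lambda$ once $d\geq d_0:=\max(\underline{d},-\log\lambda/(\delta-\delta_0))$; on $C_d$ the finite gap $-\log\lambda$ together with the $(\delta_0-\delta)V\leq 0$ term is absorbed by setting $\check{b}_d:=b_d+|\log\lambda|$. Lemma \ref{lem:spr_bounded_below} ensures $0<\lambda<\infty$, so these choices are legitimate.

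The main obstacle is \eqref{eq:V_check _bounded_on_C}, equivalently $\inf_{x\in C_d}h_0(x)>0$; without any continuity of $h_0$ this is not automatic. My plan is to exploit the $m_0$-step minorization of (H\ref{hyp:drift}) combined with the eigenfunction identity $Q_{m_0}h_0=\lambda^{m_0}h_0$. For $x\in C_d$,
\begin{equation*}
\lambda^{m_0}h_0(x)=Q_{m_0}h_0(x)\geq \epsilon_d\,\nu_d(h_0),
\end{equation*}
so $h_0(x)\geq \lambda^{-m_0}\epsilon_d\nu_d(h_0)$ uniformly on $C_d$. The remaining point is $\nu_d(h_0)>0$: since Theorem \ref{thm:MET} (and the remarks in its proof) give $h_0$ everywhere strictly positive, and $\nu_d\in\mathcal{P}_v$ is a probability measure, $\nu_d(h_0)=\int h_0\,d\nu_d>0$. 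If strict positivity of $h_0$ needed a separate argument, it would follow by iterating $h_0=\lambda^{-n}Q_nh_0$ and using (H\ref{hyp:irreducible}) applied to a set $\{h_0>c\}$ of positive $\psi$-measure, which exists because $\mu_0(h_0)=1$. Once $\inf_{C_d}h_0>0$ is secured, $V\leq d$ on $C_d$ gives $e^{\check V}=\|h_0\|_v e^V/h_0\leq \|h_0\|_v e^d\lambda^{m_0}/(\epsilon_d\nu_d(h_0))<\infty$ on $C_d$.

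Finally, \eqref{eq:twisted_geo_drift} is a consequence of \eqref{eq:P_check_m_drift}--\eqref{eq:V_check _bounded_on_C} by a routine split. On $C_d^c$, since $V>d\geq d_0$, the factor $e^{-\delta_0 V}\leq e^{-\delta_0 d_0}$ multiplied onto $e^{\check V}$ gives $\check{P}(e^{\check V})\leq \rho\, e^{\check V}$ with $\rho:=e^{-\delta_0 d_0}<1$ (depending only on $d_0$ and $\delta_0$, as required). On $C_d$, \eqref{eq:P_check_m_drift} and \eqref{eq:V_check _bounded_on_C} imply $\check{P}(e^{\check V})(x)$ is bounded by a finite constant, so we simply take
\begin{equation*}
\check{b}_d':=\sup_{x\in C_d}\check{P}(e^{\check V})(x)<\infty,
\end{equation*}
which makes the inequality trivially hold on $C_d$ (note $\rho e^{\check V}\geq 0$). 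This completes the plan.
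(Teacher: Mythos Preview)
Your proposal is correct and follows essentially the same route as the paper: the identity $\check{P}(e^{\check V})=\lambda^{-1}h_0^{-1}\|h_0\|_v\,Q(e^V)$, the drift condition \eqref{eq:A_drift}, the choice of $d_0$ large enough to absorb $-\log\lambda$ into $(\delta-\delta_0)V$ on $C_d^c$, the minorization-plus-eigenfunction argument to bound $h_0$ below on $C_d$, and the split into $C_d$/$C_d^c$ for the geometric drift are all exactly what the paper does. The only cosmetic differences are that the paper takes $\check b_d=b_d-\log\lambda$ and gives an explicit formula for $\check b_d'$, whereas you use $b_d+|\log\lambda|$ and define $\check b_d'$ as a supremum; both are equivalent.
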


\begin{proof} Under the assumptions of the lemma, we have already
seen via \citep[Proposition 2.8]{mc:the:KM05} that the twisted kernel
is well defined. First consider, \eqref{eq:P_check_m_drift}; under
(H\ref{hyp:drift}), setting $\delta_{0}\in(0,\delta)$, for any $d\geq\underline{d}$,
\begin{eqnarray*}
\check{P}\left(\frac{e^{V}}{h_{0}}\right) & = & \lambda^{-1}h_{0}^{-1}Q\left(e^{V}\right)\\
 & \leq & \exp\left(V-\log h_{0}-\delta_{0}V-(\delta-\delta_{0})V-\log\lambda+b_{d}\mathbb{I}_{C_{d}}\right).
\end{eqnarray*}
 As $V$ is unbounded, there exists $d_{0}$ such that for all $d\geq d_{0}$,
equation (\ref{eq:P_check_m_drift}) holds with $\check{b}_{d}:=b_{d}-\log\lambda$.

For (\ref{eq:V_check _bounded_on_C}) by iteration of the eigenfunction
equation, we have that for any $d\geq d_{0}$, 
\begin{align*}
h_{0}(x) & =\lambda^{-m_{0}}Q_{m_{0}}\left(h_{0}\right)(x)\geq\epsilon_{d}\nu_{d}\left(h_{0}\right),\quad\forall x\in C_{d}
\end{align*}
 where we apply the minorization part of (H\ref{hyp:drift}) to obtain
the inequality.

It remains to establish \eqref{eq:twisted_geo_drift}. First considering the case $x\notin C_{d}$,
 \eqref{eq:P_check_m_drift} implies that $\check{P}\left(e^{\check{V}}\right)(x)\leq e^{\check{V}(x)-\delta_{0}V(x)}\leq e^{\check{V}(x)-\delta_{0}d}$
so that \eqref{eq:twisted_geo_drift} holds with $\rho:=e^{-\delta_{0}d_{0}}$.
For $x\in C_d$, equation (\ref{eq:P_check_m_drift}) shows that \eqref{eq:twisted_geo_drift}  
with $\check{b}_{d}':=\exp(d-\log\epsilon_{d}-\log\nu_{d}\left(h_{0}\right)$
$+\check{b}_{d}+\log\left\Vert h_{0}\right\Vert _{v})$. \end{proof}

\section{Proofs and Auxiliary Results for Section \ref{sec:Non-asymptotic-variance}}
\label{app:variance}

In this appendix we detail the proofs and auxiliary results that are used in Section \ref{sec:Non-asymptotic-variance}. The proofs and results are provided in a logical order; that is, each result at most depends on the preceding one(s). In particular, the proof of Lemma \ref{lem:E_check_prod_bounds} follows the proof of Lemma \ref{lem:M_drift_var}.

\begin{lem} \label{lem:M_drift_var}Assume (H\ref{hyp:irreducible})-(H\ref{hyp_drift_var}).Then
there exists $\bar{\rho}<1$, $d_{0}\geq1$ and for any $d\geq d_{0}$
there exists $\bar{b}_{d}<\infty$ and $\bar{b}_{d}'<\infty$ such that 
\begin{eqnarray}
\check{P}\left(e^{V^{*}}\right) & \leq & e^{V^{*}-V+\bar{b}_{d}\mathbb{I}_{C_{d}}}\label{eq:drift_star}\\
\check{P}\left(e^{V^{*}}\right) & \leq & \bar{\rho}e^{V^{*}}+\bar{b}_{d}'\mathbb{I}_{C_{d}},\label{eq:drift_geo_star}
\end{eqnarray}
 where $V^{*}$ is as in equation (\ref{eq:V_star_defn}).\end{lem}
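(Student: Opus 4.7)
My plan for the proof of Lemma \ref{lem:M_drift_var} is to exploit the convenient cancellation that arises when one substitutes the explicit form of $V^*$ into $\check P(e^{V^*})$, reducing the bound to an application of (H\ref{hyp_drift_var}) on $Q(e^{(1+\epsilon)V})$. Specifically, first I would write
\[
\check{P}(e^{V^*})(x) = \lambda^{-1}h_0^{-1}(x)\int Q(x,dy)\, h_0(y)\, e^{V^*(y)},
\]
and observe that from the definition $V^*(y)=(1+\epsilon)V(y)-\log h_0(y)+\log\|h_0\|_{v^{1+\epsilon}}$, the factor $h_0(y)$ cancels inside the integral, leaving
\[
\check{P}(e^{V^*})(x) = \lambda^{-1}h_0^{-1}(x)\,\|h_0\|_{v^{1+\epsilon}}\,Q(e^{(1+\epsilon)V})(x).
\]

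Next I would apply (H\ref{hyp_drift_var}) directly to the factor $Q(e^{(1+\epsilon)V})(x)$, and then rewrite the resulting right-hand side in terms of $e^{V^*}$, again using the definition of $V^*$. This yields
\[
\check{P}(e^{V^*})(x)\leq e^{V^*(x)-(1+\epsilon_0)V(x)-\log\lambda + b_d^*\mathbb{I}_{C_d}(x)}.
\]
To convert this into the form \eqref{eq:drift_star}, I would choose $d_0$ large enough that $\epsilon_0 d_0 \geq -\log\lambda$ (which is automatic if $\lambda\geq 1$ and otherwise needs $d_0 \geq -\log\lambda/\epsilon_0$, where $\lambda>0$ by Lemma \ref{lem:spr_bounded_below}). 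Then for $x\in C_d^c$ with $d\geq d_0$, the term $-\epsilon_0 V(x)$ absorbs $-\log\lambda$, giving $\check{P}(e^{V^*})(x)\leq e^{V^*(x)-V(x)}$; for $x\in C_d$, setting $\bar b_d := b_d^*-\log\lambda$ (so that the $-\epsilon_0 V$ term is discarded) delivers \eqref{eq:drift_star}.

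The drift \eqref{eq:drift_geo_star} then follows from \eqref{eq:drift_star} by a standard argument: on $C_d^c$, $V(x)\geq d$, so \eqref{eq:drift_star} gives $\check{P}(e^{V^*})(x)\leq e^{-d}e^{V^*(x)}$, and I would set $\bar\rho := e^{-d_0}<1$. On $C_d$, the quantity $\check{P}(e^{V^*})(x)$ is finite uniformly in $x$: indeed, by the identical argument used in the proof of Lemma \ref{lem:P_check_M-drift} to establish \eqref{eq:V_check _bounded_on_C}, the eigenfunction $h_0$ is bounded below on $C_d$ by $\epsilon_d\nu_d(h_0)>0$, so $V^*$ (and thus $\check{P}(e^{V^*})$ via \eqref{eq:drift_star}) is bounded on $C_d$; taking $\bar b_d'$ to be this uniform bound completes the proof.

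I do not foresee a serious obstacle. The only subtlety is ensuring $d_0$ is chosen compatibly with both (i) the requirement $\epsilon_0 d_0 \geq -\log\lambda$ needed in the $C_d^c$ case of \eqref{eq:drift_star}, and (ii) the threshold from Lemma \ref{lem:P_check_M-drift} used to bound $h_0$ from below on $C_d$; taking the maximum suffices. All other steps are routine manipulations of the drift inequality and the eigenfunction identity.
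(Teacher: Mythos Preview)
Your proposal is correct and follows essentially the same approach as the paper's proof: both compute $\check P(e^{V^*})$ by cancelling the $h_0$ factor via the definition of $V^*$, apply (H\ref{hyp_drift_var}) to $Q(e^{(1+\epsilon)V})$, absorb $-\log\lambda$ into the $-\epsilon_0 V$ term by taking $d_0$ large (the paper phrases this as ``$V$ is unbounded''), set $\bar b_d=b_d^*-\log\lambda$, and then obtain \eqref{eq:drift_geo_star} by the same argument as in Lemma~\ref{lem:P_check_M-drift}. Your write-up simply makes the choice of $d_0$ and the bound on $C_d$ more explicit than the paper does.
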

\begin{proof} 
Under the assumptions of the lemma, Theorem \ref{thm:MET} holds,
the eigenfunction $h_{0}\in\mathcal{L}_{v}$, and the twisted kernel
is well defined. Then under (H\ref{hyp_drift_var}), we have for any
$d\geq\underline{d}$, 
\begin{eqnarray*}
\check{P}\left(\frac{e^{V\left(1+\epsilon\right)}}{h_{0}}\right) & = & \lambda^{-1}h_{0}^{-1}Q\left(e^{V\left(1+\epsilon\right)}\right)\\
 & \leq & \exp\left(V\left(1+\epsilon\right)-\log h_{0}-V-\epsilon_{0}V-\log\lambda+b_{d}^{*}\mathbb{I}_{C_{d}}\right).
\end{eqnarray*}
 As $V$ is unbounded, there exists $d_{0}$ such that for all $d\geq d_{0}$,
equation (\ref{eq:drift_star}) holds with $\bar{b}_{d}:=b_{d}^{*}-\log\lambda$.
The proof of (\ref{eq:drift_geo_star}) then follows exactly as in
the proof of Lemma \ref{lem:P_check_M-drift}. \end{proof}

\begin{proof}{[}Proof of Lemma \ref{lem:E_check_prod_bounds}{]}
We first consider some bounds on iterates of the twisted kernel. Standard
iteration of the geometric drift condition in equation (\ref{eq:drift_geo_star})
shows that there exists a finite constant $c_{1}$ such that 
\begin{equation}
\sup_{n\geq 0}\check{P}_{n}\left(v^{*}\right)(x)\leq c_{1}v^{*}(x),\quad x\in\mathsf{X},\label{eq:v_star_iterate_bound}
\end{equation}
 and then due to the multiplicative drift condition in equation (\ref{eq:drift_star}),
\begin{equation}
\sup_{n>0}v(x)\check{P}_{n}\left(v^{*}\right)(x)=\sup_{n\geq0}v(x)\check{P}\check{P}_{n-1}\left(v^{*}\right)(x)\leq c_{1}v(x)\check{P}\left(v^{*}\right)(x)\leq cv^{*}(x),\quad x\in\mathsf{X},\label{eq:v_star_iterate_bound_m}
\end{equation}
 where $c:=c_{1}e^{\bar{b}_{d}}$.

In order to prove (\ref{eq:E_check_bounded_cs}) first fix arbitrarily
$n\geq1$, $1\leq s\leq n+1$ and $\left(i_{1},\ldots,i_{s}\right)\in\mathcal{I}_{n,s}$.
The proof is via a backward inductive argument through the coalescent
time indices. Assume that at rank $1<j<s$, 
\begin{eqnarray}
v(x)\check{\mathbb{E}}_{x}\left[\prod_{k\in\left\{ i_{j+1}-i_{j},\ldots,i_{s}-i_{j}\right\} }v\left(\check{X}_{k}\right)v^{*}\left(\check{X}_{n+1-i_{j}}\right)\right] & \leq & c^{s+1-j}v^{*}(x).\label{eq:E_check_bound_lem_ind_hyp}
\end{eqnarray}
 Assuming (\ref{eq:E_check_bound_lem_ind_hyp}) is true, then at rank
$j-1$, 
\begin{eqnarray*}
 &  & v(x)\check{\mathbb{E}}_{x}\left[\prod_{k\in\left\{ i_{j}-i_{j-1},\ldots,i_{s}-i_{j-1}\right\} }v\left(\check{X}_{k}\right)v^{*}\left(\check{X}_{n+1-i_{j-1}}\right)\right]\\
 &  & =v(x)\int\check{P}_{i_{j}-i_{j-1}}\left(x,dx'\right)v(x')\check{\mathbb{E}}_{x'}\left[\prod_{k\in\left\{ i_{j+1}-i_{j},\ldots,i_{s}-i_{j}\right\} }v\left(\check{X}_{k}\right)v^{*}\left(\check{X}_{n+1-i_{j}}\right)\right]\\
 &  & \leq c^{s+1-j}v(x)\int\check{P}_{i_{j}-i_{j-1}}\left(x,dx'\right)v^{*}(x')\\
 &  & \leq c^{s+1-\left(j-1\right)}v^{*}(x),
\end{eqnarray*}
 where the final inequality is due to equation (\ref{eq:v_star_iterate_bound_m}).
Furthermore 
\begin{eqnarray*}
v(x)\check{\mathbb{E}}_{x}\left[v^{*}\left(\check{X}_{n+1-i_{s}}\right)\right] & = & v(x)\check{P}_{n+1-i_{s}}\left(v^{*}\right)(x)\leq cv^{*}(x),
\end{eqnarray*}
 where the inequality is again due to (\ref{eq:v_star_iterate_bound_m})
and therefore at rank $j=s-1$, 
\begin{eqnarray*}
v(x)\check{\mathbb{E}}_{x}\left[\prod_{k=\left(i_{s}-i_{s-1}\right)}v\left(\check{X}_{k}\right)v^{*}\left(\check{X}_{n+1-i_{s-1}}\right)\right] & = & v(x)\int\check{P}_{i_{s}-i_{s-1}}\left(x,dx'\right)v(x')\check{\mathbb{E}}_{x'}\left[v^{*}\left(\check{X}_{n-i_{s}}\right)\right]\\
 & \leq & cv(x)\int\check{P}_{i_{s}-i_{s-1}}\left(x,dx'\right)v^{*}(x')\\
 & \leq & c^{2}v^{*}(x').
\end{eqnarray*}
 The above arguments prove that (\ref{eq:E_check_bound_lem_ind_hyp})
holds at rank $j=1$ and the proof of the Lemma is then also complete
as $n+1$, $1\leq s\leq n+1$ and $\left(i_{1},\ldots,i_{s}\right)\in\mathcal{I}_{n,s}$
were arbitrary. \end{proof}

\begin{lem} \label{lem:lam_D_Q_bound}Assume (H\ref{hyp:irreducible})-(H\ref{hyp_drift_var}).
Then there exists $c<\infty$ depending only on the quantities in (H\ref{hyp:irreducible})-(H\ref{hyp_drift_var})
such that for any $n\geq1$ and $\varphi:\mathsf{X}\rightarrow\mathbb{R}_{0}^{+}$,
\begin{equation}
\lambda^{-2n}DQ_{n}^{\otimes2}\left(\varphi\otimes v\right)(x,x')\leq cv(x)h_{0}(x)\check{P}_{n}\left(\frac{\varphi}{h_{0}}\right)(x),\quad\left(x,x'\right)\in\mathsf{X},\label{eq:QD_recurse}
\end{equation}
 where $v$ is as in (H\ref{hyp:drift}), and $\lambda$ and $h_{0}\in\mathcal{L}_{v}$
are respectively the eigenvalue and eigenfunction as in Theorem \ref{thm:MET}.\end{lem}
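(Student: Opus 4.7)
The plan is to exploit the tensor product structure of $DQ_n^{\otimes 2}(\varphi\otimes v)$, which decouples into a product of two one-variable integrals evaluated at the coalesced point $x$, and then to handle each factor separately: the $\varphi$ factor via a direct eigenfunction calculation yielding the twisted-chain expectation, and the $v$ factor via the multiplicative ergodic theorem (Theorem \ref{thm:MET}).

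First, I would unpack the definition of the coalescent operator applied to the tensor product. Since $D(F)(x,x')=F(x,x)$ and $Q_n^{\otimes 2}(F)(x,x')=\int Q_n(x,dy)Q_n(x',dy')F(y,y')$, for $F=\varphi\otimes v$ one obtains
\begin{equation*}
DQ_n^{\otimes 2}(\varphi\otimes v)(x,x') \;=\; Q_n^{\otimes 2}(\varphi\otimes v)(x,x)\;=\;Q_n(\varphi)(x)\,Q_n(v)(x),
\end{equation*}
which does not depend on $x'$. This reduces the problem to bounding $\lambda^{-2n}Q_n(\varphi)(x)Q_n(v)(x)$.

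Next, for the $\varphi$ factor I would use the twisted kernel relation. Iterating the definition $\check{P}(x,dy)=\lambda^{-1}h_0^{-1}(x)Q(x,dy)h_0(y)$ gives
\begin{equation*}
Q_n(\varphi)(x) \;=\; \lambda^n h_0(x)\,\check{P}_n\!\left(\frac{\varphi}{h_0}\right)(x),
\end{equation*}
which is well defined since $h_0$ is strictly positive under (H\ref{hyp:drift}) (as observed in the proof of Theorem \ref{thm:MET}). This produces the factor $h_0(x)\,\check{P}_n(\varphi/h_0)(x)$ on the right-hand side and absorbs one power of $\lambda^{-n}$.

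For the $v$ factor, I would apply the uniform bound \eqref{eq:MET} of Theorem \ref{thm:MET} with the test function $\varphi=v$ (so $\|\varphi\|_v=1$), yielding
\begin{equation*}
\lambda^{-n}Q_n(v)(x) \;\leq\; h_0(x)\mu_0(v) + B_0 e^{-nB_1}v(x) \;\leq\; \bigl(\|h_0\|_v\,\mu_0(v)+B_0\bigr)\,v(x),
\end{equation*}
using $h_0\in\mathcal{L}_v$ and $\mu_0\in\mathcal{P}_v$ (so $\mu_0(v)<\infty$) to absorb $h_0$ into a multiple of $v$. Combining the two factors and setting $c:=\|h_0\|_v\mu_0(v)+B_0$ yields the claim. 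The only substantive point is the use of $\mu_0(v)<\infty$, which is already guaranteed by Theorem \ref{thm:MET}; after that, the argument is essentially a bookkeeping exercise, so there is no real obstacle.
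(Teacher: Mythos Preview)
Your proof is correct and follows essentially the same structure as the paper's: factor $DQ_n^{\otimes 2}(\varphi\otimes v)(x,x')=Q_n(\varphi)(x)\,Q_n(v)(x)$, rewrite the $\varphi$ factor via the twisted kernel as $\lambda^n h_0(x)\check{P}_n(\varphi/h_0)(x)$, and bound $\lambda^{-n}Q_n(v)(x)\leq c\,v(x)$ uniformly in $n$. The only difference is in this last step: the paper rewrites $\lambda^{-n}Q_n(v)=h_0\,\check{P}_n(v/h_0)$, observes $v/h_0$ is a constant multiple of $\check v$, and then iterates the geometric drift condition \eqref{eq:twisted_geo_drift} for $\check P$ to get $\sup_n\check P_n(\check v)\leq c\check v$, whence $h_0\check P_n(\check v)\leq c\,h_0\check v = c'v$; you instead invoke the MET bound \eqref{eq:MET} directly. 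Both routes are valid and yield the same conclusion; yours is slightly more black-box, while the paper's makes the role of the twisted-chain drift explicit (which is reused in Lemma \ref{lem:DQDQ_bound} and elsewhere).
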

\begin{proof} By standard iteration of the geometric drift condition
in equation \eqref{eq:twisted_geo_drift} of Lemma \ref{lem:P_check_M-drift},
there is a finite constant $c$ such that 
\begin{equation}
\sup_{n\geq 0}\check{P}_{n}\left(\check{v}\right)(x)\leq c\check{v}(x),\quad x\in\mathsf{X}.\label{eq:twisted_drift_iterate}
\end{equation}
 Then due to the definition of the twisted kernel and $\check{v}$
(see Lemma \ref{lem:P_check_M-drift}), there exists a constant $c$
such that for any $n\geq 1$, and $\varphi:\mathsf{X}\rightarrow\mathbb{R}_{0}^{+}$,
\begin{eqnarray}
\lambda^{-2n}Q_{n}^{\otimes2}\left(\varphi\otimes v\right)(x,x') & = & h_{0}(x)h_{0}(x')\check{P}_{n}^{\otimes2}\left(\frac{\varphi}{h_{0}}\otimes\frac{v}{h_{0}}\right)\left(x,x'\right)\nonumber \\
 & \leq & ch_{0}(x)h_{0}(x')\check{P}_{n}^{\otimes2}\left(\frac{\varphi}{h_{0}}\otimes\check{v}\right)\left(x,x'\right)\nonumber \\
 & \leq & ch_{0}(x)\check{P}_{n}\left(\frac{\varphi}{h_{0}}\right)\left(x\right)v(x'),\quad(x,x')\in\mathsf{X}^{2},\label{eq:Q_recurse}
\end{eqnarray}
 where the final inequality is due to \eqref{eq:twisted_drift_iterate}.
\end{proof} \begin{lem} \label{lem:DQDQ_bound}Assume (H\ref{hyp:irreducible})-(H\ref{hyp_drift_var}).
Then there exists $c<\infty$ depending only on the quantities in (H\ref{hyp:irreducible})-(H\ref{hyp_drift_var})
such that for any $m\geq1$, $n\geq0$ and $\left(x,x'\right)\in\mathsf{X}^{2}$,
\begin{eqnarray*}
\lambda^{-2\left(m+n\right)}DQ_{m}^{\otimes2}DQ_{n}^{\otimes2}\left(v^{1/2}\otimes v^{1/2}\right)(x,x') & \leq & cv(x)h_{0}(x)\check{\mathbb{E}}_{x}\left[v^{*}\left(\check{X}_{m}\right)\right].
\end{eqnarray*}
 \end{lem}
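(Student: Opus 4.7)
The strategy is to reduce the double application $DQ_m^{\otimes2}DQ_n^{\otimes2}$ to a single use of Lemma~\ref{lem:lam_D_Q_bound} together with a uniform-in-$n$ pointwise estimate on the inner quantity. The coalescent structure gives
\[
DQ_n^{\otimes2}(v^{1/2}\otimes v^{1/2})(y,y')=[Q_n(v^{1/2})(y)]^2=:g(y),
\]
a function depending on $y$ alone, so that (with $1\le v$ and the positivity of $Q_m^{\otimes 2}$ and $D$) $g\otimes 1\le g\otimes v$. Lemma~\ref{lem:lam_D_Q_bound} applied with $\varphi=g$ therefore yields
\[
\lambda^{-2m}DQ_m^{\otimes2}DQ_n^{\otimes2}(v^{1/2}\otimes v^{1/2})(x,x')\le c\,v(x)\,h_0(x)\,\check{P}_m(g/h_0)(x).
\]

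The heart of the proof is then the uniform-in-$n$ pointwise bound
\[
\lambda^{-2n}[Q_n(v^{1/2})(y)]^2/h_0(y)\le c\,v^*(y),\qquad \forall y\in\mathsf{X}.
\]
For this I would use the twisted-kernel representation $Q_n(v^{1/2})(y)=\lambda^n h_0(y)\check{P}_n(v^{1/2}/h_0)(y)$ and a H\"older inequality with respect to the probability measure $\check{P}_n(y,\cdot)$, splitting $v^{1/2}/h_0$ as a product of factors whose $p$-th and $q$-th powers are controlled by the available drifts: $\check{P}_n(v^{1+\epsilon}/h_0)\le c\,v^*/\|h_0\|_{v^{1+\epsilon}}$ from the geometric drift for $v^*$ given by Lemma~\ref{lem:M_drift_var}, and $\check{P}_n(1/h_0)\le c\,v/h_0$ from Lemma~\ref{lem:P_check_M-drift}. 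Substituting this pointwise estimate into the display above and recognising $\check{P}_m(v^*)(x)=\check{\mathbb{E}}_x[v^*(\check{X}_m)]$ gives the claim.

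\textbf{Main difficulty.} The principal obstacle is the pointwise bound in the preceding paragraph. A na\"ive Cauchy--Schwarz of the form $[Q_n(v^{1/2})]^2\le Q_n(1)\,Q_n(v)$ yields, after dividing by $h_0$, only a bound of order $v^2/h_0$, which exceeds $v^*$ by a factor $v^{1-\epsilon}/\|h_0\|_{v^{1+\epsilon}}$ that is not uniformly bounded when $V$ is unbounded and $\epsilon<1$. Attaining the sharp target $v^*$ seems to require exploiting assumption (H\ref{hyp_drift_var}) in an essentially optimal way: the enhanced exponent $(1+\epsilon)$ in the multiplicative drift is what allows the H\"older decomposition to yield, after cancellation using $v^*=\|h_0\|_{v^{1+\epsilon}}v^{1+\epsilon}/h_0$, the required pointwise bound uniformly in $n$. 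The verification of this step is the delicate part; the rest is essentially bookkeeping via Lemma~\ref{lem:lam_D_Q_bound} and the geometric drift iteration already invoked in Lemma~\ref{lem:E_check_prod_bounds}.
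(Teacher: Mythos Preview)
Your reduction via Lemma~\ref{lem:lam_D_Q_bound} to the uniform-in-$n$ pointwise estimate
\[
\lambda^{-2n}[Q_n(v^{1/2})(y)]^2/h_0(y)\le c\,v^*(y)
\]
is sound, but the H\"older argument you sketch for this estimate does not close for the values of $\epsilon$ that matter. With the splitting you propose, matching $fg=v^{1/2}/h_0$ subject to $f^p=v^{1+\epsilon}/h_0$ and $g^q=1/h_0$ forces $p=2(1+\epsilon)$, hence $2/q=(1+2\epsilon)/(1+\epsilon)$. Combining the two drift bounds you quote then yields only
\[
h_0\,[\check P_n(v^{1/2}/h_0)]^2\le c\,v^{1+2/q}/h_0,
\]
and reaching $v^*\propto v^{1+\epsilon}/h_0$ requires $2/q\le\epsilon$, which is equivalent to $\epsilon^2-\epsilon-1\ge 0$, i.e.\ $\epsilon\ge(1+\sqrt5)/2$. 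In the paper's examples (H\ref{hyp_drift_var}) is verified only for \emph{small} $\epsilon$, so this H\"older route fails precisely in the regime of interest; your own diagnosis of the na\"ive Cauchy--Schwarz failure applies equally to this refinement.

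The paper circumvents the difficulty by not attempting a H\"older splitting of $v^{1/2}/h_0$ at all. It first over-bounds $v^{1/2}\otimes v^{1/2}\le v\otimes v$ (using $v\ge1$) and applies Lemma~\ref{lem:lam_D_Q_bound} to the \emph{inner} block, obtaining
\[
\lambda^{-2n}DQ_n^{\otimes2}(v^{1/2}\otimes v^{1/2})(y,y')\le c\,v(y)h_0(y)\,\check P_n(v/h_0)(y)\le c\,v(y)h_0(y)\,\check P_n(v^*)(y),
\]
then carries the $n$-dependent factor $\check P_n(v^*)$ through a second application of Lemma~\ref{lem:lam_D_Q_bound} and collapses it at the end via Lemma~\ref{lem:E_check_prod_bounds}. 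The key inequality there is $v(z)\check P_n(v^*)(z)\le c\,v^*(z)$ for $n\ge1$, which comes directly from the \emph{multiplicative} drift \eqref{eq:drift_star} (one step of $\check P$ absorbs the factor $v$), not from any interpolation. In fact, applying that same inequality to the display above immediately delivers your target pointwise bound for $n\ge1$; so your overall scheme can be salvaged, but the correct mechanism is the multiplicative drift of Lemma~\ref{lem:M_drift_var}, not a H\"older decomposition of $v^{1/2}/h_0$.
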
 \begin{proof} Throughout the proof $c$ is a finite constant
whose value may change on each appearance. 

When $n=0$,  
\begin{eqnarray*}
\lambda^{-2\left(m+n\right)}DQ_{m}^{\otimes2}DQ_{n}^{\otimes2}\left(v^{1/2}\otimes v^{1/2}\right)(x,x^{\prime}) & = & \lambda^{-2\left(m+n\right)}Q_{m}^{\otimes2}\left(D\left(v^{1/2}\otimes v^{1/2}\right)\right)(x,x)\\
 & = & \lambda^{-2\left(n+m\right)}Q_{m}^{\otimes2}\left(v\otimes1\right)(x,x)\\
 & \leq & cv(x)h_{0}(x)\check{P}_{m}\left(\frac{v}{h_{0}}\right)(x)\\
 & \leq & cv(x)h_{0}(x)\check{P}_{m}\left(v^{*}\right)(x)\\
 & = & cv(x)h_{0}(x)\check{\mathbb{E}}_{x}\left[v^{*}\left(\check{X}_{m}\right)\right],
\end{eqnarray*}
where the first inequality is due to Lemma \ref{lem:lam_D_Q_bound} and the second inequality is due to the definition of $v^*$.

Now consider the case
$n\geq1$. We have 
\begin{eqnarray*}
\lambda^{-2n}DQ_{n}^{\otimes2}\left(v^{1/2}\otimes v^{1/2}\right)(x,x') & \leq & cv(x)h_{0}(x)\check{P}_{n}\left(\frac{v^{\left(1+\epsilon_{0}\right)}}{h_{0}}\right)(x)\\
 & \leq & cv(x)h_{0}(x)\check{P}_{n}\left(v^{*}\right)(x)\\
 & \leq & cv(x)h_{0}(x)\check{\mathbb{E}}_{x}\left[v^{*}\left(\check{X}_{n}\right)\right]v(x'),
\end{eqnarray*}
where we have used $v\geq1$, Lemma \ref{lem:lam_D_Q_bound} with $\varphi=v$, the definition of $v^*$ and again $v\geq1$. A further application of Lemma \ref{lem:lam_D_Q_bound} with $\varphi(x)=v(x)h_{0}(x)\check{\mathbb{E}}_{x}\left[v^{*}\left(\check{X}_{n}\right)\right]$
and an application of Lemma \ref{lem:E_check_prod_bounds} yields:
\begin{eqnarray*}
\lambda^{-2\left(m+n\right)}DQ_{m}^{\otimes2}DQ_{n}^{\otimes2}\left(v^{1/2}\otimes v^{1/2}\right)(x,x) & \leq & c^{2}v(x)h_{0}(x)\check{\mathbb{E}}_{x}\left[\check{\mathbb{E}}_{\check{X}_{m}}\left[v\left(\check{X}_{0}\right)v^{*}\left(\check{X}_{n}\right)\right]\right]\\
 & \leq & c^{2}v(x)h_{0}(x)\check{\mathbb{E}}_{x}\left[v^{*}\left(\check{X}_{m}\right)\right].
\end{eqnarray*}
This completes the proof. \end{proof}


\begin{proof}{[}Proof of Proposition \ref{prop:Gamma_bound}{]} The
starting point of the proof is to write, using the definition of the
twisted kernel, 
\[
\bar{\Gamma}_{n,x}^{(i_{1},...i_{s})}\left(F\right)=\frac{\lambda^{-2n}\Gamma_{n,x}^{\left(i_{1},...,i_{s}\right)}\left(F\right)}{\lambda^{-2n}\gamma_{n,x}\left(1\right)^{2}}=\frac{\lambda^{-2n}\Gamma_{n,x}^{\left(i_{1},...,i_{s}\right)}\left(F\right)}{h_{0}^{2}(x)\check{\mathbb{E}}_{x}\left[1/h_{0}\left(\check{X}_{n}\right)\right]^{2}}.
\]
 Thus in order prove (\ref{eq:twisted_E_bound}), we need to prove
\begin{equation}
\lambda^{-2n}h_{0}^{-2}(x)\Gamma_{n,x}^{\left(i_{1},...,i_{s}\right)}\left(F\right)\leq\left\Vert F\right\Vert _{v^{1/2},2}c^{s+1}\frac{v(x)}{h_{0}(x)}\check{\mathbb{E}}_{x}\left[\prod_{k\in\left\{ i_{1},\ldots,i_{s-1}\right\} }v\left(\check{X}_{k}\right)v^{*}\left(\check{X}_{i_{s}}\right)\right],\label{eq:numerator_bound}
\end{equation}
 for each $n\geq1$, $0\leq s\leq n+1$ and each possible configuration
of the coalescent time indices $(i_{1},...,i_{s})\in\mathcal{I}_{n,s}$. We will consider first the case $s>1$ and then $s\leq 1$. Throughout the remainder of the proof, $c$ denotes a finite and positive
constant, whose value may change on each appearance but depends only on
the constants in (H\ref{hyp:irreducible})-(H\ref{hyp_drift_var}).

Consider the case $s>1$. It is claimed that there exists a
finite constant $c$ such that for any $n\geq1$, $\left(x,x'\right)\in\mathsf{X}^{2}$,
$F\in\mathcal{L}_{v^{1/2},2}$, $1<s\leq n+1,$ and any $\left(i_{1},\ldots,i_{s}\right)\in\mathcal{I}_{n,s}$,
\begin{eqnarray}
 &  & \lambda^{-2\left(n-i_{1}\right)}DQ_{i_{2}-i_{1}}^{\otimes2}\ldots DQ_{i_{s}-i_{s-1}}^{\otimes2}DQ_{n-i_{s}}^{\otimes2}\left(v^{1/2}\otimes v^{1/2}\right)(x,x')\nonumber \\
 &  & \leq c^{s+1}v(x)h_{0}(x)\check{\mathbb{E}}_{x}\left[\prod_{k\in\left\{ i_{2}-i_{1},\ldots,i_{s-1}-i_{1}\right\} }v\left(\check{X}_{k}\right)v^{*}\left(\check{X}_{i_{s}-i_{1}}\right)\right],\label{eq:claim_induction}
\end{eqnarray}
with the convention that the product is equal to unity when $s=2$.
For a given $n$, the claim is proved by backward induction through
the coalescent time indices. The inductive hypothesis is that at rank
$1\leq j\leq s-1$, 
\begin{eqnarray}
 &  & \lambda^{-2\left(n-i_{j}\right)}DQ_{i_{j+1}-i_{j}}^{\otimes2}\ldots DQ_{i_{s}-i_{s-1}}^{\otimes2}DQ_{n-i_{s}}^{\otimes2}\left(v^{1/2}\otimes v^{1/2}\right)(x,x')\nonumber\\
 &  & \leq c^{s-j+1}v(x)h_{0}(x)\check{\mathbb{E}}_{x}\left[\prod_{k\in\left\{ i_{j+1}-i_{j},\ldots,i_{s-1}-i_{j}\right\} }v\left(\check{X}_{k}\right)v^{*}\left(\check{X}_{i_{s}-i_{j}}\right)\right],\label{eq:rank_j} 
\end{eqnarray}
with the convention that the product equals unity when $j+1=s$.
 
To initialise the induction, we have at rank $j=s-1$ that the left hand side of \eqref{eq:rank_j} is
\[
\lambda^{-2\left(n-i_{s-1}\right)}DQ_{i_{s}-i_{s-1}}^{\otimes2}DQ_{n-i_{s}}^{\otimes2}\left(v^{1/2}\otimes v^{1/2}\right)(x,x'),
\]
and Lemma \ref{lem:DQDQ_bound} then shows immediately that (\ref{eq:rank_j}) does
indeed hold at rank $s-1$. We point out that the constraint $F\in\mathcal{L}_{v^{1/2},2}$ in the statement of the proposition is imposed because in the case $i_s=n$ we immediately encounter $DQ^{\otimes2}_{n-i_s}(v^{1/2} \otimes v^{1/2})=D(v^{1/2} \otimes v^{1/2}) = v$, and we can control integrals involving $v$ using the drift conditions, as in Lemma \ref{lem:DQDQ_bound}. If we were to give a separate treatment of $\Gamma_{n,x}^{\left(i_{1},...,i_{s}\right)}\left(F\right)$ for coalescent time configurations in which $i_s\neq n$, the constraint on $F$ could be relaxed to a larger function class. 
 
Proceeding with the induction, 
when the hypothesis \eqref{eq:rank_j} holds at rank $j$, we have at rank $j-1$: 
\begin{eqnarray*}
 &  & \lambda^{-2\left(n-i_{j-1}\right)}DQ_{i_{j}-i_{j-1}}^{\otimes2}\ldots DQ_{i_{s}-i_{s-1}}^{\otimes2}DQ_{n-i_{s}}^{\otimes2}\left(v^{1/2}\otimes v^{1/2}\right)(x,x')\\
 &  & \leq c^{s-j+2}v(x)h_{0}(x)\left(\int\check{P}_{i_{j}-i_{j-1}}\left(x,dy\right)v(y)\check{\mathbb{E}}_{y}\left[\prod_{k\in\left\{ i_{j+1}-i_{j},\ldots,i_{s-1}-i_{j}\right\} }v\left(\check{X}_{k}\right)v^{*}\left(\check{X}_{i_{s}-i_{j}}\right)\right]\right)\\
 &  & =c^{s-j+2}v(x)h_{0}(x)\check{\mathbb{E}}_{x}\left[\prod_{k\in\left\{ i_{j}-i_{j-1},\ldots,i_{s-1}-i_{j-1}\right\} }v\left(\check{X}_{k}\right)v^{*}\left(\check{X}_{i_{s}-i_{j-1}}\right)\right],
\end{eqnarray*}
where the inequality follows from applying the induction hypothesis, then multiplying by $v(x^\prime)\geq1$
and then applying Lemma \ref{lem:lam_D_Q_bound} with $\varphi(x)$ the $x$-dependent part of the right hand side of \eqref{eq:rank_j}.  This concludes the inductive proof of
(\ref{eq:claim_induction}).

Consider the case $s>1, i_{1}=0$. Multiplying the right hand side of  (\ref{eq:claim_induction}) by $v(\check{X}_0)=v(x)\geq1$  and recalling the definition of $\Gamma_{n,x}^{(i_1,...,i_s)}$ and $\gamma^N_{0,x}=\delta_x$, we immediately obtain \eqref{eq:numerator_bound}, as desired.
In the case $i_{1}>0$, we multiply  (\ref{eq:claim_induction}) by $v(x')$ and apply Lemma \ref{lem:lam_D_Q_bound} in a similar fashion as before to yield 
\begin{eqnarray*}
 &  & \lambda^{-2n}DQ_{i_{1}}^{\otimes2}DQ_{i_{2}-i_{1}}^{\otimes2}\ldots DQ_{i_{s}-i_{s-1}}^{\otimes2}DQ_{n-i_{s}}^{\otimes2}\left(v^{1/2}\otimes v^{1/2}\right)(x,x')\\
 &  & \leq c^{s+2}v(x)h_{0}(x)\check{\mathbb{E}}_{x}\left[\prod_{k\in\left\{ i_{1},\ldots,i_{s-1}\right\} }v\left(\check{X}_{k}\right)v^{*}\left(\check{X}_{i_{s}}\right)\right]
\end{eqnarray*}
so again we obtain \eqref{eq:numerator_bound} as desired. This completes the treatment of the case $s>1$.

For the case $s=1,i_{1}>0$, 
\begin{eqnarray*}
\lambda^{-2n}Q_{i_{1}}^{\otimes2}DQ_{n-i_{1}}^{\otimes2}\left(v^{1/2}\otimes v^{1/2}\right)(x,x) & = & \lambda^{-2n}DQ_{i_{1}}^{\otimes2}DQ_{n-i_{1}}^{\otimes2}\left(v^{1/2}\otimes v^{1/2}\right)(x,x')\\
 & \leq & cv(x)h_{0}(x)\check{\mathbb{E}}_{x}\left[v^{*}\left(\check{X}_{i_{1}}\right)\right],
\end{eqnarray*}
where the inequality is due to an application of Lemma \ref{lem:DQDQ_bound}. 
Thus we have \eqref{eq:numerator_bound} in the case $s=1,i_{1}>0$.  It only remains to address the case $s=0$, because for the case $s=1,i_{1}=0$ we observe that $\Gamma_{n,x}^{\left(\emptyset\right)}\left(F\right)=\Gamma_{n,x}^{\left(0\right)}\left(F\right)$. 

For $s=0$ we have $\Gamma_{n,x}^{\left(\emptyset\right)}\left(F\right)=\gamma_{n,x}^{\otimes2}\left(F\right)=Q_{n}^{\otimes2}\left(F\right)(x,x)\leq\left\Vert F\right\Vert _{v^{1/2},2}Q_{n}^{\otimes2}(v\otimes v)(x,x)$
and therefore (recall $\check{v}$ from lemma \ref{lem:P_check_M-drift})
\begin{eqnarray}
\lambda^{-2n}h_{0}^{-2}(x)\Gamma_{n,x}^{\left(\emptyset\right)}\left(F\right) & \leq & \left\Vert F\right\Vert _{v^{1/2},2}\lambda^{-2n}h_{0}^{-2}(x)Q_{n}^{\otimes2}(v\otimes v)(x,x)\nonumber \\
 & \leq & c\left\Vert F\right\Vert _{v^{1/2},2}\check{P}_{n}^{\otimes2}\left(\check{v}\otimes\check{v}\right)(x,x)\nonumber \\
 & \leq & c\left\Vert F\right\Vert _{v^{1/2},2}\frac{v(x)}{h_{0}(x)}v^{*}(x).\label{eq:Gamma_empty}
\end{eqnarray}
where the final inequality follows by iteration of the geometric
drift condition \eqref{eq:twisted_geo_drift} and the definition of
$v^{*}$.  Thus \eqref{eq:numerator_bound} holds in the case $s=0$. This completes the proof of the proposition. \end{proof}



\bibliography{SMC_stability}
 \bibliographystyle{plainnat}
\end{document}